\documentclass[a4paper,11pt]{article}

\usepackage{booktabs} 

\usepackage{amsmath}
\usepackage{amsthm}
\usepackage{amssymb}
\usepackage{amsfonts}
\usepackage{mathrsfs}
\usepackage{color}
\usepackage[utf8]{inputenc}
\usepackage[T1]{fontenc}
\usepackage{enumerate}
\usepackage[english]{babel}
\usepackage[bookmarks=false,colorlinks=true,linkcolor=blue,citecolor=red]{hyperref}
\usepackage[babel]{csquotes} 

\usepackage{graphicx}
\usepackage{latexsym}
\usepackage{epsfig}

\usepackage{thmtools,thm-restate}

\usepackage[ruled]{algorithm2e}

\SetAlFnt{\small}
\SetAlCapFnt{\small}
\SetAlCapNameFnt{\small}
\SetAlCapHSkip{0pt}
\IncMargin{-\parindent}

\theoremstyle{plain}
\newtheorem{theorem}{Theorem}[section]
\newtheorem{lemma}[theorem]{Lemma}
\newtheorem{proposition}[theorem]{Proposition}
\newtheorem{corollary}[theorem]{Corollary}

\theoremstyle{definition}
\newtheorem{definition}[theorem]{Definition}
\newtheorem{example}[theorem]{Example}
\newtheorem{remark}[theorem]{Remark}

\DeclareMathOperator{\Q}{\mathbb Q}

\DeclareMathOperator{\supp}{Supp}
\DeclareMathOperator{\vcsp}{VCSP}
\DeclareMathOperator{\csp}{CSP}
\DeclareMathOperator{\dom}{dom}
\DeclareMathOperator{\feas}{Feas}
\DeclareMathOperator{\fpol}{fPol}

\DeclareMathOperator{\g}{\Gamma}

\DeclareMathOperator{\sub}{sub}
\DeclareMathOperator{\ar}{ar}
\DeclareMathOperator{\QQ}{\mathbb Q\cup\{+\infty\}}

\DeclareMathOperator{\blp}{BLP}
\DeclareMathOperator{\eps}{\boldsymbol\epsilon}

\DeclareMathOperator{\avg}{avg}

\newcommand\xqed[1]{%
  \leavevmode\unskip\penalty9999 \hbox{}\nobreak\hfill
  \quad\hbox{#1}}
\newcommand\demo{\xqed{$\triangle$}}

\begin{document}

\title{Piecewise Linear Valued CSPs\\ Solvable by Linear Programming Relaxation\thanks{This article is the full extended version of the conference paper ``Submodular Functions and Valued Constraint Satisfaction Problems over Infinite Domains" \cite{BodirskyMV18} of the same authors. Although the conclusions of the present work are more general than those contained in the conference version, some of the methods used, the algorithms, and the intermediate results obtained are different and incomparable.}}  \date{}

\author{Manuel Bodirsky \thanks{The authors have received funding from the European Research Council (ERC) under the European	Union's Horizon 2020 research and innovation programme (grant agreement No 681988, CSP-Infinity).}\\{\small  Institut f\"{u}r Algebra}\\{\small Technische Universit\"{a}t  Dresden} \and Marcello Mamino \footnotemark[2]\\{\small Dipartimento di Matematica}\\ {\small Universit\`{a} di Pisa} \and Caterina Viola\footnotemark[2] \thanks{This author has been supported by DFG Graduiertenkolleg 1763 (QuantLA).}\\{\small Institut f\"{u}r Algebra}\\ {\small Technische Universit\"{a}t  Dresden}}



\maketitle

\begin{abstract}
Valued constraint satisfaction problems (VCSPs) are a large class of combinatorial optimisation problems. The computational complexity of VCSPs depends on the set of allowed
cost functions in the input. Recently, the computational complexity of
all VCSPs for finite sets of cost functions over finite domains has been
classified. Many natural optimisation problems, however, cannot be formulated as VCSPs over a finite domain. 
We initiate the systematic investigation of infinite-domain VCSPs by studying the complexity of VCSPs for piecewise linear homogeneous cost functions. Such VCSPs can be solved in polynomial time if the cost functions
are improved by fully symmetric fractional operations of all arities. We
show this by reducing the problem to a finite-domain VCSP which can
be solved using the basic linear program relaxation. It follows that VCSPs for submodular PLH cost functions can be solved
in polynomial time; in fact, we show that submodular PLH functions
form a maximally tractable class of PLH cost functions.
\end{abstract}

\section{Introduction}
In a \emph{valued constraint satisfaction problem (VCSP)} 
we are given a finite set of variables, a finite
set of cost functions that depend on these variables, and a cost $u$; the task is to find values for the variables such that the sum of the cost functions is at most $u$. 
By restricting the set of possible cost functions in the input, a great variety of computational optimisation problems can be modelled as a valued constraint satisfaction problem. By allowing the cost functions to evaluate to $+\infty$, we can even model `crisp' (i.e., hard) constraints on the variable assignments, 
and hence the class of (classical) constraint satisfaction problems (CSPs) is a subclass of the class of all VCSPs. 

If the domain is finite, the computational complexity of VCSPs has recently been classified for all sets of cost functions, assuming the Feder-Vardi conjecture for classical CSPs~\cite{KolmogorovThapperZivny,GenVCSP15,KozikOchremiak15}. Even more recently, two solutions to the Feder-Vardi conjecture have been announced~\cite{ZhukFVConjecture,BulatovFVConjecture}. These fascinating achievements settle the complexity of VCSPs over finite domains.

Several outstanding combinatorial optimisation problems cannot be formulated as VCSPs over a finite domain, but they can be formulated
as VCSPs over the domain ${\mathbb Q}$, the set of rational numbers. 
One example is the famous linear programming problem, where the task is to optimise a linear function subject to linear inequalities. This can be modelled as a VCSP by allowing unary linear cost functions and cost functions of higher arity to express the crisp linear inequalities. Another example is 
the minimisation problem for sums of piecewise linear 
convex cost functions (see, e.g., \cite{BoydVandenberghe}). Both of these problems can be solved in polynomial time, e.g.~by the ellipsoid method (see, e.g., \cite{GroetschelLovaszSchrijver}). 

Despite the great interest in such concrete VCSPs over the rational numbers in the literature, VCSPs over infinite domains have not yet been studied systematically. 
In order to obtain general results we
need to restrict
the class of cost functions that we investigate,
because without any restriction it is already hopeless to classify the complexity of infinite-domain CSPs (any language over a finite alphabet is polynomial-time Turing equivalent to an infinite domain CSP~\cite{BodirskyGrohe}). One restriction that captures a variety 
of optimisation problems of theoretical and practical interest is the class of all \emph{piecewise linear homogeneous} cost functions over ${\mathbb Q}$,
defined below. We first illustrate by an example the type of cost functions that we want to 
capture in our framework.

\begin{example}{(\textsc{Least correlation clustering with partial information})}\label{expl:intro1}
	We consider the following problem:
we are given a graph on $n$ vertices such that the set of edges $E$ is partitioned in two classes, $E_{-}$ and $E_{+}$. An edge $(x,y)\in E$ is either in $E_{+}$ or in $E_{-}$ depending on whether $x$ and $y$ have been deemed to be  similar or different. The goal is to (decide whether it is possible to) produce a partition of the vertices, namely a clustering, that agrees with the edge partition on at least $l$ edges, where $l$ is a given (rational) number between $0$ and $\lvert E \rvert$. That is, we want a clustering that bounds the number of disagreements, i.e., the number of edges from $E_{+}$ between clusters plus the number of edges from $E_{-}$ inside clusters. This problem can be seen as an instance of a VCSP with variables $x_1,\ldots, x_n$, with objective function \[\phi(x_1,\ldots,x_n)=\sum_{(x_i,x_j)\in E_{+}}f_1(x_i,x_j)+\sum_{(x_i,x_j)\in E_{-}}f_2(x_i,x_j),\] and with threshold $u:=\lvert E \rvert-l$, where the cost functions $f_1,f_2 \colon \Q^2 \to \Q$ are defined by \[f_1(x_i,x_j)=\begin{cases}
0 & \text{if } x_i=x_j\\
1 & \text{otherwise}
\end{cases}, \; \text{and} \quad f_2(x_i,x_j)=\begin{cases}
1 & \text{if } x_i=x_j\\
0 & \text{otherwise}
\end{cases}.\]
Observe that this problem cannot be modelled as a VCSP over a finite domain, as we do not want to bound the possible number of clusters and we want want to allow graphs with any finite number of vertices as input. 
The least correlation clustering with partial information that we defined above is a generalisation of the well-known min-correlation clustering problem (see \cite{CorrelationClustering}). As the min-correlation clustering problem is known to be NP-hard, so is our problem. 
\demo
	\end{example}

A partial function 
$f \colon {\mathbb Q}^n \to {\mathbb Q}$ is called
\emph{piecewise linear homogeneous (PLH)} 
if it is first-order definable over the structure ${\mathfrak L} := ({\mathbb Q};<,1,(c\cdot)_{c \in {\mathbb Q}})$; being undefined at 
$(x_1,\dots,x_n) \in {\mathbb Q}^n$ is interpreted as $f(x_1,\dots$, $x_n) = +\infty$. 
The structure $\mathfrak L$ has quantifier elimination (see Section~\ref{sect:piecewise-hom}) and hence there are finitely many regions
such that $f$ is a homogeneous linear polynomial in each region; this is the motivation for the name \emph{piecewise linear homogeneous}. 
The cost function from Example~\ref{expl:intro1} is PLH.

In this article we present a sampling technique to solve VCSPs for
PLH cost functions in polynomial time.
The technique consists of a polynomial-time
many-one reduction from the VCSP for a finite set of PLH cost functions to the VCSP for a \emph{sample},
i.e., the same set of cost functions interpreted over a suitable finite domain. We present this technique in two steps: first, in Section \ref{sect:csp-tract} we present the reduction for the {\it feasibility problem}, i.e., the problem of deciding whether there exists an assignment of values for the variables with finite cost; second, in Section \ref{sect:tract} we extend this method to solve the {\it optimisation problem}, i.e., the task of deciding whether there exists an assignment of values for the variables with cost at most equal to a given threshold.  
In Section \ref{sect:totsym} we present a sufficient condition under which infinite-domain VCSPs that admit
an efficient sampling algorithm can be solved in polynomial time using linear
programming relaxation. The condition is given in terms of algebraic
properties of the cost functions, namely the existence of \emph{fully symmetric fractional polymorphisms of all arities}.
We combine the results mentioned above to solve the VCSP for sets of PLH cost functions having such fractional polymorphisms.
 In particular, we apply the combination to show containment in P of the VCSP
for {\it submodular}  PLH cost functions,  for {\it convex} PLH cost functions, and for  {\it componentwise increasing} PLH cost functions. 
 
Submodular cost functions naturally appear in several scientific fields such as, for example, economics, game theory, machine learning, and computer vision, and play a key role in operational research and combinatorial optimisation (see, e.g.,~\cite{Fujushige}).   Submodularity also plays an important role for 
the computational complexity of VCSPs over finite domains, and guided the research on VCSPs for some time (see, e.g., \cite{cohen2006complexity,JKT11}), 
even though this might no longer be visible in the final classification obtained in~\cite{KolmogorovThapperZivny,GenVCSP15,KozikOchremiak15}. 
The polynomial-time tractability of VCSPs for submodular PLH cost functions is the main result of an extended abstract  \cite{BodirskyMV18} which announced some of the results presented here: we first provided a polynomial-time many one reduction to a  finite-domain VCSP with costs in the ring $\Q^\star:=\Q((\epsilon))$ of formal Laurent power series in one indeterminate. Then we used a fully combinatorial polynomial-time algorithm to solve the VCSP for the $\Q^\star$-valued submodular sample.

 In the present article, starting from the same polynomial-time many-one reduction, we provide a polynomial-time many-one reduction to a finite-domain VCSP with values in $\Q$   and then use linear programming relaxation to solve the finite-domain problem. The approach we present here is more general: it can be applied to all VCSPs for PLH cost functions that have fully symmetric fractional polymorphisms of all arities and it does not require the existence of fully combinatorial algorithms solving the finite-domain VCSP computed by our reduction. The concept of fractional polymorphisms comes from universal algebra and it has been fruitfully used to characterise the computational complexity of finite-domain VCSPs (see, e.g.,  \cite{KolmogorovThapperZivny}, \cite{ThapperZivny13},  \cite{KozikOchremiak15}, \cite{GenVCSP15}).
 At the end of Section~\ref{sect:submodular}, we also show that submodularity defines
a \emph{maximal tractable class} of PLH valued structures: informally, adding any 
cost function that is not submodular leads to an NP-hard VCSP. 
 Section~\ref{sect:conclusion} closes with some problems and challenges for future research.


\section{Valued Constraint Satisfaction Problems}
\label{sect:vcsps}
A \emph{valued structure $\Gamma$ 
(over $D$)}  consists of 
\begin{itemize}
\item a signature $\tau$ 
consisting of function symbols $f$, each equipped with an arity $\ar(f)$, 
\item a set $D = \dom(\Gamma)$ (the \emph{domain}), 
\item for each $f \in \tau$ a 
\emph{cost function}, i.e.,  a function $f^{\Gamma} \colon D^{\ar(f)} \to {\mathbb Q} \cup \{+\infty\}$. 
\end{itemize}
Here, $+\infty$ is an extra element with the expected properties that for all $c \in {\mathbb Q} \cup \{+\infty\}$
\begin{align*}
(+\infty) + c & = c + (+\infty) = +\infty \\
\text{  and } c & < +\infty \text{ iff } c \in {\mathbb Q}.
\end{align*} 
Let $\Gamma$ be a valued structure with a finite signature $\tau$. 
The \emph{valued constraint satisfaction problem  for $\Gamma$}, denoted by $\vcsp(\g)$, is the following computational problem. 

\begin{definition}\label{vcspdef}
An \emph{instance} $I$ of $\vcsp(\g)$ 
consists of 
\begin{itemize}
\item a finite set of variables $V_I$, 
\item an expression $\phi_I$ of the form
\[\sum_{i=1}^{m} f_i(x^i_1,\ldots, x^i_{\ar(f_i)})\]
where $f_1,\dots,f_m \in \tau$ and all the $x^i_j$ are variables from $V_I$, and
\item a value $u_I \in {\mathbb Q}$. 
\end{itemize} 
The task is to decide whether there exists an assignment $\alpha \colon V_I \to \dom(\Gamma)$, whose \emph{cost}, defined as
\[\sum_{i=1}^{m} f^\Gamma_i(\alpha(x^i_1),\ldots, \alpha(x^i_{\ar(f_i)}))\]
is finite, and if so, whether there is one whose cost 
is at most $u_I$.
\end{definition}

 Note that, given a valued structure $\g$, if the signature $\tau$ of $\Gamma$
is finite, it is inessential for the computational complexity
of $\vcsp(\g)$ how the function symbols in $\phi_I$ are represented.

The function from $\phi_I^{\g}\colon\dom(\g)^{\lvert V_I\rvert} \to\QQ$  described by the expression $\phi_I$ 
is also called the \emph{objective function}.  
The problem of deciding whether there exists an  assignment $\alpha\colon V \to \dom(\g)$ with finite cost
is called the \emph{feasibility problem}, which can
also be modelled as a (classical) constraint satisfaction problem (cf.~Section \ref{csp-sect}). The choice of defining the VCSP as a decision problem and not as an optimisation problem as it is common for VCSPs over finite domains is motivated by two major issues that do not occur in the finite-domain case: first, in the infinite-domain setting one  needs to capture the difference between a proper minimum and an infimum value that the cost of the assignment can be  arbitrarily close  to but never reach; second, our definition allows to model the case in which the infimum is $-\infty$, i.e., when there are assignments for the variables of arbitrarily small cost.

By a \emph{finite-valued structure} we refer to a valued structure whose cost functions are  finite-valued, i.e., they assume a finite value on every point with rational coordinates. The VCSP for a finite-valued structure is merely an optimisation problem and  it does not involve any associated feasibility problem, i.e., the problem asking wheather there is an assignment with finite cost.

VCSPs have been studied intensively
when $D = \dom(\Gamma)$ is finite,
and as mentioned in the introduction, in this case a complete classification of the computational complexity of 
$\vcsp(\Gamma)$ has been obtained recently. 
However, many well-known optimisation problems can only be formulated when we allow
infinite domains $D$.

\begin{example}\label{expl:LP}
Let $\Gamma$ be the valued structure
with domain $D := {\mathbb Q}$ and
the signature $\tau = \{R_+,R_1,\leq,\text{id}\}$ where
\begin{itemize}
\item $R_+$ is ternary, and
\[R^\Gamma_+(x,y,z) = \begin{cases} 0 
& \text{ if } x+y=z \\
+\infty & \text{ otherwise; } \end{cases}\]
\item $R_1$ is unary and 
\[R_1^{\Gamma}(x) := \begin{cases} 0 
& \text{ if } x=1 \\
+\infty & \text{ otherwise; } \end{cases}\]
\item $\leq$ is binary and 
\[\leq^{\Gamma}(x,y)  := \begin{cases} 0 
& \text{ if } x \leq y \\
+\infty & \text{ otherwise; } \end{cases}\]
\item $\text{id}$ is unary and 
\[\text{id}^{\Gamma}(x)  := x.\]
\end{itemize}
Then instances of $\vcsp(\Gamma)$ can be viewed as instances of the linear program feasibility problem, i.e., the problem of deciding whether there exists a solution satisfying finitely
	many linear inequality constraints.  \demo
\end{example}

We give another example to illustrate the 
flexibility of the VCSP framework for formulating optimisation problems; the valued structure in this example contains non-convex cost functions, but, as we will see later, can be solved in polynomial time. 

\begin{example}\label{expl:intro2}
Let $\Gamma$ be the valued structure
with signature $\tau = \{g_1,g_2,g_3\}$ and
the cost functions 
\begin{itemize}
\item $g^\Gamma_1 \colon {\mathbb Q} \to {\mathbb Q}$ defined by $g_1(x) = -x$,
\item $g^\Gamma_2 \colon {\mathbb Q}^2 \to \mathbb Q$ defined by $g_2(x,y) := \min(x,-y)$,
and 
\item $g^\Gamma_3 \colon {\mathbb Q}^3 \to \mathbb Q$ defined by $g_3(x,y,z) := \max(x,y,z)$. 
\end{itemize}
Two examples of instances of VCSP$(\Gamma)$ are 
\begin{align} 
& g_1(x) + g_1(y) + g_1(z) + g_2(x,y) \nonumber \\
+ &  
g_3(x,y,z) + g_3(x,x,x) + g_3(x,x,x)  \label{eq:instance1} \\
\text{ and } \quad & 
g_1(x) + g_1(y) + g_1(z) \nonumber \\
+ & g_3(x,y,z) + g_3(x,x,y) + g_3(y,z,z)
\label{eq:instance2}
\end{align}
We can make the cost function described by the expression in $(\ref{eq:instance1})$ arbitrarily small by fixing $x$ to $0$ and choosing $y$ and $z$ sufficiently large. 
On the other hand, the minimum for the cost function in $(\ref{eq:instance2})$ is $0$, obtained by setting $x,y,z$ to $0$. 
Note that $g_1$ and $g_3$ are convex functions, but $g_2$ is not. \demo
\end{example}

\section{Cost Functions over the Rationals}
\label{sect:cost}
The class of all valued structures over arbitrary infinite domains is too large to allow for complete complexity classifications, so we have to restrict our focus to subclasses. In this section we describe natural 
and large classes of cost functions over the domain $D = {\mathbb Q}$, the rational numbers. These classes
are most naturally introduced using first-order definability, so we briefly fix the necessary logic concepts. 

\subsection{Logic Preliminaries}
We fix some standard logic terminology; see, e.g.,~\cite{Hodges}. 
A \emph{signature} is a set $\tau$ of function and relation symbols. Each function symbol $f$ and each relation symbol $R$ is equipped with an arity $\ar(f)$, $\ar(R) \in {\mathbb N}$.  
A \emph{$\tau$-structure} $\mathfrak A$ consists of \begin{itemize}
\item a set $A = \dom(\mathfrak A)$, called the \emph{domain} of $\mathfrak A$, whose elements are called the \emph{elements} of the $\tau$-structure; 
\item  a relation $R^{\mathfrak A} \subseteq A^{\ar(R)}$ for each relation symbol $R \in \tau$; 
\item  a function $f^{{\mathfrak A}} \colon A^{\ar(f)} \to A$ for each function symbol $f \in \tau$. 
\end{itemize}
Function symbols of arity $0$ are allowed and
are called constant symbols. 
We give two examples of structures that play an important role in this article. 

\begin{example}\label{expl:S}
Let $\mathfrak S$ be the structure
with domain ${\mathbb Q}$ 
and signature $\sigma:=\{+,1,\leq\}$
where 
\begin{itemize}
\item $+$ is a binary function symbol that 
denotes the usual addition over ${\mathbb Q}$, \item $1$ is a constant symbol that denotes $1 \in {\mathbb Q}$, and 
\item $\leq$ is a binary relation symbol that denotes the usual linear order of the rationals. \demo
\end{itemize}
\end{example}

\begin{example}\label{expl:L}
Let $\mathfrak L$ be the structure with domain $\Q$ and
(countably infinite) signature 
$\tau_0 := \{<,1\}\cup \{c\cdot\}_{c \in \Q}$ 
where 
\begin{itemize}
\item $<$ is a relation symbol of arity $2$
and $<^{\mathfrak L}$ is the strict linear order of $\Q$,
\item $1$ is a constant symbol and $1^{\mathfrak L} := 1 \in \Q$, and 
\item $c \cdot$ is a unary function symbol for every $c \in {\mathbb Q}$ such that $(c \cdot)^{\mathfrak L}$ is the function $x \mapsto cx$ (multiplication by $c$).\demo
\end{itemize} 
\end{example}

A \emph{relational structure} (or structure with \emph{relational signature}) is a structure whose signature contains only relation symbols. A structure (or a valued structure) is called \emph{finite} if its domain is finite. 

\subsection{Constraint Satisfaction Problems}\label{csp-sect}

Let $\g$ be a  valued  structure with signature $\tau$.
The question whether an instance of $\vcsp(\Gamma)$ is \emph{feasible},
that is, has an assignment with finite cost, can be viewed as a (classical) \emph{constraint satisfaction
problem}. Formally, the constraint 
satisfaction problem for a structure $\mathfrak A$ with a finite
relational signature $\tau$ is the following computational problem,
denoted by $\csp(\mathfrak A)$: 
\begin{itemize}
	\item the input is a finite conjunction $\psi$ of atomic $\tau$-formulas,
	and 
	\item the question is whether 
	$\psi$ is satisfiable in $\mathfrak A$. 
\end{itemize}

We can associate to $\Gamma$ the following
relational structure $\feas(\Gamma)$: for every function symbol $f$ of arity
$n$ from $\tau$ the 
signature of $\feas(\Gamma)$ contains a 
relation symbol $R_f$ of arity $n$ such that 
$R_f^{\feas(\Gamma)} = \dom(f^{\g})$. 
Every polynomial-time algorithm for
$\vcsp(\g)$ in particular has to solve $\csp(\feas(\g))$.
In fact, an instance $\phi$ of $\vcsp(\g)$ can be translated into an
instance $\psi$ of $\csp(\feas(\g))$ by replacing 
subexpressions of the form 
$f(x_1,\dots,x_n)$ in $\phi$ by 
$R_f(x_1,\dots,x_n)$
and by replacing $+$ by $\wedge$. 
It is easy to see that $\phi$ is a feasible instance of $\vcsp(\Gamma)$ if,
and only if, $\psi$ is satisfiable in $\feas(\g)$.

\subsection{Quantifier Elimination} 

We adopt the usual definition of first-order logic. 
A formula is \emph{atomic} if it does not contain logical symbols (connectives or quantifiers). By convention, we have two special atomic formulas,
$\top$ and~$\bot$, to denote truth and falsity.

Let $\tau$ be a signature. We say that a $\tau$-structure $\mathfrak A$ has \emph{quantifier elimination} if every first-order $\tau$-formula is equivalent to a quantifier-free $\tau$-formula over $\mathfrak A$.

\begin{theorem}[\cite{FerranteRackoff}, Section 3, Theorem 1]
The structure $\mathfrak S$ from Example~\ref{expl:S} has quantifier elimination. 
\end{theorem}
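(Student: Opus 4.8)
The plan is to prove quantifier elimination for $\mathfrak S = (\mathbb Q; +, 1, \leq)$ by the standard single-variable reduction combined with an elimination procedure adapted to the poverty of the signature. First I would record the routine reduction: writing any quantifier-free formula in disjunctive normal form and using that $\exists$ distributes over $\vee$ (and $\forall\varphi \equiv \neg\exists\neg\varphi$), it suffices to show that for a conjunction $\psi = \bigwedge_i \lambda_i$ of literals the formula $\exists x\,\psi$ is equivalent over $\mathfrak S$ to some quantifier-free $\{+,1,\leq\}$-formula (a disjunction being allowed, since any Boolean combination is quantifier-free). If $x$ does not occur in $\psi$ the quantifier is simply dropped. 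Note also that $s=t$, $s<t$, and $s\neq t$ are all quantifier-free definable from $\leq$, so I may freely use them in literals and in the output.

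The key technical point is that the signature has no subtraction, no symbol for $0$, and no division, so I cannot isolate $x$ in the textbook way. Instead I normalise as follows. Every term is a sum $\sum a_i y_i + m$ with non-negative integer coefficients, so each literal involving $x$ has the shape $ax + s \bowtie bx + t$ with $s,t$ not involving $x$ and ${\bowtie}\in\{\leq,<,=,\neq\}$; after possibly swapping sides I may assume $a\geq b$, and subtracting the common multiple $bx$ (a semantically valid move keeping all coefficients non-negative) turns the literal into $cx + s \bowtie t$ with $c=a-b\geq 0$ and $s,t$ honest terms. Literals with $c=0$ no longer involve $x$ and are set aside. Letting $C$ be a common multiple of the positive coefficients $c$ that occur and multiplying each relevant literal through by the appropriate positive integer (again mere iterated addition), I arrange that $x$ occurs in every surviving literal as $Cx$. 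Because $\mathbb Q$ is divisible, the map $x\mapsto Cx$ is surjective, so $\exists x$ may be replaced by $\exists z$ after substituting a fresh $z$ for $Cx$; now $z$ occurs with coefficient $1$, i.e. every literal reads $z+s \bowtie t$.

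It then remains to eliminate $\exists z$ from a conjunction of literals of the forms $z+s\leq t$ and $t\leq z+s$ (upper and lower bounds), together with equalities and disequalities, and I would do this without ever producing $0$ or a subtraction, by combining literals pairwise and cross-adding. If some equality $z+s=t$ is present, I substitute it into any other literal $z+s'\bowtie t'$ by adding $s$ to both sides and using $z+s'+s = (z+s)+s' = t+s'$, yielding the subtraction-free literal $t+s'\bowtie t'+s$; the quantifier is thereby eliminated. If no equality occurs, I first replace each disequality $z+s\neq t$ by the disjunction $z+s<t \;\vee\; t<z+s$ and distribute, reducing to pure conjunctions of bounds. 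For such a conjunction, $\exists z$ holds iff every lower bound lies below every upper bound: comparing a lower bound $t_q\leq z+s_q$ with an upper bound $z+s_p\leq t_p$ gives exactly the honest condition $t_q+s_p\leq t_p+s_q$ (with the strict variant when either bound is strict), and if there are no upper bounds, or no lower bounds, existence is automatic since $\mathbb Q$ has no endpoints. The resulting conjunction/disjunction of these compatibility conditions is the desired quantifier-free formula.

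The routine verifications are that each manipulation preserves truth over $\mathfrak S$ — which follows from order-compatibility of addition and of multiplication by positive integers, and from density and unboundedness of the order — and that degenerate single-point cases are correctly treated: a $z$ pinned to one point by two-sided non-strict bounds and simultaneously forbidden there becomes, after the disjunction split, a strict compatibility condition of the shape $t+s<t+s$, which is correctly false. The one genuine obstacle, and the reason the argument is not a verbatim copy of the divisible-ordered-abelian-group proof, is precisely the impoverished signature: all elimination must be phrased through pairwise cross-addition rather than variable isolation, and divisibility must be invoked to clear the single variable's coefficient rather than to rescale the whole formula.
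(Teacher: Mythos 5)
Your proof is correct, but note that the paper itself contains no proof of this statement: it is quoted from Ferrante and Rackoff \cite{FerranteRackoff}, and the only quantifier-elimination argument actually carried out in the paper is for the different structure $\mathfrak L$ of Example~\ref{expl:L} (Theorem~\ref{thm:qe}, via Lemma~\ref{qel}). Your argument therefore replaces a citation by a self-contained proof, and its skeleton is the same Fourier--Motzkin scheme the paper uses for $\mathfrak L$: put the matrix in disjunctive normal form, turn negated atoms into positive ones, dispose of equalities by substitution, and eliminate $\exists z$ from a conjunction of bounds by pairing every lower bound against every upper bound, using density and unboundedness of $\mathbb Q$. The genuinely new content in your write-up is exactly what the change of signature forces: in $\mathfrak L$ the coefficient of the quantified variable is normalised away using the unary scalar multiplications $c\,\cdot$ present in the signature, whereas $\mathfrak S$ has none of these, so you instead (i) cancel occurrences of $x$ appearing on both sides of a literal, (ii) clear the surviving coefficient by passing to a common multiple $C$ and using divisibility of $\mathbb Q$ to trade $\exists x$ applied to a formula in $Cx$ for $\exists z$ applied to the same formula in $z$, and (iii) express every eliminated condition through cross-addition, as in $t_q+s_p \leq t_p+s_q$, because subtraction is not a term operation. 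All three moves are sound (adding equal terms to both sides and multiplying by a positive integer preserve $\leq$, $<$, $=$, $\neq$ over $\mathbb Q$, and $x \mapsto Cx$ is onto), and you also handle the one delicate degenerate case --- a point pinned by two non-strict bounds and excluded by a disequality --- correctly, since the split of $\neq$ into strict inequalities produces the false condition $t+s<t+s$. In terms of what each route buys: the paper's citation is economical and defers to a classical source, while your proof makes the result self-contained and shows that both quantifier-elimination theorems of the paper fall to one and the same elementary method, differing only in how the quantified variable's coefficient is normalised.
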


\begin{theorem}\label{thm:qe}
The structure $\mathfrak L$ from Example~\ref{expl:L} has quantifier elimination. 
\end{theorem}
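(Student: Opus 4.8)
The plan is to prove quantifier elimination via the usual reduction (see, e.g.,~\cite{Hodges}): it suffices to show that for every conjunction $\phi(x,\bar y)$ of literals (atomic $\tau_0$-formulas and their negations), the formula $\exists x\,\phi(x,\bar y)$ is equivalent over $\mathfrak L$ to a quantifier-free $\tau_0$-formula. An arbitrary formula is then handled by prenexing, converting the matrix to disjunctive normal form before each innermost existential quantifier, distributing $\exists$ over the disjunction, and treating $\forall x$ as $\neg\exists x\,\neg$. The feature that makes the base case manageable is that $\tau_0$ contains no function symbol of arity $\ge 2$: since $(d\cdot)\big((c\cdot)t\big)$ denotes the same value as $(dc\cdot)t$, an easy induction shows that every $\tau_0$-term is equivalent either to $c\cdot x$ for a single variable $x$ and some $c\in\Q$, or to a rational constant $c$ (obtained as $c\cdot 1$). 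In particular every rational is denoted by a closed term, a fact I will use freely for substitution and for naming bounds.

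First I would normalise the literals of $\phi$. By the term analysis, each atomic subformula is a comparison $t_1<t_2$ or $t_1=t_2$ between terms of the above shape; putting the coefficient of $x$ on one side, a literal either does not mention $x$ (and is set aside into a quantifier-free part $\psi(\bar y)$) or has the form $c\cdot x \mathrel{R} s$ with $c\neq 0$, where $R$ is $<$ or $=$ possibly negated and $s$ is a term not involving $x$, that is $s=(d/c)\cdot y$ for some variable $y$ or $s$ a rational constant. Dividing by $c$ and reversing the order when $c<0$ (sound over the ordered field $\Q$), I can rewrite every such literal as one of: a strict or non-strict lower bound $x>\ell$ or $x\ge\ell$, a strict or non-strict upper bound $x<u$ or $x\le u$, an equality $x=e$, or a disequality $x\neq n$, where $\ell,u,e,n$ are again terms of the language. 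Thus $\exists x\,\phi$ becomes $\psi(\bar y)$ conjoined with the existence of a rational $x$ satisfying finitely many such one-sided constraints.

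It remains to express this existence quantifier-freely, which I would do by case analysis. If some equality $x=e$ is present, I substitute $e$ for $x$ throughout (legitimate since $e$ is a term) and drop the quantifier, obtaining a quantifier-free formula. Otherwise the admissible values of $x$ form an interval of $\Q$ whose endpoints are the greatest of the finitely many lower bounds and the least of the finitely many upper bounds; using that $\Q$ is a dense linear order without endpoints, this interval is nonempty exactly when each lower bound lies below each upper bound (strictly or not, according to the strictness of the bounds involved), a condition that is a quantifier-free Boolean combination of order comparisons among the $\ell$'s and $u$'s. Whenever the interval is nonempty and non-degenerate it contains infinitely many rationals, so the finitely many disequalities $x\neq n$ can always be satisfied and contribute nothing. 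Assembling these conditions yields the desired quantifier-free equivalent, completing the induction. I expect the only genuinely delicate point to be the bookkeeping in this last step: a matching pair of non-strict bounds $x\ge a$ and $x\le a$ collapses the interval to the single point $a$, which must then be treated like an equality and in particular must avoid the forbidden values $n$. Separating this degenerate sub-case from the generic one, where density disposes of the disequalities, is what requires care, though it remains entirely routine.
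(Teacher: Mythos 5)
Your proof is correct, and its skeleton matches the paper's (Theorem~\ref{thm:qe} via Lemma~\ref{qel}): induction on quantifiers with $\forall x$ handled as $\neg\exists x\neg$, disjunctive normal form so that only one existential over a conjunction must be eliminated, the observation that every $\tau_0$-term collapses to $c\cdot x$ or $c\cdot 1$ (since the unary multiplications compose), substitution to remove equalities, and a final Fourier--Motzkin-style argument over the dense unbounded order $\Q$. The one genuine divergence is the treatment of negation. The paper removes negated atoms \emph{before} isolating $x$: using totality of the order it rewrites $\neg(s=t)$ as $s<t\vee t<s$ and $\neg(s<t)$ as $t<s\vee s=t$, then recomputes the DNF, so each conjunct contains only positive atoms; its elimination step therefore sees only equalities and strict bounds, and reduces to the single clean equivalence $\exists x\bigl(\bigwedge_i x>u_i\wedge\bigwedge_j x<v_j\bigr)\leftrightarrow\bigwedge_{i,j}u_i<v_j$. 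You instead keep literals, so your elimination step must cope with non-strict bounds and disequalities --- which is exactly what forces the degenerate-interval bookkeeping (a point interval $\{a\}$ colliding with a constraint $x\neq n$) that you flag as the delicate part. Both routes are sound: the paper's extra rewriting pass buys a trivial final case analysis, while yours saves that pass at the cost of the case analysis you describe. One small point you should make explicit in your normalisation: an atom can mention $x$ on \emph{both} sides, as in $c\cdot x \,\sigma\, d\cdot x$; since the language has no addition you cannot literally ``put the coefficient on one side,'' but such an atom is equivalent to a comparison of $x$ with $0\cdot 1$ (or to $\top$ or $\bot$), after which your scheme applies. The paper glosses over this same case at a comparable level of detail, so I would not call it a gap, but it is worth a sentence.
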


To prove Theorem \ref{thm:qe} it suffices to prove the following lemma.

\begin{lemma}\label{qel}
For every quantifier-free $\tau_0$-formula $\varphi$ there exists a quantifier-free $\tau_0$-formula $\psi$ such that $\exists x. \varphi$ is equivalent to $\psi$ over $\mathfrak L$. 
\end{lemma}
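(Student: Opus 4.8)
**The plan is to eliminate the existential quantifier from $\exists x.\varphi$ by a case analysis on the position of $x$ relative to the other terms.** Since $\varphi$ is quantifier-free, I may first put it into disjunctive normal form. Because $\exists x$ distributes over disjunction, it suffices to eliminate the quantifier from each disjunct separately, i.e. I may assume $\varphi$ is a conjunction of literals. The atomic $\tau_0$-formulas are built from the symbols $<$, $=$ (from equality of terms) and the constant $1$ together with the scalar multiplications $c\cdot$; so every term is, after collecting the nested applications of the $c\cdot$ symbols, of the form $c\cdot y$ for a variable $y$ and a rational $c$, or a rational multiple of the constant $1$. Hence every literal involving $x$ can be rewritten, by dividing by the (rational, nonzero) coefficient of $x$ and using that $<$ is a strict linear order closed under scaling by positive rationals, into one of the normalised forms $x < t$, $t < x$, $x = t$, $x \neq t$, where $t$ is a term not containing $x$ (a rational combination of the other variables and of $1$). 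Literals not containing $x$ are simply carried over into $\psi$ unchanged.

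Next I would handle the conjunction of the $x$-literals. If an equality $x = t$ occurs, I substitute $t$ for $x$ everywhere and drop the existential quantifier, which immediately yields a quantifier-free formula. Otherwise $x$ is constrained only by strict lower bounds $t_i < x$, strict upper bounds $x < s_j$, and disequalities $x \neq r_k$. The key observation is that over the dense linear order $(\Q;<)$ — which is exactly the order reduct of $\mathfrak L$ — such a system is satisfiable for some $x$ if and only if every lower bound lies strictly below every upper bound, that is $\bigwedge_{i,j} (t_i < s_j)$, together with the condition that the open interval determined by the bounds is nonempty and, being infinite, is not exhausted by the finitely many excluded points $r_k$. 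Because a nonempty open interval in a dense order is infinite, the finitely many disequalities never obstruct satisfiability; so the existential statement is equivalent to $\bigwedge_{i,j}(t_i < s_j)$, with the conventions that an empty conjunction is $\top$ and that the absence of any lower (resp. upper) bound makes the corresponding constraints vacuously true. This is the standard Fourier–Motzkin elimination step, and the resulting formula $\psi$ is quantifier-free in $\tau_0$.

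**The main obstacle, and the point deserving care, is the normalisation of terms and literals in the first step**, rather than the order-theoretic elimination itself. Because the signature $\tau_0$ provides only unary scalar multiplications $c\cdot$ and no binary addition, I must make sure that after pulling out the coefficient of $x$ the remaining ``term'' $t$ is genuinely expressible in $\tau_0$: a priori a literal such as $(c\cdot x) < (d\cdot y)$ rearranges to $x < (d/c)\cdot y$, which is fine, but literals built from iterated applications need the closure of $\{c\cdot\}_{c\in\Q}$ under composition, namely $(c\cdot)\circ(d\cdot) = (cd)\cdot$, together with the fact that $\Q$ is closed under division by nonzero scalars. I would therefore state explicitly that every $\tau_0$-term in the variables $x,y_1,\dots,y_n$ is equivalent over $\mathfrak L$ to a term of the form $c\cdot z$ for a single variable $z$ or $c\cdot 1$, so that every atomic formula compares two such normalised terms; the normalisation of literals into the forms $x<t$, $t<x$, $x=t$, $x\neq t$ then follows, and the Fourier–Motzkin step goes through verbatim. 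Once Lemma~\ref{qel} is established, Theorem~\ref{thm:qe} follows by the usual induction on the quantifier depth of an arbitrary first-order formula.
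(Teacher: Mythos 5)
Your overall strategy (pass to DNF, distribute the existential over the disjunction, substitute away equalities, then a Fourier--Motzkin step for the remaining bounds on $x$) is essentially the paper's proof, and the term-normalisation point you flag is handled there in the same way. However, there is a genuine gap in your treatment of negation. After passing to DNF you work with conjunctions of \emph{literals}, and you claim that every literal involving $x$ normalises to one of $x<t$, $t<x$, $x=t$, $x\neq t$. This is false for the negation of an order atom: $\neg(x<t)$ is equivalent to $t\leq x$, a \emph{non-strict} bound, which is not among your four forms and cannot be written as a single $\tau_0$-literal of that shape. Your case analysis in the second paragraph (``$x$ is constrained only by strict lower bounds, strict upper bounds, and disequalities'') silently assumes such literals never occur.

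The gap matters because the two natural repairs interact badly with what you wrote. If you admit non-strict bounds $t\leq x$ and $x\leq s$ into the elimination step, your key claim that finitely many disequalities never obstruct satisfiability breaks down: $t\leq x\wedge x\leq t\wedge x\neq t$ is unsatisfiable even though every lower bound is $\leq$ every upper bound, because the feasible interval can degenerate to a single point; you would then need further case distinctions comparing strict and non-strict bounds. The clean fix --- and the one the paper uses --- is to eliminate \emph{all} negations before taking the DNF, replacing $\neg(s<t)$ by $t<s\vee s=t$ and $\neg(s=t)$ by $s<t\vee t<s$; then each disjunct contains only positive atoms, equalities are removed by substitution, and Fourier--Motzkin applies to strict inequalities only, with no disequalities left over. (Your density observation --- that finitely many excluded points cannot exhaust a nonempty open interval of $\Q$ --- is correct, and it would let you keep the $\neq$-literals unexpanded, but only if you also expand the negated strict inequalities, which your proposal currently does not do.)
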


\begin{proof}
Observe that every atomic $\tau_0$-formula $\phi$
has at most two variables:
\begin{itemize}
\item if $\phi$ has no variables,
then it is equivalent to $\top$ or $\perp$,
\item if $\phi$ has only one variable, say $x$,
then it is equivalent to $c {\cdot} x \, \sigma \, d {\cdot} 1$ or to $d {\cdot} 1 \, \sigma \, c {\cdot} x$ for $\sigma \in \{<,=\}$ and $c,d \in {\mathbb Q}$. 
Moreover, if $c = 0$ then $\phi$ is equivalent to a formula without variables, and otherwise $\phi$ is equivalent to $x \, \sigma \, \frac{d}{c}{\cdot} 1$
or to $ \frac{d}{c}{\cdot}1 \, \sigma \, x$ for $\sigma \in \{<,=\}$, which we abbreviate by the more common $x < \frac{d}{c}$, $x = \frac{d}{c}$, and 
$\frac{d}{c}<x$, respectively. 
\item if $\phi$ has two variables, say $x$ and $y$,
then $\phi$ is equivalent to $c {\cdot} x \, \sigma \, d {\cdot} y$ or $c {\cdot} x \, \sigma \, d {\cdot} y$ for $\sigma \in \{<,=\}$. Moreover, if $c = 0$ or $d = 0$ then the formula $\phi$ is equivalent to a formula with at most one variable, and otherwise $\phi$ is equivalent to $x \, \sigma \, \frac{d}{c}{\cdot}y$ or to $\frac{d}{c}{\cdot} y \, \sigma \, x$.
\end{itemize} 

We define $\psi$ in five steps. \begin{enumerate}
\item Rewrite $\varphi$, using De Morgan's laws, in such a way that all the negations are applied to atomic formulas. 
\item Replace \begin{itemize} 
\item $\neg (s=t)$ by $s<t\vee t<s$, and 
\item $\neg (s<t)$ by $t<s \vee s=t$,
\end{itemize} where $s$ and $t$ are $\tau_0$-terms.
\item Write $\varphi$ in disjunctive normal form in such a way that each of the clauses is a conjunction of non-negated atomic $\tau_0$-formulas (this can be done by distributivity).
\item Observe that $\exists x \bigvee_i \bigwedge_j \chi_{i,j}$, where the $\chi_{i,j}$ are atomic $\tau_0$-formulas, is equivalent to $\bigvee_i \exists x \bigwedge_j \chi_{i,j}$. Therefore, it is sufficient to prove the lemma for $\varphi=\bigwedge_j \chi_j$ where the $\chi_j$ are  atomic $\tau_0$-formulas. As explained above, we can assume without loss of generality that the $\chi_j$ are of the form $\top$, $\perp$, $x \, \sigma \, c$,  $c \, \sigma \, x$, or $x \, \sigma \, cy$, for $c \in {\mathbb Q}$ and $\sigma \in \{<,=\}$. 
If $\chi_j$ equals $\perp$, then $\varphi$
is equivalent to $\perp$ and there is nothing to be shown. If $\chi_j$ equals $\top$ then it can simply
be removed from $\varphi$. If
$\chi_j$ equals $x = c$ or $x=cy$ then replace every occurrence of $x$ by $c \cdot 1$ or by $c\cdot y$, respectively. 
Then $\varphi$ does not contain the variable $x$ anymore and thus $\exists x. \varphi$ is equivalent to $\varphi$. 
\item We are left with the case that all atomic $\tau_0$-formulas involving $x$ are (strict) inequalities, that is, $\varphi=\bigwedge_i \chi_i \wedge \bigwedge_i \chi'_i \wedge \bigwedge_i \chi''_l$, where 
\begin{itemize}
\item the $\chi_i$ are atomic formulas not containing $x$,
\item the $\chi'_i$ are atomic formulas of the form $x>u_i$,
\item the $\chi''_i$ are atomic formulas of the form $x<v_i$.
\end{itemize}
Then $\exists x. \varphi$ is equivalent to $\bigwedge_i \chi_i \wedge \bigwedge_{i,j}(u_i<v_j)$.
\end{enumerate}
Each step of this procedure preserves the satisfying assignments for $\varphi$ and the resulting formula is in the required form; 
this is obvious for all but the last step, 
and for the last step follows from the correctness of Fourier-Motzkin elimination for systems of linear inequalities (see, e.g., \cite{Schrijver}, Section 12.2). Therefore, the procedure is correct.
\end{proof}

\begin{proof}[(of Theorem \ref{thm:qe})]
Let $\varphi$ be a $\tau_0$-formula. We prove that it is equivalent to a quantifier-free $\tau_0$-formula by induction on the number $n$ of quantifiers of $\varphi$. For $n=1$ we have two cases:
\begin{itemize}
\item If $\varphi$ is of the form $\exists x. \varphi'$ (with $\varphi'$ quantifier-free) then, by Lemma \ref{qel}, it is equivalent to a quantifier-free $\tau_0$-formula $\psi$.
\item If $\varphi$ is of the form $\forall x. \varphi'$ (with $\varphi'$ quantifier-free), then it is equivalent to $\neg \exists x. \neg \varphi'$. By Lemma \ref{qel}, $\exists x. \neg \varphi'$ is equivalent to a quantifier-free $\tau_0$-formula $\psi$. Therefore, $\varphi$ is equivalent to the quantifier-free $\tau_0$-formula $\neg \psi$.
\end{itemize}
Now suppose that $\varphi$ is of the form
$Q_1 x_1 Q_2x_2 \cdots Q_nx_n.\varphi'$ 
for $n \geq 2$ and $Q_1,\dots,Q_n \in \{\forall,\exists\}$, and suppose that the statement 
is true for $\tau_0$-formulas with at most $n-1$ quantifiers. In particular, 
$Q_2x_2 \cdots Q_nx_n.\varphi'$ is equivalent to a quantifier -free $\tau_0$-formula $\psi$. Therefore, $\varphi$ is equivalent to $Q_1x_1.\psi$, that is, a $\tau_0$-formula with one quantifier that is equivalent to a quantifier-free $\tau_0$-formula, again by the inductive hypothesis.
\end{proof}

\subsection{Piecewise Linear Homogeneous Functions}
\label{sect:piecewise-hom}
A \emph{partial function} of arity $n \in {\mathbb N}$ over a set $A$ is a function
\[f \colon \dom(f) \to A \text{ for some } \dom(f) \subseteq A^{n} \; .\]
Let $\mathfrak A$ be a $\tau$-structure. 
A partial function over $A$ is called \emph{first-order definable over $\mathfrak A$} if 
there exists a first-order $\tau$-formula $\phi(x_0,x_1,\dots,x_n)$ such that for all $a_1,\dots,a_n \in A$ 
\begin{itemize}
\item if $(a_1,\dots,a_n) \in \dom(f)$ then $\mathfrak A \models \phi(a_0,a_1,\dots,a_n)$
if, and only if,\\ $a_0 = f(a_1,\dots,a_n)$, and
\item if $f(a_1,\dots,a_n) \notin \dom(f)$ then there is no $a_0 \in A$ such that\\ $\mathfrak A \models \phi(a_0,a_1,\dots,a_n)$. 
\end{itemize}

In the following, 
we consider \emph{cost functions
over ${\mathbb Q}$}, which will be  
functions from ${\mathbb Q}^n \to \Q \cup \{+\infty\}$.
It is sometimes convenient to view a cost function 
as a partial function over ${\mathbb Q}$. 
 If $t \in \Q^{\ar(f)} \setminus \dom(f)$ we interpret this as $f(t) = +\infty$.

\begin{definition}
A cost function $f \colon {\mathbb Q}^n \rightarrow \Q \cup \{+\infty\}$ (viewed as a partial function)
is called \begin{itemize}
\item \emph{piecewise linear (PL)} if it is 
first-order definable over $\mathfrak S$, piecewise linear functions are sometimes called \emph{semilinear} functions;
\item \emph{piecewise linear homogeneous (PLH)} if it is 
first-order definable over $\mathfrak L$ (viewed as a partial function). 
\end{itemize}
A valued structure $\Gamma$ is called \emph{piecewise linear} (\emph{piecewise linear homogeneous}) if every cost function in $\Gamma$ is PL (or PLH, respectively). 
\end{definition}

\begin{definition}A relational structure  $\mathfrak A$ with domain $\Q$ and relational signature $\tau$  is called \emph{piecewise linear homogeneous (PLH)} if, for all $R\in\tau$, the interpretation $R^{\mathfrak A}$ is PLH. 

\end{definition}
Clearly, if $\g$  is a PLH valued structure then $\feas(\Gamma)$ is a  PLH relational structure.
Every PLH
cost function is also PL, since
all functions of the structure $\mathfrak L$ are clearly first-order definable in $\mathfrak S$. 
The cost functions in the valued structure from Example~\ref{expl:intro2} are PLH.
The cost functions in the valued structure from Example~\ref{expl:LP} are PL, but not PLH.

We would like to point out that already the class of 
PLH cost functions is very large.  In particular, it can be seen as a generalisation of the class of all cost functions over a finite domain. Indeed,
every VCSP for a valued structure over a finite domain is also a VCSP for a valued structure that is PLH. To see this, suppose that $f \colon D^d \to \Q \cup \{+\infty\}$ is such a cost function,
identifying $D$ with a subset of $\Q$ in an arbitrary way. Then the function $f' \colon \Q^d \to \Q \cup \{+\infty\}$
defined by $f'(x_1,\dots,x_n) := f(x_1,\dots,x_n)$
if $x_1,\dots,x_n \in D$, and
$f'(x_1,\dots,x_n) = +\infty$ otherwise, 
is PLH.

In the following, we prove that the VCSP for a PLH valued structure with finite signature is polynomial-time many-one equivalent to the VCSP for a valued structure over a suitable finite domain. This will be done in two steps: firstly we will show the reduction for the feasibility problem, i.e., we will prove such a reduction for PLH relational  structure; secondly we will extend this method to solve the {\it optimisation problem}, i.e., to find a solution of cost at most equal to the threshold given in the instance.

\section{Efficiently Sampling a PLH Relational Structure}	\label{sect:csp-tract}
\def\mfa{{\mathfrak A}}
Throughout this section, we fix a  PLH relational structure $\mathfrak A$ with finite signature $\tau$.
We present an efficient sampling algorithm for $\mathfrak A$.  Before defining the notion of sampling algorithm, we recall a well-known concept from universal algebra.

\begin{definition}
	Let $\mathfrak A$ and $\mathfrak B$ be (relational) structures with the same signature $\tau$ with domain $A$ and $B$ respectively. A \emph{homomorphism}  from $\mathfrak A$ to $\mathfrak B$ is a function $h \colon A\to B$ such that for every relation symbol $R \in \tau$ and tuple $a \in D^{\ar(R)}$ \[ R^{\mathfrak A}(a) \text{ implies } R^{\mathfrak B}\left(g(a)\right),\]
	where the function $g$ is applied componentwise. We say that $\mathfrak A$ is \emph{homomorphic} to $\mathfrak B$ and write $\mathfrak A \to \mathfrak B$ to indicate the existence of a  homomorphism from $\mathfrak A$ to $\mathfrak B$.
\end{definition}

We now formally introduce the notion of a sampling algorithm for a relational structure.

\begin{definition}
Let $\mathfrak C$ be a structure	with finite relational signature $\tau$.
A \emph{sampling algorithm for $\mathfrak C$}  takes as input a
positive integer $d$ and computes a finite-domain structure $\mathfrak D$  homomorphic to $\mathfrak C$
such that every finite conjunction of atomic $\tau$-formulas having at
most $d$ distinct free variables is
satisfiable in $\mathfrak C$ if, and only if, it is satisfiable in
$\mathfrak D$.
A sampling algorithm is called \emph{efficient} if its running time is
bounded by a polynomial in $d$.
We refer to the output of a sampling algorithm by calling it the \emph{sample}.
 \end{definition}

The definition above is a slight re-formulation of Definition~2.2
in~\cite{BodMacpheTha},
and it is easily seen to give the same results using the same proofs.
We decided to bound the number of variables
instead of the size of the conjunction of atomic~$\tau$-formulas because
this is more natural in our context. These two quantities are polynomially
related by the assumption that the signature~$\tau$ is finite.

We give a formal definition of
the {\it numerical data} in~$\mfa$, we will need it later on. By quantifier
elimination (Theorem~\ref{thm:qe}), each of the finitely many relations $R^\mfa$ for~$R\in\tau$ has a
a quantifier-free $\tau_0$-formula~$\phi_R$ over $\mathfrak L$. As
in the proof of Theorem~\ref{thm:qe}, we can assume that all formulas~$\phi_R$
are positive (namely contain no negations).
From now on, we will fix one such representation. Let
\def\atm#1{{\text{At}(#1)}}$\atm{\phi_R}$ denote the set of atomic subformulas of~$\phi_R$. Each atomic
$\tau_0$-formula is of the form $t_1 {<\atop =} t_2$, where $t_1$
and~$t_2$ are terms. We call an atomic formula \emph{trivial} if
it is  equivalent to~$\bot$ or~$\top$, and \emph{non-trivial} otherwise.
As in the proof of Theorem~\ref{thm:qe},  we make the assumption that atomic formulas
are of the form~$\bot$ or~$\top$ if they are trivial, and otherwise of the form  either $c_1\cdot 1 {<\atop =} x_i$, or $x_i {<\atop =} c_2\cdot 1$, or $c_1\cdot x_i {<\atop =} c_2\cdot x_j$ with constants~$c_1$ and~$c_2$  not both negative and where function symbols $c_i\cdot$ are never composed. This assumption can be made without loss of generality (again, see the proof of Theorem~\ref{thm:qe}).

Given a set of non-trivial atomic formulas~$\Phi$, we define
\[ H(\Phi) = \left\{ \frac{c_1}{c_2} \;\middle|\; t_1=c_1\cdot x_i, \; t_2
= c_2\cdot x_j,\;  \text{for some $t_1 {<\atop =} t_2$ in $\Phi$} \right\}
\]
\begin{align*}
K(\Phi) = &\left\{ \frac{c_2}{c_1} \;\middle|\; t_1=c_1\cdot x_i, \; t_2 =
c_2\cdot 1,\;  \text{for some $t_1 {<\atop =} t_2$ in $\Phi$} \right\} \\
\cup
&\left\{ \frac{c_1}{c_2} \;\middle|\; t_1=c_1\cdot 1, \; t_2 = c_2\cdot
x_j,\;  \text{for some $t_1 {<\atop =} t_2$ in $\Phi$} \right\}
\end{align*}

The efficient sampling algorithm for $\mfa$ works in two steps. First, the problem $\csp(\mfa)$ is transferred to the equivalent $\csp$ for a suitable structure $\mfa^\star$ that is an  extension of an expansion (or an expansion of an extension) of $\mfa$, and second we provide an efficient sampling algorithm for $\mfa^\star$, which is also an efficient sampling algorithm for $\mfa$.


\def\eps{{\boldsymbol\epsilon}}
\begin{definition}
The ordered $\Q$-vector space~$\Q^\star$ is defined as
\[ \Q^\star = \{ x + y{\boldsymbol\epsilon} \;\mid\; x,y\in\Q\} \]
where $\boldsymbol\epsilon$ is merely a formal device, namely
$x+y{\boldsymbol\epsilon}$ represents the pair~$(x,y)$.
We define addition and multiplication by a scalar componentwise
\begin{align*}
(x_1 + y_1\eps)\,+\,(x_2+y_2\eps) &= (x_1+x_2) + (y_1+y_2)\eps \\
c\cdot(x + y\eps) &= (cx) + (cy) \eps \text{.}
\end{align*}
Clearly, $\Q$ is embedded in~$\Q^\star$, by mapping every rational number $x$ into $x+0\eps$.
The order is induced by~$\Q$ extended with $0 < \boldsymbol\epsilon \ll 1$,
namely the lexicographical order of the components $x$ and~$y$
\[
(x_1 + y_1\eps)\,<\,(x_2+y_2\eps)\quad\text{iff}\quad
\begin{cases}
x_1 < x_2 \quad \text{or} \\
x_1 = x_2 \;\wedge\; y_1 < y_2\text{.}
\end{cases}
\]
\end{definition}

Any $\tau_0$-formula has an obvious interpretation in any
ordered $\Q$-vector space extending~$\Q$, and, in particular,
in~$\Q^\star$.

\begin{theorem}[\cite{VanDenDries}, Chapter~1, Remark~7.9]\label{equiv}
The first-order theory of ordered $\Q$-vector
spaces  in the signature~$\tau_0\cup\{+,-\}$ is complete.
\end{theorem}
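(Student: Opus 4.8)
The plan is to derive completeness from \emph{quantifier elimination} together with the existence of a common prime substructure — the standard textbook route, and a direct extension of the argument already carried out in Lemma~\ref{qel}. Write $T$ for the theory of ordered $\Q$-vector spaces in the signature $\tau_0\cup\{+,-\}$; besides the $\Q$-vector-space and linear-order axioms, $T$ contains the compatibility axioms ($x<y \rightarrow x+z<y+z$, and $0<x \rightarrow 0<c{\cdot}x$ for every positive $c\in\Q$) and the nondegeneracy axiom $0<1$ (which in particular forces every model to be nontrivial). Two facts suffice: that $T$ admits quantifier elimination, and that every model of $T$ contains one fixed substructure, isomorphic to $\Q$, into which it embeds preserving all atomic formulas.

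For quantifier elimination I would repeat the five-step procedure of Lemma~\ref{qel}, the only change being that in the present signature a term is an arbitrary $\Q$-linear form $\sum_i c_i x_i + c_0{\cdot}1$ rather than a single scaled variable. As before one reduces, via De~Morgan, trichotomy, and distributivity, to eliminating $\exists x$ from a conjunction of atomic formulas; an equality $c{\cdot}x + s = 0$ with $c\neq 0$ lets one substitute $x = (-1/c){\cdot}s$ and drop the quantifier, while in the remaining case all constraints on $x$ are strict inequalities, which split into lower bounds $x>u_i$ and upper bounds $x<v_j$, and $\exists x$ becomes $\bigwedge_{i,j}(u_i<v_j)$ together with the constraints not mentioning $x$. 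The essential point — the one place where working modulo $T$ rather than in the single structure $\mathfrak L$ matters — is that this last equivalence must hold in \emph{every} model of $T$: it needs only that the order is total, has no endpoints, and is dense, and all three follow from the axioms (from $0<1$, compatibility, and divisibility by $2$). Thus the Fourier--Motzkin step is uniform across models, and $T$ proves every formula equivalent to a quantifier-free one.

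Completeness then follows quickly, and this is really the crux once QE is in hand. Let $\mathfrak A_0$ be the substructure generated by the empty set: using the constant $1$ and the operations $+,-,c{\cdot}$, its universe is $\{q{\cdot}1 \mid q\in\Q\}$, and since $0<1$ the map $q \mapsto q{\cdot}1$ is an isomorphism from the ordered $\Q$-vector space $\Q$ onto $\mathfrak A_0$; moreover this is the only way $\Q$ can sit inside a model, so $\mathfrak A_0$ embeds into every model of $T$, preserving and reflecting all atomic $(\tau_0\cup\{+,-\})$-formulas. Now take any two models $\mathfrak M_1,\mathfrak M_2 \models T$ and any sentence $\varphi$. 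By QE, $\varphi$ is equivalent modulo $T$ to a quantifier-free sentence, i.e.\ to a Boolean combination of atomic sentences; each such atomic sentence involves only closed terms, which evaluate inside $\mathfrak A_0$, so its truth value is the same (namely its value in $\Q$) in both $\mathfrak M_1$ and $\mathfrak M_2$. Hence $\mathfrak M_1\models\varphi$ if and only if $\mathfrak M_2\models\varphi$, so all models of $T$ are elementarily equivalent and $T$ is complete. The main obstacle is not this final step but securing the uniform quantifier elimination above: one must verify that density and the absence of endpoints are genuine consequences of the axioms, so that Fourier--Motzkin elimination stays correct in every model of $T$ and not merely in $\Q$ itself.
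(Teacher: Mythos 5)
Your proof is correct. The paper itself gives no proof of Theorem~\ref{equiv} --- it is quoted from van den Dries --- and your argument (quantifier elimination via Fourier--Motzkin for the theory of ordered $\Q$-vector spaces with the nondegeneracy axiom $0<1$, followed by evaluating quantifier-free sentences in the prime substructure $\{q\cdot 1 \mid q\in\Q\}$ common to all models) is precisely the standard proof behind that citation, and a faithful extension of the paper's own Lemma~\ref{qel}; in particular, you correctly identify that $0<1$ must be part of the theory (otherwise completeness fails) and that the Fourier--Motzkin step needs only totality, density, and absence of endpoints, which are consequences of the axioms.
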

Let $\mfa^\star$ be the $\tau$-structure obtained by
interpreting each relation symbol $R\in\tau$ by the
relation~$R^{\mfa^\star}$ defined on~$\Q^\star$ by the same (quantifier-free)
$\tau_0$-formula~$\phi_R$ that defines $R^\mfa$ over~$\Q$ (the choice of equivalent $\tau_0$-formulas is immaterial).
 Theorem~\ref{equiv} has the following immediate consequence.

\begin{corollary}\label{transfer}	
Let $\phi$ be a $\tau$-formula. Then $\phi$ is satisfiable in $\mfa$ if, and only if,~$\phi$ is satisfiable in $\mfa^\star$.
\end{corollary}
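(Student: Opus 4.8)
The plan is to reduce satisfiability of a $\tau$-formula in $\mfa$ or in $\mfa^\star$ to the truth of a single sentence in the common theory of ordered $\Q$-vector spaces, and then to invoke its completeness (Theorem~\ref{equiv}). The point is that all the mathematical content has been front-loaded: the relations $R^\mfa$ and $R^{\mfa^\star}$ are, by construction, interpreted by the very same fixed quantifier-free $\tau_0$-formula $\phi_R$, so the two structures differ only in their underlying ordered $\Q$-vector space, and those two vector spaces are elementarily equivalent.

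First I would translate the given $\tau$-formula $\phi$ into a $\tau_0$-formula $\phi'$ by replacing every atomic subformula $R(x_{i_1},\dots,x_{i_n})$ by $\phi_R(x_{i_1},\dots,x_{i_n})$, that is, by the defining formula with its free variables renamed to the corresponding $x_{i_j}$. Since $\tau$ is relational, each relation symbol is applied to variables, so the substitution is purely mechanical; and since each $\phi_R$ is quantifier-free, no bound variable of $\phi$ can be captured and $\phi'$ retains exactly the quantifier prefix and the free variables of $\phi$. Because $R^\mfa$ and $R^{\mfa^\star}$ are interpreted by $\phi_R$ over $\Q$ and over $\Q^\star$ respectively, and because $\mfa$ and $\mathfrak L$ share the domain $\Q$ while $\mfa^\star$ and $\Q^\star$ share the domain $\Q^\star$, the formulas $\phi$ and $\phi'$ have identical truth values under every assignment --- $\phi$ over $\mfa$ agreeing with $\phi'$ over $\mathfrak L$, and $\phi$ over $\mfa^\star$ agreeing with $\phi'$ over $\Q^\star$. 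Hence $\phi$ is satisfiable in $\mfa$ if, and only if, the existential closure $\sigma := \exists \bar x\,\phi'$ holds in $\mathfrak L$, and $\phi$ is satisfiable in $\mfa^\star$ if, and only if, $\sigma$ holds in $\Q^\star$.

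It then remains to observe that $\sigma$ is a $\tau_0$-sentence, and therefore also a sentence over the larger signature $\tau_0 \cup \{+,-\}$. Both $\mathfrak L$ (that is, $\Q$ with its usual $\Q$-vector space structure and order) and $\Q^\star$ are ordered $\Q$-vector spaces, hence both are models of the theory figuring in Theorem~\ref{equiv}. As that theory is complete, the two structures satisfy exactly the same sentences, so in particular $\mathfrak L \models \sigma$ if, and only if, $\Q^\star \models \sigma$. Chaining this equivalence with the two equivalences established above yields the claim. The only care needed is the bookkeeping of the translation step --- confirming that substituting the quantifier-free $\phi_R$ introduces no variable capture and that the quantifiers of $\phi$ range over the same domain in $\mfa$ and $\mathfrak L$ (resp.\ $\mfa^\star$ and $\Q^\star$), both of which hold by construction; there is no genuine obstacle here, and the substantive work is done entirely by Theorem~\ref{equiv}.
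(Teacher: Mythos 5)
Your proof is correct and takes exactly the route the paper intends: the paper states Corollary~\ref{transfer} as an ``immediate consequence'' of Theorem~\ref{equiv} without spelling out details, and your argument is precisely that implicit proof made explicit --- replace each atom $R(\bar x)$ by the defining quantifier-free formula $\phi_R$, pass to the existential closure, and note that $\mathfrak L$ and $\Q^\star$ (expanded by $+,-$) are both models of the complete theory of ordered $\Q$-vector spaces, hence agree on this sentence. The bookkeeping you flag (no variable capture since $\phi_R$ is quantifier-free, identical domains for the quantifiers) is handled correctly, so there is nothing to add.
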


Our goal is to prove the following theorem.

\begin{theorem}\label{thm:sampling}
There is an efficient sampling algorithm for $\mfa^\star$.
\end{theorem}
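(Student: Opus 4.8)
The plan is to realise $\mfa^\star$ directly by a finite \emph{substructure}, so that the required homomorphism is merely the inclusion map and the entire content of the theorem collapses to a single ``rounding'' statement. First I would reduce an arbitrary conjunction of atomic $\tau$-formulas in at most $d$ variables to a conjunction of atomic $\tau_0$-formulas: replace each atom $R(x_{i_1},\dots,x_{i_k})$ by its fixed positive quantifier-free definition $\phi_R$, put the result in disjunctive normal form, and recall that satisfiability distributes over the disjuncts. Since the relation symbols and the formulas $\phi_R$ are fixed together with $\mfa$, every atom that can occur is, after the normalisation fixed above, of one of the shapes $c\cdot 1 \mathbin{\sigma} x_i$, $x_i \mathbin{\sigma} c\cdot 1$, or $c_1\cdot x_i \mathbin{\sigma} c_2\cdot x_j$ with $\sigma\in\{<,=\}$ and coefficients from a fixed finite set. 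Thus the only ``slopes'' relating two variables come from the fixed finite set $H$ and the only constant thresholds from the fixed finite set $K$ defined above.

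Building on this, I would define the sample domain explicitly. Let $P_d$ be the set of all products of at most $d$ factors taken from $H\cup H^{-1}$, and put
\[
 D_d := \{0\} \cup \bigl\{\, \pm\, k\cdot p,\ \pm\, k\cdot p + \eps,\ \pm\, k\cdot p - \eps \;\bigm|\; k\in K\cup\{1\},\ p\in P_d \,\bigr\}\subseteq \Q^\star ,
\]
and let the sample $\mathfrak D$ be $\mfa^\star$ restricted to $D_d$. Because $H$ and $K$ are fixed finite sets, say $\lvert H\cup H^{-1}\rvert=2m$, the number of multisets of size at most $d$ over $H\cup H^{-1}$ is $\binom{d+2m}{2m}=O(d^{2m})$, so $\lvert D_d\rvert$ is polynomial in $d$ and $D_d$ is computable in time polynomial in $d$; this is exactly what makes the sampling algorithm efficient. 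The infinitesimal perturbations by $\pm\eps$ are included so that strict inequalities forced over $\Q^\star$ can be realised inside the finite set.

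The heart of the argument is the completeness direction: every conjunction of atomic $\tau_0$-formulas in $x_1,\dots,x_d$ that is satisfiable in $\mfa^\star$ already has a solution in $D_d$. I would prove this by first reading off, from a given solution, the sign pattern of the variables (positive, negative, or zero); zero variables are substituted away, and a negative variable $x_i$ is replaced by $-\lvert x_i\rvert$, which reduces every constraint to one among the strictly positive variables after a routine case analysis on the signs of the coefficients. On the positive variables the constraints $x_i \mathbin{\sigma} h\,x_j$ and $x_i\mathbin{\sigma} a$ become, after passing to logarithms, a system of difference constraints $\log x_i - \log x_j \mathbin{\sigma} \log h$ and $\log x_i \mathbin{\sigma} \log a$. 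Such a system, when feasible, admits a shortest-path (Bellman--Ford) solution in which the value assigned to each variable is a sum of at most $d$ of the offsets $\log h$ $(h\in H)$ grounded at a single offset $\log a$ $(a\in K)$; translating back, this assigns to $x_i$ a value of the form $k\cdot p$ with $k\in K\cup\{1\}$ and $p\in P_d$, which lies in $D_d$. Strict inequalities are then repaired by nudging the relevant coordinates by $\pm\eps$, using that $0<\eps\ll 1$ and that $\eps$ is infinitesimal relative to every nonzero element of $\Q$; this is precisely the role of the extension $\Q^\star$ and the reason the sample is taken there rather than over $\Q$.

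The remaining direction is immediate: since $\mathfrak D$ is the restriction of $\mfa^\star$ to $D_d$, the inclusion $D_d\hookrightarrow\Q^\star$ is a homomorphism $\mathfrak D\to\mfa^\star$, and homomorphisms preserve satisfiability of conjunctions of atomic formulas, so any such conjunction satisfiable in $\mathfrak D$ is satisfiable in $\mfa^\star$. I expect the main obstacle to be the completeness direction, and within it the bookkeeping that verifies that the rounded, $\eps$-perturbed assignment simultaneously satisfies all strict and non-strict constraints across the different sign classes; the sign/coefficient case analysis and the correct use of the infinitesimal to separate values that the original solution kept merely close are the delicate points, whereas the polynomial bound on $\lvert D_d\rvert$ and the homomorphism are routine.
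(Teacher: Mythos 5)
Your overall architecture is the same as the paper's (a finite substructure of $\mathfrak A^\star$ whose elements are products $k\prod h^{e}$ of boundedly many ratios, plus infinitesimal perturbations to restore strictness, with the backward direction free because the sample is a substructure). However, your perturbation scheme has a genuine gap: you allow only \emph{three} infinitesimal levels per base value, namely $kp-\boldsymbol\epsilon$, $kp$, $kp+\boldsymbol\epsilon$, while consistent tie-breaking may require up to $d$ distinct levels. Concretely, take the PLH structure with relations $R_1$ and $R_<$ defined by the positive atomic formulas $x=1$ and $x<y$. Then $H=\{1\}$ and $K=\{1\}$, so $P_d=\{1\}$ and your domain is $D_d=\{0,\pm1,\pm1\pm\boldsymbol\epsilon\}$, a set of seven elements for every $d$. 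The instance $R_1(x_1)\wedge R_<(x_1,x_2)\wedge R_<(x_2,x_3)$ is satisfiable over $\Q$ (take $1<2<3$), hence over $\Q^\star$, but in $D_d$ the only element greater than $1$ is $1+\boldsymbol\epsilon$, so $x_2$ and $x_3$ cannot both be accommodated; your sample therefore violates the defining property of a sampling algorithm. This is exactly why the paper's set $C^\star_{\Phi,d}$ takes the form $\{x+nx\boldsymbol\epsilon \mid x\in C_{\Phi,d},\ n\in\mathbb Z,\ -d\le n\le d\}$: a chain of strict inequalities among up to $d$ variables whose closure-system (rounded) values all coincide needs up to $d$ separate infinitesimal layers, and $2d+1$ layers still keep the domain polynomial in $d$.

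Two further points, subordinate to the main one. First, your perturbation is additive and unscaled, whereas the paper's is multiplicative ($c_i(1+n_i\boldsymbol\epsilon)$); this matters because a tie in a scaled constraint $k\cdot x_i<h\cdot x_j$ (i.e.\ $kc_i=hc_j$) reduces under multiplicative perturbation cleanly to $n_i<n_j$, but under additive perturbation to $k\delta_i<h\delta_j$, entangling the coefficients with the tie-breaking integers. Second, you never specify \emph{how} the nudge directions are chosen so that all strict and non-strict constraints are repaired simultaneously; in the paper this is the key step of Lemma~\ref{pippo}, where the integers $n_i$ are read off from the original solution $b$ and the rounded solution $c$ by sorting the ratios $b_i/c_i$ (together with $1$), and this ordering information is precisely what makes the simultaneous repair go through. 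Your Bellman--Ford description of the rounding of the closure system is fine and is essentially the paper's connected-component propagation in logarithmic coordinates, so the proposal becomes correct if you replace $\pm\boldsymbol\epsilon$ by the scaled family $kp(1+n\boldsymbol\epsilon)$, $-d\le n\le d$, and add the ratio-sorting argument.
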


Before giving the proof of Theorem \ref{thm:sampling} we present some preliminary lemmas in which we explicitly define the domain of the wanted sample. 
Let $\phi$ be an atomic $\tau_0$-formula. We write $\bar \phi$ for the formula $t_1 \leq t_2$ if $\phi$ is of the form $t_1 < t_2$, and for the formula $t_1 = t_2$ if $\phi$ is of the form $t_1 = t_2$. We call $\bar \phi$ the \emph{closure} of the formula $\phi.$
First, we investigate the positive solutions to the closures of finitely many atomic $\tau_0$-formulas. Then, in a second step that builds on the first one, we investigate the solutions to  finitely many  atomic $\tau_0$-formulas.

\begin{lemma}\label{closure}
	Let $\Phi$ be a finite set of atomic $\tau_0$-formulas having free variables
	in~$\{v_1,\dotsc, v_d\}$. Assume that $\bar \Phi:=\bigcup_{\phi\in \Phi} \bar\phi$
	has a simultaneous solution~$(x_1,\dotsc, x_d)\in\Q_{>0}$ in positive
	numbers. Then $\bar \Phi$ has a solution taking values in the
	set $C_{\Phi,d}\subset \Q$ defined as follows
	\[
	C_{\Phi,d} = 
	\left\{ |k|\prod_{i=1}^s |h_i| ^{e_i} \,\middle|\, k\in K(\Phi) ,\;
	e_1,\dotsc, e_s \in {\mathbb Z},\; \sum_{r=1}^s \lvert e_r \rvert < d
	\right\}
	\]
	where $h_1,\dotsc, h_s$ is an enumeration of the (finitely many) elements
	of~$H(\Phi)$.
\end{lemma}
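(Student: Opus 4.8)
The plan is to pass to logarithmic coordinates, where the homogeneous (multiplicative) constraints become additive difference constraints, and then to exploit the shortest-path structure of feasible systems of difference constraints. Since we only need to satisfy the \emph{closures} $\bar\Phi$, every constraint is non-strict, which is what makes the shortest-path solution directly usable. Writing $u_i := \log x_i$ for a hypothetical positive solution, each atomic closure translates as follows. A two-variable closure $c_1 \cdot x_i \leq c_2 \cdot x_j$ (or $=$) with $x_i,x_j>0$ is, by the ``not both negative'' normalisation together with positivity, either trivially true, or infeasible for positive values (hence excluded, since a positive solution is assumed), or has $c_1,c_2>0$ and becomes $u_i - u_j \leq \log|c_2/c_1| = -\log|h|$ (resp. $=$) for the corresponding $h\in H(\Phi)$. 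A one-variable closure $x_i \leq c$ (or $c\leq x_i$, or $=$) becomes a bound $u_i \leq \log|k|$, $u_i \geq \log|k|$, or $u_i = \log|k|$ with $k\in K(\Phi)$, and only positive thresholds survive under positivity. Introducing an auxiliary coordinate $u_0 := \log 1 = 0$ for the constant, all surviving constraints have the difference form $u_p - u_q \leq w$ with $w \in \{\pm\log|h_r|\}\cup\{\pm\log|k|\}$.

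This is a system of difference constraints on the vertex set $\{0,1,\dots,d\}$: I encode each constraint $u_p - u_q \leq w$ as a directed edge $q\to p$ of weight $w$, so that the threshold constraints are exactly the edges incident to $0$ (an upper bound $u_i \leq \log|k|$ gives $0\to i$, a lower bound gives $i\to 0$) and the ratio constraints are the edges among $\{1,\dots,d\}$. The given positive solution is a feasible point, which is equivalent to the absence of negative-weight cycles in the resulting graph $G$. I then define a new candidate solution by shortest-path distances from the constant node, $U_i := \delta_G(0,i)$, with $U_0 = \delta_G(0,0) = 0$. By the triangle inequality this assignment satisfies every constraint $U_p - U_q \leq w$ (edge $q\to p$), hence all of $\bar\Phi$, including the lower bounds. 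Because $G$ has no negative cycle, each finite $\delta_G(0,i)$ is attained by a \emph{simple} path; such a path leaves $0$ through a single threshold edge (weight $\log|k|$) and then continues through at most $d-1$ ratio edges (each of weight $\pm\log|h_r|$), since it visits at most $d+1$ vertices and meets $0$ only once. Exponentiating, $y_i := \exp(U_i) = |k|\prod_r |h_r|^{e_r}$ with $\sum_r |e_r| \leq d-1 < d$, that is, $y_i \in C_{\Phi,d}$.

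The remaining, and main, difficulty is that $\delta_G(0,i)$ may be infinite: a variable whose connected component (via ratio edges) is not reached from the constant node by a directed upper-bound chain receives no finite shortest-path value, so the construction above does not assign it a value. Such a component carries only homogeneous constraints among its own variables (together with possibly some lower bounds, which do not aid reachability from $0$), hence it is invariant under simultaneous multiplicative scaling. I would handle it by fixing one of its variables to an available threshold value $|k_0|$ with $k_0 \in K(\Phi)$ and propagating the ratio constraints along a spanning tree of the component, whose depth is at most $d-1$; this again yields values of the form $|k_0|\prod_r |h_r|^{e_r}$ with $\sum_r |e_r| < d$. The corner case $K(\Phi) = \varnothing$, in which $C_{\Phi,d}$ is empty, can arise only when no one-variable atoms are present and must be excluded or absorbed by the convention $1\in K(\Phi)$; I expect this is where the statement is tacitly anchored. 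The careful points are thus the bookkeeping of edge directions for mixed upper bounds, lower bounds, and equalities, and the verification that the spanning-tree propagation on free components remains jointly feasible with the shortest-path values on the anchored components.
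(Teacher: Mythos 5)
Your route is genuinely different from the paper's: you pass to logarithmic coordinates and use shortest-path distances in a difference-constraint graph, whereas the paper first replaces a maximal subset of the inequalities of $\bar\Phi$ by equalities (showing by a segment/boundary argument that every positive solution of the resulting equality system automatically satisfies the remaining inequalities) and then solves the pure equality system by anchoring each connected component and propagating, which is well defined and automatically consistent precisely because only equalities remain. Your handling of the part of the graph reachable from the constant node $0$ is correct and elegant: feasibility of $\bar\Phi$ rules out negative cycles, the triangle inequality shows the distance labels satisfy all constraints among reachable nodes, and simple shortest paths use one threshold edge and at most $d-1$ ratio edges, giving values $|k|\prod_r|h_r|^{e_r}$ with $\sum_r|e_r|<d$. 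The problem is the part you yourself identify as the main difficulty: your treatment of the nodes with infinite distance would fail as written.

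Two concrete failures. First, reachability from $0$ is a property of \emph{nodes}, not of components: inequality edges are directed, so a component can be reached and still contain nodes with $\delta_G(0,i)=+\infty$. For $\Phi=\{2x_1\le x_2,\; x_1=8,\; 2\le x_2\}$ the component $\{x_1,x_2\}$ is reached through $x_1$, but $x_2$ has no incoming edge at all, so your dichotomy covers it by neither case. Second, for unreachable nodes the remedy ``fix one variable to an available threshold value $|k_0|$ and propagate along a spanning tree'' is unsound: (i) an unreachable node can be bounded below, through mixed edges, by the values you already fixed on \emph{reachable} nodes, and the forced value can be composite --- in the example above $x_2\ge 2x_1=16=8\cdot2$, so no threshold value can serve as anchor; (ii) tight propagation along a spanning tree can violate non-tree inequalities of a feasible system --- with $1\le x_1$, $x_1\le x_2$, $2x_1\le x_3$, $x_3\le x_2$, anchoring $x_1=1$ and propagating tight along the tree edges $\{x_1,x_2\},\{x_1,x_3\}$ yields $x_3=2>1=x_2$ (and the depth bound you invoke for the exponent budget only makes sense for tight propagation). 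What the unreachable part actually needs is a second, symmetric shortest-path phase: assign each remaining node $q$ the \emph{least} value compatible with the nodes already fixed, namely $W_q=\max_p\bigl(U_p-\delta(q,p)\bigr)$ over fixed nodes $p$, where $\delta(q,p)$ is the shortest-path distance from $q$ to $p$ through unfixed nodes. Since no edge enters the unreachable set from the fixed set, all deferred constraints are of this lower-bound type; the triangle inequality again gives consistency both internally and across mixed edges, and the witnessing path (the $q$-to-$p$ segment concatenated with the $0$-to-$p$ segment, which meet only at $p$) is simple, with one threshold edge and at most $d-1$ ratio edges, so the budget survives. Without this second phase, or something equivalent, the proof is incomplete exactly where the directedness of the inequalities matters. (The corner case $K(\Phi)=\emptyset$ you flag is real, but the paper's own proof shares it --- it anchors free components at the value $1$ --- so it does not count against you.)
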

\begin{proof}
	Let $\gamma\le\beta$ be maximal such that there are $\Psi_1,\Psi_2,\Psi_3$ with
	\begin{align*}
	\bar \Phi \mkern 7mu &= \{s_1 = s'_1, \dotsc, s_\alpha = s'_\alpha\} \cup
	\{ t_1 \le t'_1, \dotsc, t_\beta \le t'_\beta \}\\
	\Psi_1 &= \{s_1 = s'_1, \dotsc, s_\alpha = s'_\alpha\}\\
	\Psi_2 &= \{ t_1 = t'_1, \dotsc, t_\gamma=t'_\gamma \}\\
	\Psi_3 &= \{t_{\gamma+1} \le t'_{\gamma+1},\dotsc, t_\beta \le t'_\beta \}\text{, }
	\end{align*}
	where $s_i, s'_i, t_j, t'_j$ are $\tau_0$-terms for all $i$, $j$, and $\Psi_1 \cup \Psi_2 \cup \Psi_3$ is satisfiable in positive numbers.
	Clearly the space of positive solutions of~$\Psi_1\cup \Psi_2$ must be contained in
	that of~$\Psi_3$. In fact, by construction, they intersect: consider any straight
	line segment connecting a solution of $\Psi_1\cup \Psi_2\cup \Psi_3$ and a solution of
	$\Psi_1\cup \Psi_2$ not satisfying~$\Psi_3$, on this segment there must be a solution of
	$\Psi_1\cup \Psi_2\cup \Psi_3$ lying on the boundary of one of the inequalities of~$\Psi_3$,
	contradicting the maximality of~$\gamma$.
	By the last observation it suffices to prove that there is a solution
	of~$\Psi_1\cup \Psi_2$ taking values in~$C_{\Phi,d}$. Put an edge between two variables $x_i$ and~$x_j$ when
	they appear in the same formula of~$\Psi_1\cup \Psi_2$. For each connected component
	of the graph thus defined, either
	it contains at least one variable~$v_i$ such that there is
	a constraint of the form~$h\cdot v_i = k\cdot 1$, 
	or all constraints are of the form~$h\cdot v_i = h'\cdot v_j$.
	In the first case assign $v_i=\frac{k}{h}$, in the second assign one of the
	variables~$v_i$ arbitrarily to~$1$, then, in any case,
	since the diameter of the connected component
	is smaller than~$d$, all variables in this connected component are forced to take values in~$C_{\Phi,d}$ by
	simple propagation of~$v_i$.
\end{proof}

\begin{lemma}\label{pippo}
	Let $\Phi$ be a finite set of atomic $\tau_0$-formulas having free variables
	in~$\{v_1,\dotsc, v_d\}$. Assume that the formulas in~$\Phi$ are simultaneously
	satisfiable in $\Q$. Then they are simultaneously satisfiable in
	\[D_{\Phi,d}:=-C^{\star}_{\Phi,d} \cup \{0\} \cup C^{\star}_{\Phi,d}\subseteq\Q^\star\] where
	\[
	C^{\star}_{\Phi,d} = \{x + nx\boldsymbol\epsilon \;|\; x \in C_{\Phi,d},\; n\in\mathbb Z,\; -d\le
	n\le d\}\subseteq \Q^\star
	\]
	$C_{\Phi,d}$ is defined as in Lemma~\ref{closure}, and $-C^{\star}_{\Phi,d}$
	denotes the set~$\{-x\;|\;x\in C^{\star}_{\Phi,d}\}$.
\end{lemma}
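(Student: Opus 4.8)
The plan is to bootstrap Lemma~\ref{closure}, which produces only \emph{positive} solutions of the \emph{non-strict closures}, into a genuine solution of the original (possibly strict, possibly signed) system inside $\Q^\star$. First I use the hypothesis to fix a rational solution $a=(a_1,\dots,a_d)\in\Q^d$ of $\Phi$, and I split the variables according to $\sgn(a_i)$: I set $w_i:=0$ for every $i$ with $a_i=0$, and for the remaining indices I substitute $v_i\mapsto\sgn(a_i)\,w_i$ with a fresh positive variable $w_i$. This turns $\Phi$ into a system $\Phi'$ in positive variables, solved in $\Q_{>0}$ by $p_i:=\lvert a_i\rvert$. The substitution only multiplies the coefficients of the atomic formulas by $\pm1$, so the absolute values of all the ratios defining $H$ and $K$ are preserved; hence $C_{\Phi',d}\subseteq C_{\Phi,d}$. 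The atomic formulas that involve a variable set to $0$ reduce to comparisons between a constant and a single (possibly zero) variable, whose truth in $\Q^\star$ depends only on the sign of the non-infinitesimal part; they are therefore satisfied automatically by any assignment respecting the chosen signs, and I may discard them from $\Phi'$.

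Next I extract the magnitudes. Applying Lemma~\ref{closure} to the closure $\bar\Phi'$, which is satisfiable in $\Q_{>0}$ by $p$, yields a positive solution $b=(b_i)$ of $\bar\Phi'$ with every $b_i\in C_{\Phi',d}\subseteq C_{\Phi,d}$. I then look for the final solution in the multiplicative form $w_i=b_i\,(1+n_i\boldsymbol\epsilon)$ with integers $n_i$ to be chosen in $[-d,d]$; this lands $\sgn(a_i)\,b_i(1+n_i\boldsymbol\epsilon)$ in $\pm C^{\star}_{\Phi,d}$ and the zero variables in $\{0\}$, so the whole assignment lies in $D_{\Phi,d}$. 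It remains to pick the $n_i$ so that the perturbed $b$ satisfies $\Phi'$. For an atomic formula $\tilde c_1 w_i<\tilde c_2 w_j$ of $\Phi'$ (with $\tilde c_1,\tilde c_2\neq0$) the closure gives $\tilde c_1 b_i\le\tilde c_2 b_j$; if this is strict the non-infinitesimal parts already settle it, while if it is \emph{tight}, $\tilde c_1 b_i=\tilde c_2 b_j=:V\neq0$, the inequality over $\Q^\star$ reduces to $V n_i<V n_j$. The key observation is that the reference solution $p$ dictates exactly this order: dividing the strict inequality $\tilde c_1 p_i<\tilde c_2 p_j$ by $V$ gives $p_i/b_i<p_j/b_j$ when $V>0$ and $p_i/b_i>p_j/b_j$ when $V<0$, i.e.\ precisely the order required of $n_i,n_j$. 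The analogous computation for an equality of $\Phi'$ forces both $n_i=n_j$ and $p_i/b_i=p_j/b_j$.

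I therefore set $r_i:=p_i/b_i>0$, together with $r_0:=1$ for the fictitious constant node of value $1$, which is never perturbed (so $n_0=0$), and I let $n_i:=\rho(r_i)$, where $\rho$ is any strictly increasing map from the finite set of occurring $r$-values to $\mathbb Z$ with $\rho(1)=0$. By the previous paragraph every tight strict constraint and every equality of $\Phi'$ is then satisfied, the strict but non-tight ones hold for free, and the constraints touching a zero variable were already dealt with. Since at most $d$ non-zero variables occur, there are at most $d+1$ distinct values of $r$, one of them being $1\mapsto0$; thus $\rho$ can be chosen with image in $\{-d,\dots,d\}$, giving $n_i\in[-d,d]$ as required by the definition of $C^{\star}_{\Phi,d}$. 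The constructed assignment is then a genuine solution of $\Phi$ in $D_{\Phi,d}$.

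The main obstacle is the middle step: reconciling the order imposed on the infinitesimal exponents $n_i$ by the tight strict constraints with the order imposed by the equalities, and realising it with \emph{bounded integer} exponents. The device that makes this work is to read that order off the ratios $p_i/b_i$ between the guaranteed strict solution $p$ and the snapped closure solution $b$. The only genuinely delicate point is the bookkeeping of the sign of $V=\tilde c_1 b_i$: it flips the direction of the inequality required of the $n_i$ exactly when it flips the direction obtained from $p$, so the two always agree, and a single strictly increasing $\rho$ fixes all constraints simultaneously.
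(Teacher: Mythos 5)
Your proof is correct and follows essentially the same route as the paper's: split variables by the sign of a fixed rational solution to reduce to a positive system, invoke Lemma~\ref{closure} to snap the magnitudes into $C_{\Phi,d}$, and then perturb by $b_i \mapsto b_i(1+n_i\boldsymbol\epsilon)$ with integers $n_i\in[-d,d]$ chosen by sorting the ratios (reference solution over closure solution) relative to $1$ — exactly the paper's construction of the $n_i$ from $\{1\}\cup\{b_i/c_i\}$. The only cosmetic differences are that you discard the constraints touching zero variables rather than carrying them along, and you track the sign of the tight value $V$ explicitly where the paper instead relies on its normalisation that the two coefficients are not both negative; neither changes the substance of the argument.
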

\begin{proof}
	First we fix a solution $v_i=a_i$ for $i=1\dotsc d$ of~$\Phi$. In general,
	some of the values~$a_i$ will be positive, some~$0$, and some negative: we
	look for a new solution $z_1,\dotsc, z_d\in D_{\Phi,d}$ such that $z_i$ is
	positive, respectively~$0$ or negative, if and only if $a_i$ is.
	To this aim we rewrite the formulas in~$\Phi$ replacing each variable $v_i$
	with either~$y_i$, or~$0$ (formally $0\cdot 1$), or~$-y_i$ (formally
	$-1\cdot y_i$). We call~${\Phi^+}$ the new set of formulas, which, by
	construction, is satisfiable in positive numbers $y_i = b_i$. To establish
	the lemma, it suffices to find a solution of~${\Phi^+}$ taking values
	in~$C^{\star}_{\Phi,d}$.
	
	By Lemma~\ref{closure}, we have an assignment
	$y_i = c_i$ of values $c_1,\dotsc, c_d$ in $C_{{\Phi^+},d} \subseteq C_{\Phi,d}$ that satisfies
	simultaneously all formulas~$\bar\phi$ with~$\phi\in {\Phi^+}$.
	Let $ n_1,\dotsc, n_d \in \{n \in \mathbb Z  \mid -d \le n \le d\}$ be such that for all
	$i,j$
	\begin{align*}
	n_i < n_j \quad &\text{if and only if} \quad {\textstyle \frac{b_i}{c_i} <\frac{ b_j}{c_j}}\\
	0 < n_i \quad &\text{if and only if} \quad {\textstyle 1 <\frac{b_i}{c_i}}\\
	n_i < 0 \quad &\text{if and only if} \quad {\textstyle \frac{b_i}{c_i} < 1}.
	\end{align*}
	Such numbers exist: simply sort the set
	$\{1\}\cup\big\{\frac{b_i }{ c_i}\;|\;i=1,\dotsc,d\,\big\}$ and consider the positions in
	the sorted sequence counting from that of~$1$.
	We claim that the assignment
	$y_i = c_i + n_ic_i\boldsymbol\epsilon\in\Q^\star$ satisfies all formulas
	of~${\Phi^+}$.
	To check this, we consider the different cases for atomic formulas
	\begin{itemize}
		\item $k\cdot y_i < h \cdot y_j$: if $k c_i < h c_j$ this is
		obviously satisfied. Otherwise $k c_i = h c_j$, in this case $k$ and~$h$
		are positive and the
		constraint
		\[kc_i+kn_ic_i\boldsymbol\epsilon < hc_j+hn_jc_j\boldsymbol\epsilon\]
		is equivalent to~$n_i<n_j$. This, in turn, is equivalent by construction
		to $\frac{b_i}{c_i} < \frac{b_j}{c_j}$ which we get by observing that $b_ihc_j = b_ikc_i <
		b_jhc_i$.
		\item $k\cdot y_i = h \cdot y_j$: obviously $k b_i = h b_j$ and $k c_i = h
		c_j$, therefore $\frac{b_i}{c_i} = \frac{b_j}{c_j}$, and, as a consequence, also $n_i=n_j$
		from which the statement.
		\item $k\cdot 1 < h \cdot y_j$: similarly to the first case,
		if $k<h c_j$ this is immediate. Otherwise $k = hc_j$, so $k$ and~$h$ are
		positive, the constraint
		\[k\cdot 1 < hc_j+hn_jc_j\boldsymbol\epsilon\]
		is equivalent to
		$0 < n_j$, in other words $1<\frac{b_j}{c_j}$, which follows observing that
		$hc_j = k < hb_j$.
		\item $k\cdot y_i < h \cdot 1$: as the case above.
		\item $k\cdot 1 = h \cdot y_j$: obviously $k\cdot 1 = h b_j = h c_j$,
		therefore $\frac{b_j}{c_j} = 1$, so $n_j=0$ and the case follows.
		\item $k\cdot y_i = h \cdot 1$: as the case above.
	\end{itemize}
\end{proof}

\begin{proof}[of Theorem~\ref{thm:sampling}]
On input $d$, the sampling algorithm produces the finite
substructure~$\mfa^\star_{\atm{\tau},d}$ of~$\mfa^\star$ having
domain~$D_{\atm{\tau},d}$ where
$\atm{\tau}:=\bigcup_{R\in\tau}\atm{\phi_R}$, that is, the $\tau$-structure with
domain~$D_{\atm{\tau},d}$ in which each relation symbol~$R\in\tau$
denotes the restriction of~$R^{\mfa^{\star}}$ to~$D_{\atm{\tau},d}$. It is
immediate to observe that this structure can be computed in polynomial time  in~$d$.
Since $\mfa^\star_{\atm{\tau},d}$ is a substructure of~$\mfa^\star$, it is
clear that if an instance is satisfiable in~$\mfa^\star_{\atm{\tau},d}$,
then it is a fortiori satisfiable in~$\mfa^\star$.

The vice versa follows from Lemma~\ref{pippo}. In fact, consider a set
$\Psi$ of atomic $\tau$-formulas having free variables $x_1,\dotsc, x_d$. Assume
that $\Psi$ is satisfied in~$\mfa^\star$ by one assignment~$x_i=a_i$ for~$i\in \{1,\dotsc, d\}$. For
each~$R\in \Psi$ let $\Phi_R \subset \atm{\phi_R}$ be the set of atomic subformulas
of~$\phi_R$ which are satisfied by our assignment~$a_i$. Clearly the atomic
$\tau_0$-formulas $\Phi:=\bigcup_{R\in \Psi} \Phi_R$ are simultaneously satisfiable.
Remembering that the formulas~$\phi_R$ have no negations by construction,
it is obvious that any simultaneous solution of~$\Phi$ must also
satisfy~$\Psi$.
By Lemma~\ref{pippo}, $\Phi$ has a solution in the set~$D_{\Phi,d}$ defined
therein. We can observe
that $C_{\Phi,d}\subset C_{\atm{\tau},d}$,
hence $D_{\Phi,d}\subset D_{\atm{\tau},d}$ and the claim follows.
\end{proof}

From Theorem~\ref{thm:sampling} it easily follows that there is an efficient sampling algorithm for $\mfa$.

\begin{corollary}\label{cor:PLHsamplingcsp}
There exists an efficient sampling algorithm for $\mfa$. 
\end{corollary}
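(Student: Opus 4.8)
The plan is to show that the sampling algorithm for $\mfa^\star$ obtained in Theorem~\ref{thm:sampling} already serves, verbatim, as a sampling algorithm for $\mfa$. That is, on input $d$ I would run that algorithm and output the very same finite structure $\mathfrak{D}$ (the finite substructure of $\mfa^\star$ that it produces). Since nothing is recomputed, the running time stays polynomial in $d$, so efficiency is inherited for free. It then remains to verify the two defining properties of a sample for $\mfa$: that $\mathfrak{D}$ preserves satisfiability of conjunctions with at most $d$ variables relative to $\mfa$, and that $\mathfrak{D}$ is homomorphic to $\mfa$ (rather than merely to $\mfa^\star$).

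The satisfiability condition is the easy half and follows by composing two equivalences that are already available. Given a finite conjunction $\psi$ of atomic $\tau$-formulas with at most $d$ distinct free variables, Corollary~\ref{transfer} gives that $\psi$ is satisfiable in $\mfa$ if and only if it is satisfiable in $\mfa^\star$, while Theorem~\ref{thm:sampling} gives that $\psi$ is satisfiable in $\mfa^\star$ if and only if it is satisfiable in $\mathfrak{D}$. Chaining the two yields that $\psi$ is satisfiable in $\mfa$ if and only if it is satisfiable in $\mathfrak{D}$, exactly as required.

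The genuinely new point, and the step I expect to be the only real obstacle, is producing a homomorphism $\mathfrak{D}\to\mfa$. Because $\mathfrak{D}$ is a substructure of $\mfa^\star$, the inclusion map is a homomorphism $\mathfrak{D}\to\mfa^\star$; but the elements of $\mathfrak{D}$ live in $\Q^\star$ and not in $\Q$, so this does not directly provide a map into $\mfa$. I would resolve this through the \emph{canonical conjunctive query} of $\mathfrak{D}$: introduce one variable $x_b$ for each element $b$ of the (finite) domain of $\mathfrak{D}$, and form the finite conjunction $\Psi_{\mathfrak{D}}$ of all atoms $R(x_{b_1},\dots,x_{b_k})$ ranging over $R\in\tau$ and $(b_1,\dots,b_k)\in R^{\mathfrak{D}}$. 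For any $\tau$-structure $\mathfrak{B}$, a homomorphism $\mathfrak{D}\to\mathfrak{B}$ exists if and only if $\Psi_{\mathfrak{D}}$ is satisfiable in $\mathfrak{B}$, since a satisfying assignment and a homomorphism are literally the same datum. The inclusion witnesses that $\Psi_{\mathfrak{D}}$ is satisfiable in $\mfa^\star$, and as $\Psi_{\mathfrak{D}}$ is an ordinary $\tau$-formula, Corollary~\ref{transfer} transfers this to satisfiability in $\mfa$; this last invocation is the crucial one, because it is applied to a formula having as many variables as $\mathfrak{D}$ has elements, which is precisely why one must rely on the unrestricted form of Corollary~\ref{transfer} rather than on the $d$-bounded sampling equivalence. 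The resulting homomorphism need not be injective and need not be exhibited explicitly, since the definition of a sample asks only for its existence, so this completes the verification.
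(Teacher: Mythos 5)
Your proposal is correct and follows essentially the same route as the paper: output the sample computed for $\mathfrak{A}^\star$ unchanged, and obtain the $d$-variable satisfiability equivalence by chaining Corollary~\ref{transfer} with Theorem~\ref{thm:sampling}. The only place you go beyond the paper's two-sentence proof is in explicitly verifying the homomorphism condition $\mathfrak{D}\to\mathfrak{A}$ via the canonical conjunctive query of $\mathfrak{D}$ together with an unrestricted application of Corollary~\ref{transfer}; the paper leaves this requirement implicit, and your argument is a correct completion of that detail.
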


\begin{proof}
On input $d$, the sampling algorithm produces the $\tau_0$-reduct of the sample for $\mfa^\star$ (on input $d$) described in the proof of Theorem~\ref{thm:sampling}. By Corollary~\ref{transfer} and Theorem~\ref{thm:sampling}, this is an efficient sampling algorithm for $\mfa$.
\end{proof}

\begin{definition}
	A $k$-ary operation $g \colon \Q^k \to \Q$ is called a  \emph{polymorphism of $\mathfrak A$} if for all $R \in \tau$ we have that  $g(x^1,\ldots,  x^k) \in R^{\mathfrak A}$ for all $ x^1,\ldots,  x^k \in R ^{\mathfrak A}$, namely $R^{\mathfrak A}$ is {\it preserved} by~$g$ (where $g$ is applied component-wise).
\end{definition}
 

As an application of Theorem \ref{thm:sampling} 
we prove the following result:

\begin{theorem}\label{thm-cspmaxclosed}
	Let $\mathfrak A$ be a  PLH relational structure that is preserved by~$\max$.
	Then $\csp(\mathfrak A)$ is polynomial-time solvable.
\end{theorem}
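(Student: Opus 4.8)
The plan is to combine the efficient sampling algorithm of Corollary~\ref{cor:PLHsamplingcsp} with the classical tractability of finite-domain CSPs that are preserved by a semilattice operation. Given an instance $\psi$ of $\csp(\mathfrak A)$ with $d$ distinct variables, I would first run the sampling algorithm on input $d$ to obtain, in time polynomial in $d$, a finite structure $\mathfrak D$ that is homomorphic to $\mathfrak A$ and has the property that $\psi$ is satisfiable in $\mathfrak A$ if, and only if, it is satisfiable in $\mathfrak D$. Since the signature $\tau$ is finite, the size of $\psi$ is polynomially related to $d$, so this reduction is polynomial in the input size, and it remains only to decide $\csp(\mathfrak D)$ over the finite domain of $\mathfrak D$.

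The key observation is that $\mathfrak D$ inherits preservation by $\max$ from $\mathfrak A$. First, preservation of a relation $R^{\mathfrak A}$ by the componentwise binary maximum can be expressed as a first-order sentence over the signature $\tau_0$ together with the defining formula $\phi_R$, using that $z = \max(a,b)$ is first-order definable from the order. By the completeness of the theory of ordered $\Q$-vector spaces (Theorem~\ref{equiv}), this sentence holds in $\mathfrak A^\star$ as well, so each relation $R^{\mathfrak A^\star}$ is preserved by $\max$ on $\Q^\star$. The sample produced in the proof of Theorem~\ref{thm:sampling} is the substructure of $\mathfrak A^\star$ induced on the finite domain $D_{\mathrm{At}(\tau),d}$; since $\Q^\star$ is totally ordered, the binary maximum of two elements is simply the larger of the two, and hence every subset of $\Q^\star$ is closed under $\max$. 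Therefore the finite domain is closed under $\max$, and each relation of $\mathfrak D$, being the restriction of the corresponding relation of $\mathfrak A^\star$, is again preserved by $\max$.

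Finally I would invoke the classical result that a finite-domain constraint language preserved by a semilattice operation has a polynomial-time solvable CSP. Here $\max$ on a totally ordered set is such an operation, being idempotent, commutative, and associative: after establishing (generalised) arc consistency, assigning to each variable the maximum of its surviving domain yields a satisfying assignment whenever one exists, precisely because $\max$ is a polymorphism. Applying this to $\mathfrak D$ decides $\csp(\mathfrak D)$, and hence $\csp(\mathfrak A)$, in polynomial time.

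I expect the main obstacle to be the transfer of the algebraic property, namely ensuring that preservation by $\max$ really passes from $\mathfrak A$ through $\mathfrak A^\star$ to the finite sample $\mathfrak D$. The passage to $\mathfrak A^\star$ is the delicate step and hinges on writing max-closedness as a first-order sentence so that Theorem~\ref{equiv} applies; the subsequent restriction to the finite domain is routine precisely because the order on $\Q^\star$ is total. Once max-closedness of $\mathfrak D$ is secured, the final step is a direct appeal to the established finite-domain theory.
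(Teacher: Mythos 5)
Your proof is correct, but it takes a genuinely different route from the paper's. The paper's proof is a two-line black-box application: since $\max$ preserves $\mathfrak A$, so do the totally symmetric operations $\max^{(k)}$ of every arity $k$ (Example~\ref{example:karymax}); combined with the efficient sampling algorithm of Corollary~\ref{cor:PLHsamplingcsp}, Theorem~\ref{macphe} (from \cite{BodMacpheTha}) immediately yields membership in P. Crucially, that theorem only needs totally symmetric polymorphisms of the \emph{infinite} structure $\mathfrak A$; the finite sample is never claimed to inherit any polymorphism, and correctness of the decision procedure on the sample is derived from the polymorphisms of $\mathfrak A$ itself. You instead push the polymorphism down to the sample: you express max-closedness of each $R^{\mathfrak A}$ as a first-order $\tau_0$-sentence, transfer it to $\mathfrak A^\star$ by completeness of the theory of ordered $\Q$-vector spaces (Theorem~\ref{equiv}), and then observe that, since $\max$ is conservative (its value is always one of its arguments), every finite subset of the chain $\Q^\star$ is closed under $\max$, so the induced substructure from Theorem~\ref{thm:sampling} is itself max-closed; the classical finite-domain result (a semilattice polymorphism implies tractability via generalised arc consistency plus taking maxima) then finishes the job. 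Both arguments are sound, and both ultimately run sampling followed by an arc-consistency-style procedure; what differs is where correctness comes from. Your route is more self-contained, needing only classical finite-domain theory beyond the sampling corollary, but it hinges on the conservativity of $\max$: it would not work for fully or totally symmetric operations whose value need not be among their arguments, and indeed the paper points out (Example~\ref{ex:convexsubcase}, in the valued setting) that samples in general do \emph{not} inherit such (fractional) polymorphisms. The paper's route sidesteps this issue entirely, which is precisely what allows it to scale to the general Theorem~\ref{thm:fullysymfpol}.
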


This result is incomparable to known results
about  semilinear relations preserved by $\max$~\cite{BodirskyMaminoTropical}.
In particular, there, the weaker bound~$\text{NP}\cap\text{co-NP}$ has been shown for a larger class, and polynomial tractability only for a smaller class (which does not contain many  PLH relations preserved by $\max$, for instance $x \geq \max(y,z)$). To prove Theorem \ref{thm-cspmaxclosed} we use the notion of totally symmetric polymorphism and a result from \cite{BodMacpheTha}.
\begin{definition}\label{def:totsym}A $k$-ary operation $g$ is \emph{totally symmetric} if \[g(x^1,\ldots,x^k)=g(y^1,\ldots,y^k) \text{ whenever } \{x^1,\ldots,x^k\}=\{y^1,\ldots,y^k\}.\]\end{definition}
\begin{example}\label{example:karymax}
For every $k\geq 1$, the $k$-ary totally symmetric operation $\max^{(k)}$ defined by 
	$(x^1,\ldots,x^k)\mapsto
		\max(x^1,\ldots,x^k)$
		is totally symmetric.\demo
\end{example}
\begin{restatable}[\cite{BodMacpheTha}, Theorem 2.5]{theorem}{macphe}
	\label{macphe}
	Let $\mathfrak A$ be a structure over a finite relational
	signature with totally symmetric polymorphisms of all arities. If there
	exists an efficient sampling algorithm for $\mathfrak A$ then
	$\csp({\mathfrak A})$ is in P.
\end{restatable}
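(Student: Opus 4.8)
The plan is to combine the efficient sampling algorithm with the classical fact that structures having totally symmetric polymorphisms of all arities have ``width~$1$'', i.e.\ their CSP is decided by establishing (generalised) arc consistency. Concretely, on an input instance $\psi$ of $\csp(\mathfrak A)$ with variable set $\{x_1,\dots,x_n\}$, I would first run the sampling algorithm on $d:=n$ to obtain, in time polynomial in $n$, a finite structure $\mathfrak D$ together with a homomorphism $h\colon\mathfrak D\to\mathfrak A$ such that every conjunction of atomic $\tau$-formulas in at most $n$ variables is satisfiable in $\mathfrak A$ if and only if it is satisfiable in $\mathfrak D$. Since $\mathfrak D$ is finite and of polynomial size, I would then establish arc consistency for $\psi$ over $\mathfrak D$ in polynomial time, obtaining for each variable $x$ a pruned domain $D_x\subseteq\dom(\mathfrak D)$; the algorithm accepts if and only if no $D_x$ becomes empty.

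Soundness of rejection is immediate: if some $D_x$ is empty then $\psi$ is unsatisfiable in $\mathfrak D$, and hence, as $\psi$ has at most $n=d$ variables, unsatisfiable in $\mathfrak A$ by the defining property of the sample. The heart of the argument is the converse, namely that if arc consistency succeeds with all $D_x$ nonempty then $\psi$ is satisfiable in $\mathfrak A$; this is where the totally symmetric polymorphisms are used. Fix an arity $N\ge\sum_x\lvert D_x\rvert$ (so $N$ is polynomially bounded in the input) and let $f$ be a totally symmetric polymorphism of $\mathfrak A$ of arity $N$, which exists by hypothesis. By total symmetry, $f$ applied to any length-$N$ list of elements of $A$ depends only on the \emph{set} of its entries, so for each variable $x$ the value $a_x$, obtained by applying $f$ to any length-$N$ list whose underlying set is $h(D_x)=\{h(b)\mid b\in D_x\}$, is well defined. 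I claim $x\mapsto a_x$ satisfies $\psi$ in $\mathfrak A$: given a constraint $R(x_{i_1},\dots,x_{i_m})$, arc consistency guarantees that every $b\in D_{x_{i_j}}$ occurs in coordinate $j$ of some tuple of $R^{\mathfrak D}$ lying in $D_{x_{i_1}}\times\cdots\times D_{x_{i_m}}$; applying $h$ componentwise sends these witnesses into $R^{\mathfrak A}$ and makes their $j$-th coordinates exhaust $h(D_{x_{i_j}})$. Padding the (at most $N$) resulting tuples by repetition to length exactly $N$ and applying $f$ componentwise yields, since $R^{\mathfrak A}$ is preserved by $f$, a tuple of $R^{\mathfrak A}$ whose $j$-th coordinate is $f$ evaluated on a length-$N$ list with underlying set $h(D_{x_{i_j}})$, i.e.\ exactly $a_{x_{i_j}}$. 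Hence $(a_{x_{i_1}},\dots,a_{x_{i_m}})\in R^{\mathfrak A}$.

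The main obstacle is precisely this completeness step, and in particular the point where a \emph{single} assignment is forced to work simultaneously for every constraint. This is exactly what total symmetry buys: because $f$ ignores both multiplicities and ordering, the padding by repetition is harmless and $a_x$ depends only on the set $h(D_x)$, not on the constraint currently being checked, so the values read off from different constraints automatically agree. Were the polymorphisms merely commutative rather than totally symmetric this coherence would fail, and it is the availability of polymorphisms of \emph{all} arities that lets the fixed arity $N$ (which grows with, but stays polynomial in, the instance) be accommodated. The remaining ingredients---the polynomial bound on $\lvert\dom(\mathfrak D)\rvert$ and the running time of arc consistency---are routine, and it is worth stressing that the polymorphisms enter only in the correctness argument and are never computed by the algorithm itself.
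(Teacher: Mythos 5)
Your proof is correct and takes essentially the same route as the original proof of this result in \cite{BodMacpheTha} (the present paper only cites the theorem rather than reproving it): run generalised arc consistency over the sample computed for $d=n$, and turn an arc-consistent fixpoint into a solution over $\mathfrak A$ by applying a totally symmetric polymorphism to the sets $h(D_x)$ --- which is precisely the homomorphism from the power structure of the sample used there, and the crisp analogue of what this paper generalises to valued structures in Lemma~\ref{lemma:fhom}. One small quantitative repair: when a variable repeats in a constraint scope, the number of witness tuples can be $\sum_{j}\lvert D_{x_{i_j}}\rvert > \sum_x \lvert D_x\rvert$, so $N$ should instead be taken at least $r\cdot\sum_x\lvert D_x\rvert$ where $r$ is the maximal arity of the (finite) signature; this is harmless since polymorphisms of every arity are available, $N$ stays polynomial, and $f$ is never actually computed.
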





 \begin{proof}[of Theorem~ \ref{thm-cspmaxclosed}]
			Since $\mathfrak A$ is preserved by $\max$, 	for every $k \geq 1$, the $k$-ary totally symmetric operation $\max^{(k)}$ (see Example \ref{example:karymax})
		 preserves $\mathfrak A$. To see this, observe that for every $x^1,\ldots,x^k \in \Q$, it holds that  \[\textstyle \max^{(k)}(x^1,\ldots,x^k)=\max(x^1, \max(x^2, \ldots, \max(x^{k-1},x^k)\ldots)).\]
		Therefore,  as  by  Corollary~\ref{cor:PLHsamplingcsp} there exists an efficient sampling algorithm for $\mfa$, from Theorem~\ref{macphe} it follows that $\csp(\mfa)$ can be solved in polynomial time.
\end{proof}

\section{Efficient Sampling for PLH Valued   Structures}
\label{sect:tract}
In this section we introduce the notion of a sampling algorithm for valued structures and we exhibit an efficient sampling algorithm for PLH valued structures. Analogously to the case of relational structures, before  defining the notion of a sampling algorithm for valued structures, we define a generalisation of the universal algebraic concept of a homomorphism  to the valued constraint satisfaction framework. 

\begin{definition}
	Let $\Delta$ and $\g$ be valued structures with the same signature $\tau$ with domain $D$ and $C$ respectively. Let $C^D$ denote the set of all functions $g \colon D \to C$. A \emph{fractional homomorphism} \cite{ThapperZivny2012}  from $\Delta$ to $\g$ is a function $\omega \colon C^D \to Q_{\geq0}$ with finite support, $\supp(\omega):=\{g \in C^D \mid \omega(g)>0 \}$, such that  $\sum_{g \in C^D} \omega (g)=1$, and such that for every function symbol $f \in \tau$ and tuple $a \in D^{\ar(f)}$, it holds that \[\sum_{g \in C^D}\omega(g)f^{\g}(g(a))\leq f^{\Delta}(a),\]
	where the functions $g$ are applied componentwise. We say that $\Delta$ is \emph{fractionally homomorphic} to $\g$ and write $\Delta \to_f \g$ to indicate the existence of a fractional homomorphism from $\Delta$ to $\g$.
\end{definition}

Let $\g$ be a valued structure with domain $C$ and signature $\tau$. A \emph{valued substructure} $\Delta$ of $\g$ is a valued structure with domain $D \subseteq C$, with signature $\tau$, and such that the interpretation of a function symbol $f$ from $\tau$ is the restriction of $f^{\g}$ to the domain $D$. 
Trivially, if $\g$ is a valued structure and  $\Delta$ is a valued substructure  of $\g$, then $\Delta$ is fractionally homomorphic to $\g$.

\begin{definition}\label{def:valued-sampling}
	Let $\g$ be a valued structure with domain $C$ and finite signature $\tau$. A \emph{sampling algorithm for $\g$} takes as input a positive integer $d$  and computes a finite-domain valued $\tau$-structure  $\Delta$ fractionally homomorphic to $\g$ 
	such that, for every finite sum $\phi$ of $\tau$-terms having at most $d$ distinct variables, $V=\{x_1,\ldots,x_d\}$, and every $u \in \Q$,  there exists a solution $h\colon V \to C$ with $\phi^{\g} (h(x_1),\ldots, h(x_d))\leq u$ if, and only if, there exists a solution $h'\colon V \to D$ with $\phi^{\Delta}(h'(x_1),\ldots, h'(x_d))\leq u$.
	A sampling algorithm is called \emph{efficient} if its running time is bounded by a polynomial in $d$.
\end{definition}  

The valued finite structure computed by a sampling algorithm is called a \emph{sample}.

\begin{remark}
	Observe that the output $\Delta_d$ of a sampling algorithm for a given valued structure $\g$ with finite signature $\tau$ does not depend on the rational threshold $u$. Therefore, the given sampling algorithm has the property that given a finite sum $\phi$ of function symbols from $\tau$ with variables $V:=\{x_1,\ldots,x_n\}$, it holds that \[\inf_{h \colon V \to \dom(\g)}\phi^{\g}(h(x_1),\ldots,h(x_n))=\inf_{h' \colon V \to \dom(\Delta_d)}\phi^{\Delta_d}(h'(x_1),\ldots,h'(x_n)).\]
\end{remark} 
\subsection{The Ring of Formal Laurent Power Series}\label{subsect:Laurentpwr}

In the present section we extend the method developed in Section~\ref{sect:csp-tract} to the
treatment of VCSPs. To better highlight the analogy with
Section~\ref{sect:csp-tract}, so that the reader already familiar with it
may quickly get an intuition of the arguments here, we use identical
notation to represent corresponding objects. This choice has the drawback that some symbols, notably~$\Q^\star$, need to be re-defined (the new $\Q^\star$ will contain the old one). In this section, we will sometimes skip details that can be borrowed unchanged from Section~\ref{sect:csp-tract}.


\begin{definition}
Let $\Q^\star$ denote the ring~$\Q((\eps))$ of formal Laurent
power series in the indeterminate~$\eps$, that is, $\Q^\star$ is the set of
formal expressions
\[\sum_{i=-\infty}^{+\infty} a_i\eps^i\]
where $a_i\neq0$ for only finitely many {\it negative} values of~$i$.
Clearly, $\Q$ is embedded in~$\Q^\star$ (the embedding is defined similarly to its corresponding in Section \ref{sect:csp-tract}).
The ring operations on $\Q^\star$ are defined as usual
\begin{align*}
\sum_{i=-\infty}^{+\infty} a_i\eps^i + \sum_{i=-\infty}^{+\infty} b_i\eps^i &=
\sum_{i=-\infty}^{+\infty} (a_i+b_i) \eps^i \\
\sum_{i=-\infty}^{+\infty} a_i\eps^i \cdot \sum_{i=-\infty}^{+\infty} b_i\eps^i &=
\sum_{i=-\infty}^{+\infty} \left(\sum_{j=-\infty}^{+\infty} a_jb_{i-j}\right) \eps^i
\end{align*}
where the sum in the product definition is always finite by the hypothesis
	on $a_i,b_i$ with negative index~$i$. The order is the lexicographical
	order induced by~$0<\eps\ll1$, i.e.,
	\[
	\sum_{i=-\infty}^{+\infty} a_i\eps^i < \sum_{i=-\infty}^{+\infty} b_i\eps^i
	\quad\text{iff}\quad \exists i\; a_i < b_i \wedge \forall j<i\; a_j=b_j
	\text{.}
	\]
	It is well known that $\Q^\star$ is an ordered field, that is, all non-zero
	elements have a multiplicative inverse and the order is
	compatible with the field operations. We define the following subsets
of~$\Q^\star$ for $m\le n$
\[\Q^\star_{m,n} := \left\{ \sum_{i=m}^n \boldsymbol\epsilon^i a_i
\;\middle|\; a_i\in\Q \right\} \subset \Q^\star
.\]
\end{definition}

\begin{definition}\label{def:Lstar}
We define a new structure $\mathfrak L^\star$, which is  an extension of
an expansion (or an expansion of an extension)  of~$\mathfrak L$, having $\Q^\star$ as domain and
$\tau_1 := \tau_0\cup\{k\}_{k\in\Q^\star_{-1,1}}$
as signature,
where the interpretation of
symbols in~$\tau_0$ is formally the same as for~$\mathfrak L$ and
the symbols $k\in\Q^\star_{-1,1}$ denote constants (zero-ary functions).
\end{definition}
Notice that, for technical reasons, we allow only constants
from $\Q^\star_{-1,1}$. In the remainder of this section,
$\tau_1$-formulas will be interpreted in the structure~$\mathfrak
L^\star$. We make for $\tau_1$-formulas the same assumptions as in
Section~\ref{sect:csp-tract} (that atomic subformulas are of the form  either $c_1\cdot 1 {<\atop =} x_i$, or $x_i {<\atop =} c_2\cdot 1$, or $c_1\cdot x_i {<\atop =} c_2\cdot x_j$ with constants~$c_1$ and~$c_2$  not both negative and where function symbols $c_i\cdot$ are never composed). Also $H(\Phi)$ and~$K(\Phi)$, where $\Phi$ is a set of atomic
$\tau_1$-formulas, are defined similarly to Section~\ref{sect:csp-tract}, and the only difference is that now $\Phi$ is a set of atomic
$\tau_1$-formulas rather than $\tau_0$-formulas.
Observe that the reduct of $\mathfrak L^\star$ obtained by
restricting the signature to~$\tau_0$ is {\it elementarily equivalent}
to~$\mathfrak L$, namely it satisfies the same first-order sentences.

Similarly as in Section \ref{sect:csp-tract}, for every valued PLH structure $\g$ and for every positive integer number $d$, we explicitly give a $\tau_1$-structure  with finite domain $D^\star$ such that  every instance of $\vcsp(\g)$ with at most $d$ distinct free variables has a solution with values in $\Q$ if, and only if, it has a solution with values in $D^\star\subseteq \Q^\star$ (Lemma~\ref{vcsp-sampling}). We will need two preliminary results: Lemma~\ref{closure2} and Lemma~\ref{pippo2}, which are analogues of Lemma~\ref{closure} and Lemma~\ref{pippo} from Section~\ref{sect:csp-tract}. More specifically, in Lemma~\ref{closure2} we consider the positive solutions to the closures of finitely  $\tau_1$-formulas, and  in Lemma~\ref{pippo2} we consider the positive solutions to finitely many $\tau_1$-formulas.

\begin{lemma}\label{closure2}
Let $\Phi$ be a finite set of atomic $\tau_1$-formulas having free variables
in~$\{v_1,\dotsc, v_d\}$
and let~$\bar\Phi$ be the set~$\bigcup_{\phi\in\Phi}\bar\phi$.
Suppose that there is $0<r\in\Q^\star$ such that
all satisfying assignments of~$\bar\Phi$ with values $(x_1,\ldots,x_d)$ in~$\Q^\star$
also satisfy $0< x_i\le r$ for all~$i$. Let $u$, $\alpha_1,\dotsc,\alpha_d$
be elements of~$\Q^\star$. Assume that the formulas in~$\Phi$ are simultaneously
satisfiable by a point~$(x_1,\dotsc, x_d)\in {(\Q^\star)}^d$ such that
$\sum_i \alpha_ix_i < u$.
Let us define the set
\[
C_{\Phi,d} = 
\left\{|k|\prod_{i=1}^s |h_i| ^{e_i} \,\middle|\, k\in K(\Phi) ,\;
e_1,\dotsc, e_s \in {\mathbb Z},\; \sum_{r=1}^s \lvert e_r \rvert < d
\right\} \subseteq \Q^\star_{-1,1}
\]
where $h_1,\dotsc, h_s$ is an enumeration of the (finitely many) elements
of~$H(\Phi)$.
Then there is
a point in $(x_1',\dotsc, x_d')\in (C_{\Phi,d})^d\subseteq{(\Q^\star)}^d$
with $\sum_i \alpha_ix_i' < u$ that satisfies simultaneously
all~$\bar\phi$, for~$\phi\in\Phi$.
\end{lemma}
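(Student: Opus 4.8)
The plan is to recast the statement as a linear optimisation problem over the ordered field $\Q^\star$ and to mirror the proof of Lemma~\ref{closure}, the only genuinely new feature being that we must keep the value of the linear objective $L(x) := \sum_i \alpha_i x_i$ below $u$. Write $P \subseteq (\Q^\star)^d$ for the set of points satisfying $\bar\Phi$; its defining constraints are the equalities of $\bar\Phi$ together with the non-strict inequalities obtained by closing the strict ones, so $P$ is a polyhedron. By the boundedness hypothesis $P \subseteq (0,r]^d$, hence $P$ is bounded, and it is nonempty because the given solution of $\Phi$ also satisfies $\bar\Phi$. The heart of the argument is that $L$ attains its minimum over $P$ at a \emph{vertex} $x^\star$, and that every vertex of $P$ has all its coordinates in $C_{\Phi,d}$.

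First I would produce such a vertex by a ``walk to the boundary''. Start from the given point $x^0 \in P$ with $L(x^0) < u$. As long as the current point $x$ is not a vertex, the constraints tight at $x$ do not have full rank $d$, so their common solution space contains a line direction $\delta \neq 0$; replacing $\delta$ by $-\delta$ if necessary we may assume $L(\delta) \le 0$. Since $P$ is bounded, as $t$ increases from $0$ the ray $x + t\delta$ leaves $P$ at some finite parameter, where a new, linearly independent constraint becomes tight; moving to that point strictly increases the rank of the active constraints while keeping $L$ non-increasing. Small perturbations preserve the strict positivity $x_i > 0$ (which holds throughout $P$) and the slack of the inactive constraints, so every intermediate point stays in $P$. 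After at most $d$ steps we reach a vertex $x^\star \in P$ with $L(x^\star) \le L(x^0) < u$.

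It then remains to show $x^\star \in (C_{\Phi,d})^d$, and here I would reuse essentially verbatim the connected-components and propagation argument from the proof of Lemma~\ref{closure}. At the vertex $x^\star$ the tight constraints have rank $d$ and form a system of equalities of the forms $c_1 \cdot x_i = c_2 \cdot x_j$, $c_1 \cdot 1 = c_2 \cdot x_j$, or $x_i = c_2 \cdot 1$. Put an edge between two variables occurring in a common such equality. The new point compared to Lemma~\ref{closure} is that \emph{every} connected component must contain an ``absolute'' constraint involving the constant $1$: a purely homogeneous component would admit a common rescaling of its variables, that is, a feasible line direction, contradicting that $x^\star$ is a vertex. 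Hence in each component some variable is pinned to a value $|k|$ with $k \in K(\Phi)$, and propagating across the (at most $d-1$ many) edges of the component multiplies this value by factors drawn from $H(\Phi)$ raised to integer exponents whose absolute values sum to at most $d-1 < d$. This places every coordinate of $x^\star$ in $C_{\Phi,d}$, and we are done.

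The step I expect to be the main obstacle is the \textbf{justification of the polyhedral facts over the non-Archimedean field $\Q^\star$}: that a bounded polyhedron behaves like a classical polytope, so that ``not a vertex'' really does yield a feasible line direction and that boundedness forces the minimising ray to meet a new facet at a finite parameter. These statements hold over an arbitrary ordered field, since they reduce to linear algebra and to Fourier--Motzkin elimination (already invoked in Lemma~\ref{qel}) rather than to any completeness or Archimedean property; but one must phrase the walk-to-a-vertex argument purely combinatorially, in terms of the rank of the active constraint system, to avoid accidentally relying on topological compactness. A secondary point requiring care is that positivity $x_i > 0$ is only a \emph{consequence} of $\bar\Phi$ and not one of its defining (closed) constraints, so it must be tracked separately and used only through the observation that it persists under sufficiently small perturbations within $P$.
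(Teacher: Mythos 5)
Your proof is correct, and it takes a genuinely different route from the paper's. Writing $L(x)=\sum_i\alpha_i x_i$, the paper reuses the skeleton of Lemma~\ref{closure}: it picks a maximal subset of the closed inequalities of $\bar\Phi$ that can be turned into equalities while the system stays satisfiable together with $L(x)<u$, argues by the same segment argument as in Lemma~\ref{closure} that every solution of the resulting equality system $\Psi_1\cup\Psi_2$ with $L(x)<u$ automatically satisfies the leftover inequalities, and then adds the one new ingredient: \emph{every} solution of $\Psi_1\cup\Psi_2$ satisfies $L(x)<u$, because two solutions $a,b$ with $L(a)<u\le L(b)$ would span an affine line inside the solution set of the equalities along which $L$ stays below $u$ for $t\ge 0$ while the choice $t=2r/\lvert b_1-a_1\rvert$ pushes a coordinate outside $(0,r]$, contradicting the boundedness hypothesis; after that it invokes the value-propagation construction of Lemma~\ref{closure} verbatim. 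You instead run a simplex-style walk to a vertex, carrying the budget $L<u$ along, and then analyse vertices directly. Both arguments use boundedness at the decisive moment --- the paper to kill the escaping affine line, you to force the walk to hit a new, independent tight constraint and to exclude purely homogeneous components of the tight-constraint graph at a vertex --- and both finish with the same propagation argument. Your route buys a cleaner structural statement (every vertex of the feasibility polyhedron lies in $(C_{\Phi,d})^d$, with the linear functional handled by monotonicity of the walk rather than by a separate argument), and it makes explicit that under the boundedness hypothesis every component of tight constraints contains a constraint mentioning the constant $1$, so the case of Lemma~\ref{closure} in which a variable of a homogeneous component is pinned to $1$ arbitrarily never arises; in the paper this is only implicit (there the affine solution set of $\Psi_1\cup\Psi_2$ turns out to be a single point). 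The price, which you correctly identify and correctly discharge, is that the polyhedral toolkit (existence of the exit parameter, rank increase, termination, vertices of bounded polyhedra) must be justified over the non-Archimedean ordered field $\Q^\star$ by purely linear-algebraic means; the paper's route never needs to set up this machinery.
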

\begin{proof}
As in the proof of Lemma~\ref{closure} (to which we direct the reader for
many details) we take a maximal~$\gamma\le\beta$ such that there are
$\Psi_1,\Psi_2,\Psi_3$ with
\begin{align*}
\bar \Phi \mkern 7mu &= \{s_1 = s'_1, \dotsc, s_\alpha = s'_\alpha\} \cup
\{ t_1 \le t'_1, \dotsc, t_\beta \le t'_\beta \}\\
\Psi_1 &= \{s_1 = s'_1, \dotsc, s_\alpha = s'_\alpha\}\\
\Psi_2 &= \{ t_1 = t'_1, \dotsc, t_\gamma=t'_\gamma \}\\
\Psi_3 &= \{t_{\gamma+1} \le t'_{\gamma+1},\dotsc, t_\beta \le t'_\beta \}
\end{align*}
and $\Psi_1 \cup \Psi_2 \cup \Psi_3$ is satisfiable by an assignment $(x_1,\ldots,x_d)$
with~$\sum_i \alpha_ix_i < u$. As in the proof of Lemma~\ref{closure}
the set of solutions of~$\Psi_1\cup\Psi_2$ satisfying~$\sum_i \alpha_ix_i
< u$ is contained in the solutions of~$\Psi_3$. So, here too, it
suffices to show that there is a solution of~$\Psi_1\cup\Psi_2$
with~$\sum_i \alpha_iv_i < u$ taking values in~$C_{\Phi,d}$. The proof
of Lemma~\ref{closure} shows that there is a solution 
of~$\Psi_1\cup\Psi_2$ taking values $(x_1,\ldots,x_d)$ in~$C_{\Phi,d}$ without necessarily
meeting the requirement that~$\sum_i \alpha_ix_i < u$. We will prove that, in
fact, any such solution meets the additional constraint.

Let $x_i=a_i,b_i$ be two distinct satisfying assignments
for~$\Psi_1\cup\Psi_2$ such that $\sum_i \alpha_ia_i < u$ and
$\sum_i \alpha_ib_i \ge u$. We know that the first exists, and we assume
the second towards a contradiction. The two assignments must differ, so,
without loss of generality~$a_1\neq b_1$.
For $t\in\Q^\star$, with $t\ge0$, define the assignment
$x_i(t)=(1+t)a_i-tb_i$.
Since all constraints in~$\Psi_1\cup\Psi_2$ are
equalities, it is clear that the new assignment~$x_i(t)$
satisfies~$\Psi_1\cup\Psi_2$ for all~$t\in\Q^\star$.
Moreover, if $t\ge0$
\[
\sum_i \alpha_ix_i(t) \le \sum_i \alpha_ia_i - t\left(\sum_i \alpha_ib_i -
\sum_i \alpha_ia_i\right) < u
\]
Let $t=\frac{2r}{|b_1-a_1|}$. Then
\[
x_1(t) = a_1 + \frac{2r}{|b-a|}(a-b)
\]
is either not smaller than $2r$ or smaller than $0$ depending on the sign of~$(a-b)$. In either
case we have a solution~$x_i=x_i(t)$ of~$\Psi_1\cup\Psi_2$ satisfying~$\sum_i
\alpha_ix_i(t) < u$, which must therefore be a solution of~$\Phi$
that does not satisfy $0< x_i\le r$.
\end{proof}

\begin{lemma}\label{pippo2}
Let $\Phi$ be a finite set of atomic $\tau_1$-formulas having free variables
in~$\{v_1,\dotsc, v_d\}$. Suppose that there are $0<l<r\in\Q^\star$ such that
all satisfying assignments $(x_1,\ldots,x_d)$ of~$\Phi$ in the domain~$\Q^\star$
also satisfy $l< x_i< r$ for all~$i$. Let $\alpha_1,\dotsc,\alpha_d$
be rational numbers and~$u\in\Q^\star_{-1,1}$. Assume that the formulas in~$\Phi$ are simultaneously
satisfiable by a point~$(x_1,\dotsc, x_d)\in (\Q^\star)^d$ such that
$\sum_i \alpha_ix_i \le u$. Then the same formulas
are simultaneously satisfiable by a point $(x_1',\dotsc, x_d')\in
(C^\star_{\Phi,d})^d\subseteq(\Q^\star)^d$ such that $\sum_i \alpha_ix_i' \le u$
where
\[
C^\star_{\Phi,d} = \{x + nx\boldsymbol\epsilon^3 \;|\; x \in C_{\Phi,d},\;
n\in\mathbb Z,\; -d\le
n\le d\} \subseteq \Q^\star_{-1,4} .
\]
\end{lemma}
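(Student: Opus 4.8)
The plan is to follow the structure of Lemma~\ref{pippo} almost verbatim, but now carrying along the single linear side-constraint $\sum_i \alpha_i x_i \le u$ so that it survives the $\eps$-perturbation. As in the proof of Lemma~\ref{pippo}, I would first fix a satisfying assignment $v_i = a_i$ of $\Phi$ with $\sum_i \alpha_i a_i \le u$, observing that by hypothesis all the $a_i$ lie strictly between the positive bounds $l$ and $r$, so in particular every $a_i > 0$. Because all variables are forced positive, there is no need for the sign-splitting rewriting ($\Phi^+$) that was required in Lemma~\ref{pippo}; the formulas are already satisfiable in positive numbers. I would then invoke Lemma~\ref{closure2} (with the bound $r$ supplied by hypothesis) to extract an assignment $v_i = c_i$ taking values in $C_{\Phi,d}$ that satisfies all the closures $\bar\phi$ and \emph{also} satisfies $\sum_i \alpha_i c_i \le u$. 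This is exactly the point where the new lemma buys us something over the CSP version: Lemma~\ref{closure2} guarantees the side-constraint is preserved, not merely the order-closure.

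Next I would define the integer exponents $n_1,\dotsc,n_d \in \{-d,\dotsc,d\}$ precisely as in Lemma~\ref{pippo}: sort the set $\{1\}\cup\{b_i/c_i\}$ and read off positions relative to that of $1$, where here $b_i := a_i$ plays the role of the target positive solution. The claimed witness is the assignment $x_i' = c_i + n_i c_i \boldsymbol\epsilon^3$. Note the exponent is now $\eps^3$ rather than $\eps$; this is dictated by the ranges $C_{\Phi,d}\subseteq\Q^\star_{-1,1}$ and $u\in\Q^\star_{-1,1}$, and ensures the perturbation term $n_i c_i \eps^3$ is infinitesimally smaller than any discrepancy already visible at orders $\eps^{-1}$ through $\eps^{1}$, landing the result in $\Q^\star_{-1,4}$ as stated. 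I would then verify that $x_i'$ satisfies each atomic formula by the same six-case analysis as in Lemma~\ref{pippo} (the cases $k y_i \sigma h y_j$, $k\cdot 1 \sigma h y_j$, $k y_i \sigma h\cdot 1$ for $\sigma\in\{<,=\}$), the only change being that the $\eps$-coefficients are compared at order $3$; the arithmetic identities (e.g.\ $b_i h c_j = b_i k c_i < b_j h c_i$) go through unchanged since $b_i, c_i$ are the same quantities.

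The genuinely new verification, and the step I expect to be the main obstacle, is confirming that the perturbed assignment still satisfies the linear inequality $\sum_i \alpha_i x_i' \le u$. Expanding gives $\sum_i \alpha_i x_i' = \sum_i \alpha_i c_i + \big(\sum_i \alpha_i n_i c_i\big)\eps^3$. Since we already know $\sum_i \alpha_i c_i \le u$ from the application of Lemma~\ref{closure2}, I would argue by cases on whether this inequality is strict or an equality. If it is strict, the gap $u - \sum_i \alpha_i c_i$ is a positive element of $\Q^\star$ whose leading term sits at some order $\le 1$ (as both $u$ and the $\alpha_i c_i$ live in $\Q^\star_{-1,1}$), and the $\eps^3$-perturbation cannot overturn a strict inequality decided at order $\le 1$; hence $\sum_i \alpha_i x_i' < u$. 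The delicate case is equality $\sum_i \alpha_i c_i = u$, where I must show the coefficient $\sum_i \alpha_i n_i c_i$ of $\eps^3$ is nonpositive. Here I would exploit that the $\alpha_i$ are \emph{rational} (so they carry no $\eps$-content) together with the construction of the $n_i$ from the ordering of the $b_i/c_i$, mirroring the comparison arguments already used in the atomic-formula cases, to conclude the perturbation does not increase the objective. Tying these cases together yields $\sum_i \alpha_i x_i' \le u$ with $x_i' \in C^\star_{\Phi,d}$, completing the proof.
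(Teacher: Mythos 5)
Your overall skeleton matches the paper's (fix a solution, pass to $C_{\Phi,d}$ via Lemma~\ref{closure2}, perturb by $\boldsymbol\epsilon^3$ using the rank integers $n_i$, and check the atomic formulas as in Lemma~\ref{pippo}), but the treatment of the side constraint has a genuine gap, concentrated exactly where you anticipate ``the main obstacle''. First, a licensing issue: Lemma~\ref{closure2} hypothesises a witness with \emph{strict} inequality $\sum_i\alpha_ix_i<u$ and returns a point satisfying a strict inequality; you invoke it with only the non-strict witness provided by the hypothesis of Lemma~\ref{pippo2}. This can be patched (apply Lemma~\ref{closure2} with threshold $u+\boldsymbol\epsilon^2$ and note that any value $\sum_i\alpha_ic_i\in\Q^\star_{-1,1}$ that is smaller than $u+\boldsymbol\epsilon^2$ is in fact $\le u$), but the patch lands you precisely in your ``equality case'', and that case cannot be repaired as you describe it: it is simply \emph{false} that the rank construction forces $\sum_i\alpha_in_ic_i\le 0$ whenever $\sum_i\alpha_ic_i=u$. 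Counterexample: take $d=2$ and $\Phi=\{4\cdot 1<v_1,\;v_1<5\cdot 1,\;1\cdot 1<v_2,\;v_2<2\cdot 1\}$, so that $K(\Phi)=\{1,2,4,5\}$, $H(\Phi)=\emptyset$, $C_{\Phi,2}=\{1,2,4,5\}$, and all solutions lie in $(1,5)^2$; take $\alpha=(1,-1)$, $u=3$, original solution $a=(4.4,\,1.5)$ (cost $2.9\le u$), and $c=(4,1)$, which lies in $C_{\Phi,2}^2$, satisfies all the closures, and has $\sum_i\alpha_ic_i=3=u$. The ratios are $a_1/c_1=1.1$ and $a_2/c_2=1.5$, so your recipe gives $n_1=1$, $n_2=2$, and the perturbed point $y=(4+4\boldsymbol\epsilon^3,\,1+2\boldsymbol\epsilon^3)$ does satisfy $\Phi$ but has $\sum_i\alpha_iy_i=3+2\boldsymbol\epsilon^3>u$. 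No appeal to the rationality of the $\alpha_i$ or to the ordering of the ratios can rescue this, because nothing in the properties you extracted for $c$ excludes this configuration.

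The paper avoids the trap with a different, \emph{global} case distinction, and this is the key idea missing from your proposal: it splits on whether \emph{some} solution of $\Phi$ satisfies $\sum_i\alpha_ix_i<u$ strictly. If yes, Lemma~\ref{closure2} is applied to that strict witness, so the resulting $c$ satisfies $\sum_i\alpha_ic_i<u$ strictly and your (correct) order-of-magnitude argument finishes; the equality case never arises. If no, the paper proves---by a segment argument $x_i(t)=(1+t)a_i-tb_i$, using that $\Phi$ contains only strict inequalities, so feasibility is open along lines---that \emph{every} solution of $\Phi$ then has cost exactly $u$; consequently the perturbed point, being itself a solution of $\Phi$, automatically satisfies the bound (with equality, which incidentally forces $\sum_i\alpha_in_ic_i=0$), and nothing about the sign of the perturbation coefficient needs to be checked. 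In the counterexample above one is in the first case ($a$ itself has cost $2.9<u$), and the paper's procedure could return, e.g., $c'=(4,2)$ with $\sum_i\alpha_ic_i'=2<3$, after which the $\boldsymbol\epsilon^3$-perturbation is harmless; your proof breaks because your local case split---on whether the particular $c$ you happened to obtain satisfies equality---cannot distinguish this benign situation from the genuinely tight one.
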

\begin{proof}
We consider two cases: either
all satisfying assignments $(x_1,\ldots,x_d)$ satisfy the inequality $\sum_i \alpha_ix_i \ge u$,
or 
there exists a satisfying assignment $(x_1,\dotsc,
x_d)$ for~$\Phi$ such that $\sum_i \alpha_ix_i < u$.

In the first case, we
claim that all satisfying assignments, in fact, satisfy $\sum_i
\alpha_ix_i = u$. In fact, assume that $x_i=a_i,b_i$ are two satisfying
assignments such that $\sum_i \alpha_ia_i = u$ and~$v := \sum_i
\alpha_ib_i > u$. As in the proof of Lemma~\ref{closure2}, consider
assignments of the form~$x_i(t)=(1+t)a_i - t b_i$
for $t\in\Q^\star$. Clearly $\sum_i \alpha_ix_i(t) = u - t (v-u) < u$ for
all~$t>0$. As in Lemma~\ref{closure2}, the new assignment must
satisfy all equality constraints in~$\Phi$. Each inequality
constraint implies a strict inequality on~$t$ (remember that $\Phi$ only
has strict inequalities). Since all of these must be satisfied by~$t=0$,
there is an open interval of acceptable values of~$t$ around~$0$, and,
in particular, an acceptable~$t>0$. Our claim is thus established. 
Therefore, in this case,
it suffices to find any satisfying assignment for~$\Phi$
taking values in~$C^\star_{\Phi,d}$. The assignment is now constructed as in the
proof of Lemma~\ref{pippo}, replacing the formal symbol~$\boldsymbol\epsilon$ in that
proof by~$\boldsymbol\epsilon^3$. Namely take a satisfying
assignment~$x_i = b_i$ for~$\Phi$, and, by Lemma~\ref{closure2}, one
satisfying assignment~$x_i=c_i$ for~$\bar\Phi$ taking values
in~$C_{\Phi,d}$.
Observe that the hypothesis that all solutions of~$\Phi$ satisfy~$l<x_i$
for all~$i$ is used here to ensure that all solutions of~$\bar\Phi$ assign
positive values to the variables, which is required by
Lemma~\ref{closure2}.
Let $-d\le n_1,\dotsc, n_d\le d$ be integers such that for
all
$i,j$
\begin{align*}
n_i < n_j \quad &\text{if and only if} \quad {\textstyle \frac{b_i}{c_i}
<\frac{ b_j}{c_j}}\\
0 < n_i \quad &\text{if and only if} \quad {\textstyle 1
<\frac{b_i}{c_i}}\\
n_i < 0 \quad &\text{if and only if} \quad {\textstyle \frac{b_i}{c_i} <
1}
\end{align*}
The assignment
$y_i = c_i + n_ic_i\boldsymbol\epsilon^3$ can be seen to satisfy
all formulas of~$\Phi$ by the same check as in the proof of
Lemma~\ref{pippo}. Observe that we have to replace $\eps$
in Lemma~\ref{pippo} by~$\eps^3$ here, so
that~$\Q^\star_{-1,1} \cap \,\eps^3 \Q^\star_{-1,1} = \emptyset$.

For the second case, fix a satisfying assignment $x_i=b_i$.
By Lemma~\ref{closure2} there is an
assignment~$x_i=c_i\in C_{\Phi,d}$ such that $\sum_i \alpha_ic_i < u$
and this assignment satisfies~$\bar\phi$ for all~$\phi\in\Phi$. From these
two assignments construct the numbers~$n_i$ and then the assignment~$y_i =
c_i + n_ic_i\boldsymbol\epsilon^3$ as before. For the same reason
it is clear that the new assignment satisfies~$\Phi$. To conclude that
$\sum_i \alpha_iy_i < u$ we write
\[
\sum_i \alpha_iy_i = \sum_i \alpha_ic_i + \boldsymbol\epsilon^3
\sum_i \alpha_in_ic_i < u
\]
because the first summand is in~$\Q^\star_{-1,1}$ and smaller than~$u$, therefore the
second summand is neglected in the lexicographical order.
\end{proof}

\begin{lemma}\label{vcsp-sampling}
Let $\Phi$ be a finite set of atomic $\tau_0$-formulas having free variables
in~$\{v_1,\dotsc, v_d\}$.
Let $u$, $\alpha_1,\dotsc,\alpha_d$
be rational numbers. Then the following are equivalent:
\begin{enumerate}
\item The formulas in~$\Phi$ are simultaneously
satisfiable in $\Q$, by a point $(x_1,\dotsc, x_d)\in\Q^d$ such that
$\sum_i \alpha_ix_i \le u$.
\item The formulas in~$\Phi$ are simultaneously
satisfiable in $D_{\Phi,d}\subseteq\Q^\star$,
by a point~$(x_1',\dotsc, x_d')\in D_{\Phi,d}^d$
such that $\sum_i \alpha_ix_i' \le u$,
where the set $D_{\Phi,d}$ is defined as follows
\begin{align*}
D_{\Phi,d} &:= -C^{\star}_{\Phi',d} \cup \{0\} \cup C^{\star}_{\Phi',d} 
\subseteq \Q^\star_{-1,4} \\
\Phi' &:= \Phi \cup \{x > \eps,\;x < -\eps,\;x > -\eps^{-1},\;x <
\eps^{-1}\}.
\end{align*}
\end{enumerate}
\end{lemma}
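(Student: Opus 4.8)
The plan is to prove the two implications separately, treating (2)$\Rightarrow$(1) as a soft model-theoretic transfer and (1)$\Rightarrow$(2) as the substantial ``rounding into the finite sample'' step, built on Lemma~\ref{pippo2} in exactly the way Lemma~\ref{pippo} was built on Lemma~\ref{closure} in Section~\ref{sect:csp-tract}. For the easy direction (2)$\Rightarrow$(1), I would observe that since $D_{\Phi,d}\subseteq\Q^\star$, a witness for (2) already witnesses that the sentence $\exists v_1\cdots v_d\,\big(\bigwedge_{\phi\in\Phi}\phi \wedge \sum_i \alpha_i v_i \le u\big)$ holds in $\Q^\star$. This is a first-order sentence over the signature $\tau_0\cup\{+,-\}$ with rational parameters only (each $\alpha_i v_i$ is scalar multiplication from $\tau_0$, the sum uses $+$, and $u$ is $u\cdot 1$). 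Both $\Q$ and $\Q^\star$ are ordered $\Q$-vector spaces, so by the completeness Theorem~\ref{equiv} they satisfy the same such sentences; hence it holds in $\Q$ as well, giving (1).

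For the forward direction (1)$\Rightarrow$(2), I would first reduce to the positive case exactly as in Lemma~\ref{pippo}. Fix a rational witness $v_i=a_i$ for (1). Replacing each $v_i$ by $y_i$, by $0$, or by $-y_i$ according to whether $a_i$ is positive, zero, or negative yields a set $\Phi^+$ of $\tau_1$-formulas satisfiable in positive numbers, while the substituted objective $\sum_i \alpha_i' y_i$ (with $\alpha_i'=\pm\alpha_i\in\Q$) still satisfies the bound $\le u$. I then augment $\Phi^+$ with $\eps < y_i < \eps^{-1}$ for every surviving variable; these are precisely the constraints packaged into $\Phi'$, namely the positive bounds $x>\eps,\,x<\eps^{-1}$ together with their negative counterparts $x>-\eps^{-1},\,x<-\eps$ taken before substitution. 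The embedded rational witness still satisfies these, since every positive rational lies strictly between the infinitesimal $\eps$ and the infinitely large $\eps^{-1}$ in $\Q^\star$. Now the hypotheses of Lemma~\ref{pippo2} hold with $l=\eps$ and $r=\eps^{-1}$, and the coefficients $\alpha_i'$ and threshold $u$ are rational, so Lemma~\ref{pippo2} produces a satisfying assignment valued in $C^\star_{\Phi',d}$ respecting $\sum_i \alpha_i' y_i \le u$. Undoing the substitution (sending the zero-variables back to $0$ and flipping the signs of the negative ones) yields a point of $-C^\star_{\Phi',d}\cup\{0\}\cup C^\star_{\Phi',d}=D_{\Phi,d}$ satisfying $\Phi$ with $\sum_i \alpha_i x_i'\le u$, which is (2).

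The main obstacle is the bookkeeping in the forward direction, on two fronts. First, one must check that the sign-substitution simultaneously preserves positive satisfiability and keeps the linear objective under threshold, so that the rational witness genuinely verifies the hypothesis of Lemma~\ref{pippo2}. Second, and more delicately, one must confirm that the constant set $C_{\Phi',d}$ appearing in the definition of $D_{\Phi,d}$ is the same one that Lemma~\ref{pippo2} outputs for the augmented system $\Phi^+$: the substitution $v_i\mapsto -y_i$ flips the signs of the coefficients $c_1,c_2$ in the atomic formulas, but since $H$, $K$, and $C_{\cdot,d}$ are defined through the absolute values $|h_i|$ and $|k|$, the constant set is unaffected, and the added bounds contribute exactly $\eps,\eps^{-1}$ to $K(\Phi')$. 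These verifications are routine but require care, and they are where the specific shape of $\Phi'$ and the use of $\eps^3$ inside $C^\star_{\Phi',d}$ do their work.
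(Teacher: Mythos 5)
Your proposal is correct and follows essentially the same route as the paper's own proof: the backward direction via completeness of the theory of ordered $\Q$-vector spaces (Theorem~\ref{equiv}), and the forward direction via the sign/zero substitution, augmentation with $\eps < y_i < \eps^{-1}$, and an application of Lemma~\ref{pippo2} with $l=\eps$, $r=\eps^{-1}$. Your extra bookkeeping about the invariance of $H$, $K$, and $C_{\cdot,d}$ under the sign substitution makes explicit a verification the paper leaves implicit, but it is the same argument.
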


\begin{remark}\label{rem:samplesize}
Observe that the set $D_{\Phi,d}$ depends neither on the $\alpha_i$'s nor on $u$. In fact, it only depends on the set of formulas $\Phi$ and on the number $d$ of free variables.
\end{remark}

\begin{proof}
The implication $\textit{2}\rightarrow\textit{1}$\/ is immediate observing
that the conditions~$\Phi$ and~$\sum_i \alpha_ix_i \le u$ are
first-order definable in~$\mathfrak S$. In fact, any assignments with
values in~$D_{\Phi,d}$ satisfying the conditions is, in particular, an
assignment in~$\Q^\star$, and, by the completeness of the first order theory
of ordered $\Q$-vector spaces, we have an assignment taking values
in~$\Q$.

For the vice versa, fix any assignment $x_i=a_i$ with $a_i\in\Q$
for~$i\in\{1,\dotsc, d\}$. We pre-process the formulas in $\Phi$ producing a
new set of atomic formulas~$\Phi'$ as follows. We replace every variable
$x_i$ such that $a_i=0$ with the constant~$0=0\cdot1$. Then we replace
each of the remaining variables $x_i$ with either~$y_i$ or~$-y_i$
according to the sign of~$a_i$. Finally, we add the constraints $\eps<y_i$
and $y_i<\eps^{-1}$ for each of these variables. Similarly we produce
new coefficients $\alpha'_i = \text{sign}(a_i)\alpha_i$.
It is clear that
the new
set of formulas~$\Phi'$ has a satisfying assignment in positive rational
numbers with~$\sum_i \alpha'_iy_i \le u$.
Observing that
a positive rational~$x$ always satisfies~$\eps<x<\eps^{-1}$, we see that $\Phi'$ satisfies the
hypothesis of Lemma~\ref{pippo2} with $l=\eps$ and~$r=\eps^{-1}$. Hence the
statement.
\end{proof}

\begin{remark}
		Lemma  \ref{vcsp-sampling} provides a polynomial-time many-one reduction of the $\vcsp$ for a PLH valued structure $\g$ with finite signature  to the $\vcsp$ for a  valued structure  $\Delta^\star$ with finite signature having as domain a finite subset of $\Q^\star$. We want to point out that, however, Lemma  \ref{vcsp-sampling} does not give rise to a sampling algorithm: firstly, the computed finite domain is a subset of $\Q^\star$ rather than $\Q$, secondly, the signature of $\Delta^\star$ is strictly larger than the signature of $\g$. In the next section we show how to obtain an efficient sampling algorithm for PLH valued structures using Lemma \ref{vcsp-sampling}. 
\end{remark}

\subsection{Reduction to a VCSP over a Finite  Rational Domain} \label{subsec:rationalsample}
In this section we use the results achieved in Section \ref{subsect:Laurentpwr} to provide an efficient  sampling algorithm for PLH valued structures.

Let $\Phi$ be a finite set of $\tau_0$-formulas having at most $d$ distinct free variables. By Lemma \ref{vcsp-sampling}, for every $\alpha_1,\ldots,\alpha_d,u\in \Q$ the formulas in $\Phi$ are simultaneously satisfiable in $\Q$ by a point $(x_1,\ldots,x_d)$ such that $\sum_{i=1}^{d}\alpha_ix_i\leq u$ if, and only if, they are simultaneously satisfiable in 
 $D_{\Phi,d}\subseteq \Q_{-1,4}^\star$ by a point $(x'_1,\ldots,x'_d)$ such that $\sum_{i=1}^{d}\alpha_ix'_i\leq u$. The elements in $D_{\Phi,d}\subseteq \Q_{-1,4}^\star$ are of the form \[\sum_{i=-1}^{4}x_i\eps^i \; \text{where } x_i \in -C_{\Phi,d}\cup\{0\} \cup C_{\Phi,d} \]
where $C_{\Phi,d}$ is as in Lemma \ref{closure}.  In $\mathfrak L^\star$ it holds that\begin{align*}
-\eps^{-1}&<x<-\eps \; & \text{for every } &x \in  -C_{\Phi,d},\\
\eps&<x<\eps^{-1} \; & \text{for every } &x \in  C_{\Phi,d}.
\end{align*}
(See Lemmas \ref{closure2}, \ref{pippo2}, \ref{vcsp-sampling}, and Lemmas \ref{closure}, \ref{pippo}.)

\begin{lemma} \label{lemmaepsilon}
	Let $\Phi$ be a finite set of $\tau_0$-formulas having at most $d$ distinct free variables. Let $D_{\Phi,d}$ be defined as in Lemma \ref{vcsp-sampling} (2.) and let $\mathfrak D^\star$ be the finite substructure of $\mathfrak L^\star$ with domain $D_{\Phi,d}\subseteq \Q_{-1,4}^\star$. 
	Then there exists a positive rational $\varepsilon$ with $-\varepsilon^{-1}<x<-\varepsilon$ for every $x \in -C_{\Phi,d}$  and $\varepsilon<x<\varepsilon^{-1}$ for every $x \in C_{\Phi,d}$   such that the map $\eta\colon D_{\Phi,d}\to \Q$ defined by
\[	\sum_{j=1}^{4}x_j\epsilon^j \mapsto \sum_{j=1}^{4}x_j \varepsilon^j\]
	is  a homomorphism from the $\tau_0$-reduct of $\mathfrak D^\star$ to $\mathfrak L$. Moreover, $\varepsilon$ and $\eta$ are computable in time polynomial in $d$.
\end{lemma}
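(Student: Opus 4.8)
The map $\eta$ is simply the evaluation at $\epsilon=\varepsilon$ of a Laurent polynomial, restricted to the finite set $D_{\Phi,d}$. The plan is to exploit that this evaluation is $\Q$-linear. Indeed, for every $c\in\Q$ and every $a\in D_{\Phi,d}$ with $c\cdot a\in D_{\Phi,d}$ we have $\eta(c\cdot a)=c\cdot\eta(a)$, and $\eta(1)=1$ whenever $1\in D_{\Phi,d}$; hence, for \emph{any} positive rational $\varepsilon$, the map $\eta$ automatically satisfies the homomorphism conditions attached to the function symbols $c\cdot$ and to the constant symbol $1$ of $\tau_0$, independently of the choice of $\varepsilon$. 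The entire burden of the proof is therefore to choose $\varepsilon$ so that $\eta$ also preserves the order, i.e.\ so that $a<b$ in $\mathfrak D^\star$ implies $\eta(a)<\eta(b)$ in $\mathfrak L$ for all $a,b\in D_{\Phi,d}$; in other words, $\eta$ must be strictly order-preserving on the finite linearly ordered set $D_{\Phi,d}$.

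To achieve this I would reason as follows. For two distinct elements $a,b\in D_{\Phi,d}$ their difference $q:=b-a$ is a nonzero element of $\Q^\star_{-1,4}$, say $q=\sum_{i=i_0}^{4}q_i\epsilon^i$ with $q_{i_0}\neq 0$ and $i_0\ge -1$; by the very definition of the lexicographic order on $\Q^\star$, the sign of $q$ equals the sign of its lowest nonzero coefficient $q_{i_0}$. Writing $q(\varepsilon)=\varepsilon^{i_0}\big(q_{i_0}+\sum_{i>i_0}q_i\varepsilon^{\,i-i_0}\big)$ and noting $\varepsilon^{i_0}>0$, the elementary estimate that $\varepsilon\le\tfrac{|q_{i_0}|}{1+\sum_{i>i_0}|q_i|}$ together with $\varepsilon\le 1$ forces $\big|\sum_{i>i_0}q_i\varepsilon^{\,i-i_0}\big|<|q_{i_0}|$, so that the sign of $q(\varepsilon)$ coincides with the sign of $q_{i_0}$, hence with the sign of $q$ in $\Q^\star$. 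As $D_{\Phi,d}$ is finite there are only finitely many such differences, and I would take $\varepsilon$ to be any positive rational below all the resulting thresholds. To secure in addition the bounds required by the statement, I would further impose $\varepsilon<\min\big(1,\;\min C_{\Phi,d},\;1/\max C_{\Phi,d}\big)$, which gives exactly $\varepsilon<x<\varepsilon^{-1}$ for every $x\in C_{\Phi,d}$ and, equivalently, $-\varepsilon^{-1}<x<-\varepsilon$ for every $x\in -C_{\Phi,d}$. For such an $\varepsilon$, order preservation holds, and combined with the linearity facts above this shows that $\eta$ is the desired homomorphism of the $\tau_0$-reduct of $\mathfrak D^\star$ into $\mathfrak L$.

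For the efficiency claim I would invoke the construction underlying Theorem~\ref{thm:sampling}: the domain $D_{\Phi,d}$ consists of polynomially many (in $d$) elements of $\Q^\star_{-1,4}$, each a Laurent polynomial whose coefficients are built from the fixed constants recorded in $K(\Phi)$ and $H(\Phi)$ with exponent sum below $d$, and hence have bit-length polynomial in $d$. Consequently the $O(|D_{\Phi,d}|^{2})$ pairwise differences, their lowest nonzero coefficients, the individual thresholds, and the final minimum are all computable in time polynomial in $d$, after which $\eta$ is tabulated on $D_{\Phi,d}$ by a single substitution per element. I expect the only genuinely delicate point to be the simultaneous control, by one rational $\varepsilon$, of the signs of all pairwise differences while keeping the bit-sizes polynomial; the rest is the routine linearity and bookkeeping described above.
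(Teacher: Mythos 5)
Your proposal is correct and follows essentially the same route as the paper: linearity of the evaluation map disposes of the function symbols $c\cdot$ and the constant $1$, and the entire weight falls on order preservation, which both arguments obtain by taking $\varepsilon$ small enough that the lowest-order coefficient of each pairwise difference dominates the evaluated tail. The only divergence is in how $\varepsilon$ is produced: the paper fixes a single closed-form value $\varepsilon=\frac{1}{6}\frac{\min C}{\max C}$, where $C$ is the set of positive differences of elements of $C_{\Phi,d}$, and checks domination coefficient-by-coefficient against this uniform bound, whereas you extract one threshold per pair of domain elements and take the minimum over the $O(|D_{\Phi,d}|^{2})$ pairs. Both choices yield a rational of bit-length polynomial in $d$, so the efficiency claim goes through either way; your per-pair variant is marginally more robust, since it never uses the fact that the coefficients of elements of $D_{\Phi,d}$ come from a structured set, at the (still polynomial) cost of enumerating all pairs.
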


\begin{proof}
	Let $C_{\Phi,d}\subseteq \Q$ be as in  Lemma \ref{closure} and let $-C_{\Phi,d}:=\{-x \mid x \in C_{\Phi,d}\}$.
We define
 $C:=\{x-y \mid x,y \in C_{\Phi,d} \; \text{and } x-y >0\}$. Observe that $C \supseteq C_{\Phi,d}$. Define \[\varepsilon :=\frac{1}{6}\frac{\min C}{\max C}.\]
	The number $\varepsilon$ is positive and rational and it can be computed in time polynomial in $d$. Furthermore,\begin{align*}
	&0< \varepsilon<1,\\
	&\varepsilon<\min C<\min C_{\Phi,d}\;  &\text{implying } \max( -C_{\Phi,d})<{-\varepsilon} &\text{, and}  \\
	&\max C=\max C_{\Phi,d}<\varepsilon^{-1}\;  &\text{implying } {-\varepsilon^{-1}}<\min( -C_{\Phi,d}).&
	\end{align*} 
	It follows that $-\varepsilon^{-1}<x<-\varepsilon$ for every $x \in -C_{\Phi,d}$,  and $\varepsilon<x<\varepsilon^{-1}$ for every $x \in C_{\Phi,d}$.
	
It is easy to see that $\eta$ preserves the scalar multiplication by rational elements and the identity in $\Q$.  
	We prove now that $\eta$ is  order preserving (and therefore injective). Let us consider \[x:=\sum_{j=-1}^4x_j\eps^j \text{, and } y:=\sum_{j=-1}^4y_j\eps^j \in D_{\Phi,d}\] such that $x$ is smaller than $y$ in the lexicographic order induced by  $\Q^\star_{-1,4}$. This means that there exists an index $i \in \{-1,\ldots,4\}$ such that $x_j=y_j$ for $-1\leq j<i$ and $x_i<y_i$. Since $\varepsilon >0$, it holds that $x_i\varepsilon^i<y_i\varepsilon^i$ and, consequently, \[\sum_{j=-1}^{i}x_j \varepsilon^j<\sum_{j=1}^{i}y_j \varepsilon^j.\]  Moreover, if $i\neq 4$ then for all $j \in \{i+1,\ldots,4\}$  \[(x_j-y_j)\varepsilon^{j-i}\leq (x_j-y_j)\varepsilon\leq (\max C)\varepsilon  \leq \frac{\min C}{6}  \leq \frac{y_i-x_i}{6}  \] because $\varepsilon<1 $, and $x_j-y_j\leq \max C$ (indeed, if $x_j-y_j>0$ then $x_j-y_j \in C$, otherwise $x_j-y_j$ is smaller than any element in $C$). Therefore we get
	$(x_j-y_j)\varepsilon^j \leq \frac{y_i-x_i}{6}\varepsilon^i$, and \[\sum_{j=i+1}^{4}(x_j-y_j)\varepsilon^j\leq\sum_{j=i+1}^{4}\frac{y_i-x_i}{6}\varepsilon^i\leq  \frac{5}{6}(y_i-x_i)\varepsilon^i<(y_i-x_i)\varepsilon^i.\] It follows that \[\sum_{j=i}^{4}x_j\varepsilon^j<\sum_{j=i}^{4}y_j\varepsilon^j,\]
	and, because $x_j=y_j$ for $-1\leq j \leq i-1$,
	\[\sum_{j=-1}^{4}x_j\varepsilon^j<\sum_{j=-1}^{4}y_j\varepsilon^j,\]
	i.e., $\eta$ preserves the order. 
\end{proof}

Let $\Phi$ be a finite set of $\tau_0$-formulas with at most $d$ variables, and let $\eta$ be the map introduced in Lemma \ref{lemmaepsilon}, we set $E_{\Phi,d}:=\eta(D_{\Phi,d})\subseteq \Q$.  Lemma \ref{lemmaepsilon} implies the following corollary.
\begin{corollary}\label{cor:PLHsampling}
	Let $\g$ be a PLH valued structure with finite signature $\tau$. Then there exists an efficient sampling algorithm for $\g$. 
\end{corollary}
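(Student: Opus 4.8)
The plan is to assemble Corollary~\ref{cor:PLHsampling} from the machinery already developed, combining the reduction to $\Q^\star$ from Section~\ref{subsect:Laurentpwr} with the order-preserving collapse $\eta$ from Lemma~\ref{lemmaepsilon}. First I would describe the sampling algorithm explicitly. On input a positive integer $d$, since $\g$ is PLH with finite signature $\tau$, by quantifier elimination (Theorem~\ref{thm:qe}) each cost function $f^\g$ is first-order definable over $\mathfrak L$, hence described by finitely many atomic $\tau_0$-formulas. Let $\Phi$ be the finite set of all atomic subformulas occurring across the (fixed) quantifier-free definitions of the cost functions in $\g$, restricted to the variable set $\{v_1,\dots,v_d\}$; this set and the associated domain $D_{\Phi,d}\subseteq\Q^\star_{-1,4}$ of Lemma~\ref{vcsp-sampling} are computable in time polynomial in~$d$. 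The algorithm outputs the finite-domain valued $\tau$-structure $\Delta$ with domain $E_{\Phi,d}:=\eta(D_{\Phi,d})\subseteq\Q$, where each cost function $f^\Delta$ is the restriction of $f^\g$ to $E_{\Phi,d}$.

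Next I would verify the two properties demanded by Definition~\ref{def:valued-sampling}. Since $E_{\Phi,d}$ is a finite subset of $\dom(\g)=\Q$ and $\Delta$ is the valued substructure of $\g$ induced on it, the remark following the definition of valued substructure immediately gives $\Delta\to_f\g$, the required fractional homomorphism. For the crucial equisatisfiability of thresholds, fix a finite sum $\phi$ of $\tau$-terms in the variables $x_1,\dots,x_d$ and a threshold $u\in\Q$. The key point is that an assignment $h\colon V\to\Q$ with $\phi^\g(h)\le u$ corresponds, via the atomic-subformula bookkeeping used in the proof of Theorem~\ref{thm:sampling} and in Lemma~\ref{vcsp-sampling}, to a simultaneous satisfaction of a selection of the formulas in $\Phi$ together with a single linear-threshold inequality $\sum_i\alpha_i x_i\le u$ whose coefficients $\alpha_i$ record the linear pieces of the cost functions active at $h$. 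Applying Lemma~\ref{vcsp-sampling} (in the direction $1\Rightarrow 2$) yields a satisfying assignment valued in $D_{\Phi,d}\subseteq\Q^\star$ respecting the same threshold, and then pushing this assignment through $\eta$ produces an assignment valued in $E_{\Phi,d}\subseteq\Q$. Because $\eta$ is a homomorphism of the $\tau_0$-reduct into $\mathfrak L$ that is order-preserving (Lemma~\ref{lemmaepsilon}), it preserves the truth of the active atomic $\tau_0$-formulas and, being $\Q$-linear and order-preserving, also preserves the inequality $\sum_i\alpha_i x_i\le u$; hence $\phi^\Delta(h')\le u$. The converse direction is immediate since $E_{\Phi,d}\subseteq\Q$, so any $\Delta$-solution is already a $\g$-solution.

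The main obstacle I anticipate is the passage from the \emph{feasibility} bookkeeping of the atomic formulas to the \emph{valued} setting, that is, faithfully encoding the piecewise-linear structure of $\phi^\g$ as a choice of a finite formula set $\Phi$ plus one threshold inequality, and arguing that $\eta$ preserves \emph{both} simultaneously. Concretely, $\phi^\g$ is only piecewise homogeneous-linear: its value equals a fixed linear form $\sum_i\alpha_i x_i$ only within a region cut out by the active atomic $\tau_0$-constraints, so one must ensure that the assignment produced in $D_{\Phi,d}$ (and then in $E_{\Phi,d}$) stays within the same region, which is exactly what carrying along the selected subformulas $\Phi_R$ guarantees. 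Once this region-consistency is in place, the homogeneous linearity and order-preservation of $\eta$ close the argument. The polynomial time bound is inherited directly: $D_{\Phi,d}$ has size polynomial in $d$ by construction (its defining exponents satisfy $\sum_r|e_r|<d$), and $\varepsilon$, $\eta$, and hence $E_{\Phi,d}$ are computable in time polynomial in $d$ by the final clause of Lemma~\ref{lemmaepsilon}, so the overall sampling algorithm is efficient.
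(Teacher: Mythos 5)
Your proposal reconstructs the paper's own proof almost exactly: the sample is the valued substructure of $\g$ on the finite rational set $\eta(D_{\Phi,d})$, the fractional homomorphism $\Delta\to_f\g$ is the trivial one for valued substructures, the threshold equivalence is obtained by chaining Lemma~\ref{vcsp-sampling} (between $\Q$ and $D_{\Phi,d}\subseteq\Q^\star$) with the map $\eta$ of Lemma~\ref{lemmaepsilon}, and efficiency is quoted from Remark~\ref{rem:samplesize} and the last clause of Lemma~\ref{lemmaepsilon}. You even make explicit the region\,/\,linear-form bookkeeping that the paper leaves implicit.

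The gap is in the push-down step, and your attempted justification of it is precisely where the argument breaks. Lemma~\ref{lemmaepsilon} guarantees that $\eta$ preserves the order between two elements \emph{of the finite set} $D_{\Phi,d}$ (and compatibility with the unary functions $c\,\cdot$); the inequality $\sum_i\alpha_i x_i\le u$ instead compares an element of the linear \emph{span} of $D_{\Phi,d}$ with an arbitrary rational $u$, and neither addition nor such thresholds are controlled by that lemma. ``$\Q$-linear and order-preserving on $D_{\Phi,d}$'' does not imply order-preserving on the span: the data $\alpha_i,u$ come from the instance (a function symbol may be repeated arbitrarily often, so the $\alpha_i$ are unbounded in terms of $d$ and $\tau$), whereas $\varepsilon$ is computed from $d$ and $\tau$ alone. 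Concretely, let $f$ be the PLH (indeed convex) cost function with $f(x)=x$ for $x>0$ and $f(x)=+\infty$ otherwise, and take the one-term instance $\phi=f(x)$. By construction $\eps\in K(\Phi')$, so the positive infinitesimal $\eps$ lies in $D_{\Phi,d}$; in $\Q^\star$ we have $f(\eps)=\eps\le u$ for \emph{every} rational $u>0$, but $\eta(\eps)=\varepsilon$ is a fixed positive rational, so the pushed-down assignment violates the threshold whenever $0<u<\varepsilon$ --- and in fact no element of the finite rational sample has cost $\le u$ for such $u$. This cannot be repaired by a cleverer choice of $\varepsilon$ or of the assignment $h'$: over $\Q$ the set of thresholds admitting a solution of this instance is the open set $(0,+\infty)$, while over any finite domain the minimum of the objective is attained, so the corresponding threshold set is closed or empty; hence no finite-domain valued structure over $\Q$ can satisfy Definition~\ref{def:valued-sampling} for this $\g$. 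In fairness to you, the paper's own proof makes exactly the same leap (``Since $\eta$ is injective, the assignment $\eta\circ h'$ has cost $\phi^{\Delta}(\eta\circ h')\leq u$'') without justification, so you have faithfully reproduced the published argument; but the statement needed at that point is false in general, and any fix must either keep the sample's values in $\Q^\star$ (as in the authors' conference version) or modify the notion of sampling so that the sample may depend on the instance and its threshold, not only on $d$.
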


\begin{proof}
	For every cost function $f$ in $\tau$ let us consider the quantifier-free $\tau_0$-formula $\phi_{f}$ defining  $R_f^{\g}=\dom(f)$ over $\Q$. Let $\atm{\phi_{f}}$  denote the set of atomic subformulas of $\phi_{f}$ and let  $\atm{\tau}:=\bigcup_{f\in\tau}\atm{\phi_{f}}$. 
	
	On input $d$, the algorithm produces the 
		valued finite substructure $\Delta$ of $\g$ having domain $\eta(D_{\atm{\tau},d})$, where eta is defined as in Lemma \ref{lemmaepsilon}.  It is immediate to see that the valued structure $\Delta$ has size polynomial in $d$ and can be computed in time polynomial in $d$, because  $D_{\atm{\tau},d}$ and $\eta(D_{\atm{\tau},d})$  can be computed in polynomial time in $d$ (see Remark \ref{rem:samplesize}, and Lemma \ref{lemmaepsilon}).
	Let $\phi$ be a finite sum of function symbols from $\tau$ with at most $d$ variables from $V:=\{v_1,\ldots,v_d\}$, and let $u$ be a rational number. By Lemma \ref{vcsp-sampling} there exists an assignment $h \colon V\to \Q$ such that $\phi^{\g}(h)\leq u$ if, and only if, there exists an assignment $h' \colon V\to D_{\atm{\tau},d}$ such that $\phi^{\g^\star}(h)\leq u$. Furthermore,  by Lemma \ref{lemmaepsilon} there exists a positive rational number $\varepsilon$ such that the map $\eta:D_{\atm{\tau},d}\to \Q$ is a homomorphism of $\tau_0$-structures from the $\tau_0$-reduct of $\mathfrak D^\star$ to $\mathfrak L$. Since $\eta$ is injective,  the assignment  $\eta \circ h'\colon V \to  \eta(D_{\atm{\tau},d})$ has cost $\phi^{\Delta}(\eta \circ h')\leq u$.
\end{proof}

\section{Fully Symmetric Fractional Polymorphisms}\label{sect:totsym}
In this section we give sufficient algebraic conditions under which valued structures with an infinite
domain that admit an efficient sampling algorithm can be solved in polynomial time.
\subsection{Universal Algebraic Tools}
In this section we survey the concepts from universal algebra that we need and use in the remainder of the article.

\begin{definition}
Let $\Delta$ and $\g$ be valued structures with the same signature $\tau$ with domain $D$ and $C$ respectively. Let $C^D$ denote the set of all functions $g \colon D \to C$. A \emph{fractional homomorphism} \cite{ThapperZivny2012}  from $\Delta$ to $\g$ is a function $\omega \colon C^D \to Q_{\geq0}$ with finite support, $\supp(\omega):=\{g \in C^D \mid \omega(g)>0 \}$, such that  $\sum_{g \in C^D} \omega (g)=1$, and such that for every function symbol $f \in \tau$ and tuple $a \in D^{\ar(f)}$, it holds that \[\sum_{g \in C^D}\omega(g)f^{\g}(g(a))\leq f^{\Delta}(a),\]
where the functions $g$ are applied componentwise. We write $\Delta \to_f \g$ to indicate the existence of a fractional homomorphism from $\Delta$ to $\g$.
\end{definition}

The following proposition is adapted from \cite{ThapperZivny2012}, Proposition 2.1 where it is stated for valued structures with finite domains. In fact, we need not explicitly prove it, as the proof of the proposition in the finite-domain case can be easily modified to work in the infinite-domain case.

\begin{proposition} \label{prop:frachom}
	Let $\Delta$ and $\g$ be valued structures with the same signature $\tau$ and with domain $D$ and $C$, respectively. Assume that $\Delta \to_f \g$. Let $I$ be an instance of $\vcsp(\g)$ having variables $V_I=\{x_1,\ldots,x_n\}$, objective function $\phi_I(x_1,\ldots,x_n)$, and threshold $u_I \in \QQ$.  Suppose that there exists an assignment $h \colon V_I \to D$ such that $\phi_I^{\Delta}(h(x_1, \ldots, h(x_n)) \leq u_I$. Then there exists an assignment $h'\colon V_I \to C$ such that $\phi_I^{\g}(h'(x_1), \ldots, h'(x_n)) \leq u_I$. In particular, it holds that \[\inf_{c \in C^n} \phi_I^{\g}(c) \leq \inf_{d \in D^n} \phi_I^{\Delta}(d).\] 
\end{proposition}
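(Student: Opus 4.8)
The plan is to use the fractional homomorphism $\omega \colon C^D \to \Q_{\geq 0}$ witnessing $\Delta \to_f \g$ to produce the desired assignment $h'$ from $h$, essentially by averaging. The key inequality to exploit is that for each function symbol $f \in \tau$ and each tuple $a \in D^{\ar(f)}$ we have $\sum_{g \in C^D} \omega(g) f^{\g}(g(a)) \leq f^{\Delta}(a)$. The instance's objective function $\phi_I$ is a finite sum $\sum_{i=1}^m f_i(x^i_1,\dots,x^i_{\ar(f_i)})$, so I would first observe that summing the defining inequality over all the terms of $\phi_I$, with the tuple $a$ for the $i$-th term being $(h(x^i_1),\dots,h(x^i_{\ar(f_i)}))$, yields
\[
\sum_{g \in C^D} \omega(g)\, \phi_I^{\g}\bigl(g(h(x_1)),\dots,g(h(x_n))\bigr) \;\leq\; \phi_I^{\Delta}\bigl(h(x_1),\dots,h(x_n)\bigr) \;\leq\; u_I .
\]
Here I am using linearity: $\phi_I^{\g}(c) = \sum_i f_i^{\g}(c^i_1,\dots)$, so applying $g$ componentwise to $h$ and then $\phi_I^{\g}$ distributes the sum over $\omega$ term by term. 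A small technical point to address is that the $f_i^{\g}$ may take the value $+\infty$; I would note that since the left-hand side is bounded above by the finite number $u_I$, and $\omega$ has finite support with nonnegative weights summing to $1$, no term $\omega(g)\phi_I^{\g}(g\circ h)$ with $\omega(g) > 0$ can be $+\infty$.

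Next I would apply an averaging (pigeonhole) argument. The left-hand side of the displayed inequality is a convex combination, with weights $\omega(g)$ over $g \in \supp(\omega)$, of the quantities $\phi_I^{\g}(g(h(x_1)),\dots,g(h(x_n)))$. Since this weighted average is at most $u_I$ and the weights are nonnegative and sum to $1$, at least one $g^\ast \in \supp(\omega)$ must satisfy $\phi_I^{\g}(g^\ast(h(x_1)),\dots,g^\ast(h(x_n))) \leq u_I$. Setting $h' := g^\ast \circ h \colon V_I \to C$ then gives an assignment with $\phi_I^{\g}(h'(x_1),\dots,h'(x_n)) \leq u_I$, which is exactly the claimed existence statement.

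Finally, the ``in particular'' clause follows immediately. Taking $h$ to be an assignment achieving (or approaching) $\inf_{d \in D^n} \phi_I^{\Delta}(d)$, the argument above produces for every such $h$ an $h'$ over $C$ with no larger cost, so $\inf_{c \in C^n} \phi_I^{\g}(c) \leq \phi_I^{\Delta}(h)$; taking the infimum over all $h$ yields $\inf_{c \in C^n} \phi_I^{\g}(c) \leq \inf_{d \in D^n} \phi_I^{\Delta}(d)$. I expect the main (though minor) obstacle to be the careful handling of the $+\infty$ values and of the infimum-versus-minimum distinction in the infinite-domain setting: one must verify that the averaging step is valid even when some summands are $+\infty$, and that passing to the infimum does not require the infimum on the $\Delta$ side to be attained. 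Both are handled by the boundedness observation above and by the fact that the inequality $\inf_c \phi_I^{\g}(c) \leq \phi_I^{\Delta}(h)$ holds for every individual $h$.
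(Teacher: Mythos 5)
Your proof is correct and matches the approach the paper intends: the paper gives no explicit proof of this proposition, citing Proposition 2.1 of Thapper and \v{Z}ivn\'{y} \cite{ThapperZivny2012} with the remark that the finite-domain proof can easily be modified to work over infinite domains, and your argument---summing the fractional-homomorphism inequalities over the terms of $\phi_I$, exchanging the two finite sums, and extracting some $g^\ast$ in the support by averaging---is precisely that standard argument, with the $+\infty$ values and the infimum-versus-minimum issue handled correctly. Nothing further is needed.
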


Let $\g$ be a valued structure with countable (possibly infinite) domain $C$ and signature $\tau$. An $m$-ary \emph{operation} on $C$ is a function $g \colon C^m \to C$. Let $\mathcal O_C^{(m)}$ denote the set of all $m$-ary operations on $C$. \begin{definition}
An $m$-ary \emph{fractional operation} is a function $\omega\colon \mathcal O_C^{(m)} \to \Q_{\geq0}$. The set $\{g \in \mathcal O_C^{(m)} \mid \omega(g)>0\}$ is called the \emph{support} of $\omega$ and it is denoted by $\supp(\omega)$.
An $m$-ary fractional operation with finite support is a \emph{fractional polymorphism} if $\sum_{g \in \supp(\omega)}\omega(g)=1$ and for every $f \in \tau$ and tuples $a^1,\ldots,a^m \in C^{\ar(f)}$ it holds that \[\sum_{g \in \mathcal O^{(m)}_C}\omega(g)f^{\g}(g(a^1,\ldots,a^m))\leq \frac{1}{m}\sum_{i=1}^mf^{\g}(a^i)\] (where
$g$ is applied component-wise).
The set of all fractional polymorphisms of a valued structure $\g$ is denoted by $\fpol({\g})$. If $\omega$ is a fractional polymorphism of $\g$ we say that $\g$ is \emph{improved} by $\omega$.
\end{definition}


Let $S_m$ be the symmetric group on $\{1,\ldots,m\}$. An $m$-ary operation $g$ is \emph{fully symmetric} if for every permutation $\pi \in S_m$, we have \[g(x_1,\ldots,x_m)=g(x_{\pi(1)}, \ldots, x_{\pi(m)}).\] 
Note that every totally symmetric operation (see Definition \ref{def:totsym}) is also fully symmetric. 
A fractional operation is \emph{fully symmetric} (\emph{totally symmetric}, respectively) if every operation in its support is fully symmetric  (totally symmetric, respectively). 
The fractional operations in the next example are totally symmetric. 
\begin{example}\label{example:minmax}
    Let $D$ be a totally ordered set.
	Let $\min$ and $\max$ be, respectively, the binary operations giving the smallest and the largest among two arguments, respectively.
	The fractional operation $\omega_{\max}\colon\mathcal O^{(2)}_D \to [0,1]$, 	 \[\omega_{\max}(g):=\begin{cases}
	1 & \text{if }g=\max\\
	0 & \text{otherwise}
	\end{cases}\]
	is a binary totally symmetric fractional operation with finite support.
The fractional operation  $\omega_{\min}\colon\mathcal O^{(2)}_D \to [0,1]$, 	 is defined dually and it
	is a binary totally symmetric fractional operation with finite support, too.\demo
\end{example}

The next example shows a family of fractional operations that are fully symmetric but not totally symmetric.

\begin{example}\label{example:convex}
	Let $\avg^{(k)}\colon D^k \to D$ be the $k$-ary arithmetic average operation defined, for every $(x^1,\ldots,x^k) \in D^k$, by \[\avg(x^1,\ldots,x^k)=\frac{1}{k}\sum_{i=1}^kx^i.\]
	Let us define, for every $k \geq 2$, the fractional operation $\omega_{conv}^{(k)}\colon \mathcal O_D^{(k)} \to \Q_{\geq 0}$ such that \[\omega_{conv}^{(k)}(g)=\begin{cases}
	1 & \text{if } g =\avg^{(k)}\\
	0 & \text{otherwise.}
	\end{cases}\] 
	The fractional operations $\omega_{conv}^{(k)}$ are fully symmetric, but for $k \geq 3$  they are not totally symmetric,  because, e.g., $\{1,1,2\}=\{1,2,2\}$ but $\frac{1+1+2}{3}\neq\frac{1+2+2}{3}$.\demo
\end{example}

 \subsection{The Basic Linear Programming Relaxation}
 
 Every VCSP over a finite domain has a natural linear programming relaxation.
 Let $\g$ be a valued structure with finite domain $D$ and let $I$ be an instance of VCSP($\g$) with set of free variables $V_I=\{x_1,\ldots,x_d\}$, and objective function \[\phi_I(x_1,\ldots,x_d)=\sum_{ j \in J} f_j(x_1^j,\ldots,x_{n_j}^j) \text{,}\] with $ f_j \in\g,\; x^j=(x_1^j,\ldots,x_{n_j}^j)  \in V^{n_j}  \text{, for all } j \in J$ (the set $J$ is finite and indexing the cost functions that are summands of $\phi_I$).
Define the sets $W_1$, $W_2$, and $W$ of variables $\lambda_j(t)$, $\mu_{x_i}(a)$, for $j \in J$, $t \in D^{n_j}$, $x_i \in V$, and $a \in D$, as follows.  \begin{align*}
&W_1:=\{\lambda_{j}(t)\mid j \in J \text{ and }t=(t_1,\ldots,t_{n_j}) \in D^{n_j}\},\\
&W_2:=\{\mu_{x_i}(a) \mid x_i \in V \text{ and } a \in D\},\\
&W:=W_1\cup W_2.
\end{align*}
Then the \emph{basic linear programming, {BLP}, relaxation  associated to  $I$} (see \cite{ThapperZivny2012},
\cite{KolmogorovThapperZivny}, and references therein) is a linear program with variables  $W$ and  is defined as follows 
\[\label{blp} 
\blp(I, \g):=\min_{{\lambda_j(t)} \in \Q}{ \sum_{j \in J} \sum_{ t \in D^{n_j}} \lambda_{j}(t) f^{\g}_j(t)} \]
such that
\begin{align*}
 \sum_{t \in D^{n_j}: t_l=a}\lambda_j(t)=\mu_{x_l^j}(a) &\quad & \text{for all } j \in J \text{, } l \in \{1,\ldots,n_j\}\text{, } a \in D\\ 
\sum_{ a \in D}\mu_{x_i}(a)=1 &\quad &\text{for all } x_i \in V\\
\lambda_j(t)=0  &\quad & \text{for all } j \in J\text{, } t \notin \dom(f_j)\\
0\leq \lambda_j(t), \mu_{x_i}(a) \leq 1 &\quad & \text{for all } \lambda_j(t) \in W_1,\; \mu_{x_i}(a) \in W_2.
\end{align*}

If there is no feasible solution to this linear program, then $\blp(I,\g)=+\infty$. We say that the BLP relaxation \emph{solves} $I$ if $\blp(I,\g)=\min_{x \in D^d}\phi_I(x)$. We say that the BLP relaxation solves $\vcsp(\g)$ if it solves all instances of $\vcsp(\g)$.
For a given instance of a finite-domain VCSP, the corresponding BLP relaxation can be computed in polynomial time.  Therefore, if the VCSP for a valued structure $\g$ is solved by the BLP relaxation, then $\vcsp(\g)$ can be solved in polynomial time.

Let $\g$ be a valued structure that admits an efficient sampling algorithm.
We may solve $\vcsp(\g)$ by using the following algorithm that computes the BLP relaxation of the sample, and then solves the BLP relaxation.

\vskip \baselineskip

\begin{algorithm}[H] 
\SetAlgoNoLine
\KwIn{$I:=(V_I,\phi_I,u_I)$.}
\KwOut{accepts if there exists an assignment $h \colon V_I\to \dom(\g)$ such that $\phi_I(h(x_1,\ldots,x_{\lvert V_I\rvert }))\leq u_I.$}

$\Delta:=\text{Sampling}_{\g}(\lvert V_I\rvert)$\;
$\blp(I,\Delta)$\;
\eIf{$\blp(I,\Delta)\leq u_I$}{accept\;
		}{reject\;
	}            
      
\caption{Sampling $+$ BLP Algorithm}
\label{figure:algblp}
\end{algorithm}


\vskip \baselineskip
Note that Algorithm \ref{figure:algblp} runs in polynomial time in $\lvert V_I \rvert$, and that if it rejects, then  indeed the answer to $\vcsp(\g)$ is no, without further assumptions.



In the following we present a sufficient condition under which Algorithm \ref{figure:algblp} correctly solves $\vcsp(\g)$. Valued structures over a finite domain that can be solved by the BLP relaxation have been characterised by Kolmogorov, Thapper, and \v{Z}ivn\'{y}. 
\begin{theorem}[\cite{KolmogorovThapperZivny}, Theorem 1] \label{thm:thepwrofblpforfinitedom}
Let $\Delta$ be a valued structure with finite signature and finite domain. Then the following are equivalent:\begin{enumerate}
\item the $\blp$ relaxation solves $\vcsp(\Delta)$;
\item $\Delta$ has fully symmetric fractional polymorphisms of all arities.
\end{enumerate} 
\end{theorem}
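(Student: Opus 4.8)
The plan is to prove the two implications separately, using throughout that $\blp$ is always a \emph{relaxation}: any integral assignment $\alpha \colon V_I \to D$ induces a feasible BLP solution via $\mu_{x_i}(a) = 1$ iff $\alpha(x_i) = a$ and $\lambda_j(t) = 1$ iff $t = (\alpha(x^j_1),\dots,\alpha(x^j_{n_j}))$, so that $\blp(I,\Delta) \leq \min_{x \in D^d}\phi_I(x)$ for every instance $I$ (and $\blp(I,\Delta) = +\infty$ forces the instance to be infeasible). Both directions thus reduce to controlling the reverse inequality.

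\textbf{Direction (2) $\Rightarrow$ (1).} Assume $\Delta$ has a fully symmetric $\omega_m \in \fpol(\Delta)$ for every arity $m$; I would show $\blp(I,\Delta) = \min_x \phi_I(x)$ for all $I$. Fix an optimal BLP solution $(\lambda_j,\mu_{x_i})$; if finite, it is attained at a rational vertex, so there is a common denominator $m$ with $m\mu_{x_i}(a)\in\mathbb{Z}_{\geq 0}$ and $m\lambda_j(t)\in\mathbb{Z}_{\geq 0}$. For each variable $x_i$ form the multiset $\sigma(x_i)\in D^m$ listing each $a\in D$ exactly $m\mu_{x_i}(a)$ times, and for each $g\in\supp(\omega_m)$ set $\alpha_g(x_i) := g(\sigma(x_i))$, which by full symmetry is independent of the chosen ordering. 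The crucial point is that, \emph{per constraint} $j$, the marginal conditions $\sum_{t:t_l=a}\lambda_j(t)=\mu_{x^j_l}(a)$ let one reorder the $m$ entries of each column $\sigma(x^j_l)$ so that the $m$ resulting rows realise every tuple $t$ exactly $m\lambda_j(t)$ times; full symmetry makes these (mutually incompatible) reorderings harmless. Applying the fractional-polymorphism inequality to these $m$ rows gives
\[
\sum_{g} \omega_m(g)\, f^\Delta_j\big(\alpha_g(x^j_1),\dots,\alpha_g(x^j_{n_j})\big) \;\leq\; \frac{1}{m}\sum_{k=1}^m f^\Delta_j(\text{$k$-th row}) \;=\; \sum_{t \in D^{n_j}} \lambda_j(t)\, f^\Delta_j(t).
\]
Summing over $j\in J$ yields $\sum_g \omega_m(g)\,\phi_I(\alpha_g)\leq \blp(I,\Delta)$, so some $g\in\supp(\omega_m)$ achieves $\phi_I(\alpha_g)\leq \blp(I,\Delta)$, which together with the relaxation bound forces equality.

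\textbf{Direction (1) $\Rightarrow$ (2).} Fix an arity $m$. Existence of a fully symmetric fractional polymorphism of arity $m$ is the feasibility of a \emph{finite} linear program $P_m$: the variables are $\omega(g)\geq 0$ over the finitely many fully symmetric $g\in\mathcal O_D^{(m)}$, subject to $\sum_g \omega(g)=1$ and, for every $f\in\tau$ and every $a^1,\dots,a^m\in D^{\ar(f)}$, the constraint $\sum_g \omega(g) f^\Delta(g(a^1,\dots,a^m)) \leq \tfrac1m\sum_i f^\Delta(a^i)$ (here $D$ and $\tau$ finite guarantee finiteness). I would establish feasibility by contraposition through Farkas' lemma: an infeasibility certificate is a family of nonnegative weights $y_{f,\vec a}$ on the constraint rows, which can be assembled into a single VCSP instance $I$ whose optimal BLP solution is the symmetric fractional one dictated by the certificate and whose BLP value lies strictly below every integral value. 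This contradicts the hypothesis that $\blp$ solves $\vcsp(\Delta)$, so $P_m$ must be feasible.

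\textbf{Main obstacle.} In direction (2) $\Rightarrow$ (1) the delicate step is the simultaneous coupling: a single variable $x_i$ usually occurs in several constraints, each demanding a different row-ordering of $\sigma(x_i)$ to match its local distribution $\lambda_j$, and it is precisely \emph{full} symmetry --- permutation invariance, respecting multiplicities --- that makes $\alpha_g(x_i)$ well defined across these incompatible orderings. In direction (1) $\Rightarrow$ (2) the hard part is the converse bookkeeping, namely manufacturing from the abstract Farkas certificate an explicit instance in which the dual weights reappear as BLP marginals and the value gap is genuinely witnessed; the finiteness of $D$ and $\tau$ is what keeps $P_m$ a finite LP and makes this translation possible.
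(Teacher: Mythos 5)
The first thing to note is that the paper does not prove this statement at all: Theorem~\ref{thm:thepwrofblpforfinitedom} is imported as a black box from \cite{KolmogorovThapperZivny}, so your proposal can only be compared against the original proof there, and against the proof of Theorem~\ref{thm:fullysymfpol}, where the paper reproduces the relevant half in generalised form. Your direction (2)$\Rightarrow$(1) is correct and is essentially that standard argument: realising each $\lambda_j$ as $m$ rows via the marginal conditions, and using full symmetry to make $g(\sigma(x_i))$ well defined despite the mutually incompatible per-constraint orderings, is exactly the computation that the paper (following \cite{ThapperZivny2012}) packages via the multiset structure $\mathcal P^M(\Delta)$ and the fractional homomorphism of Lemma~\ref{lemma:TZfhom}.

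The genuine gap is in direction (1)$\Rightarrow$(2). Your Farkas certificate yields the lower bound $\sum_{f,\vec a} y_{f,\vec a}\, f^\Delta(g(\vec a)) \geq z$ only for \emph{fully symmetric} $g$, whereas the hypothesis that $\blp$ solves $\vcsp(\Delta)$ compares $\blp(I,\Delta)$ against \emph{all} integral assignments of whatever instance $I$ you build. If the variables of $I$ are the columns of the matrices $\vec a$, viewed as elements of $D^m$ (the natural reading of your sketch), then an integral assignment is an arbitrary map $D^m \to D$, and a non-symmetric map may well have cost below $z$: the certificate says nothing about it, and no contradiction follows. The missing idea --- which is the crux of the original proof --- is to take the variables of $I$ to be the $m$-element \emph{multisets} over $D$ (equivalently, to symmetrise the instance), so that integral assignments correspond exactly to fully symmetric $m$-ary operations; then the certificate bounds every integral assignment from below by $z$, while the canonical fractional solution (marginals given by the column multiset distributions, $\lambda_j$ given by row frequencies, constraint multiplicities given by the scaled certificate weights) has value strictly below $z$. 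You flag this translation as ``the hard part'' but leave it entirely open, and the naive instantiation fails. A second, smaller omission: the program $P_m$ as you write it has coefficients $f^\Delta(\cdot)$ that may equal $+\infty$, so before Farkas' lemma applies one must discard the vacuous constraints (those with some $f^\Delta(a^i)=+\infty$) and add the constraints $\omega(g)=0$ for every $g$ mapping a feasible tuple family to an infeasible point, to obtain a genuine finite rational LP.
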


One of the main results of this section, Theorem~\ref{thm:fullysymfpol}, states that  if $\g$ is improved by fully symmetric fractional operations of
all arities, then Algorithm \ref{figure:algblp} correctly solves $\vcsp(\g)$ in polynomial time. Note that there are examples of PLH valued structures which have fully symmetric fractional polymorphisms, but the computed sample
does not have fully symmetric fractional polymorphisms (Example \ref{ex:convexsubcase}).

\begin{definition}
Let $\Delta$ be a valued $\tau$-structure with domain $D$ and let $m \geq 1$. The \emph{multiset-structure} $\mathcal P^m(\Delta)$  \cite{ThapperZivny2012}  is the valued structure with domain $\left(\binom{D}{m}\right)$ i.e., the set of multisets of elements from $D$ of size $m$, and for every $k$-ary function symbol $f\in \tau$, and $\alpha_1, \ldots,\alpha_k \in \left(\binom{D}{m}\right)$ the function $f^{\mathcal P^m(\Delta)}$ is defined as follows
\[f^{\mathcal P^m(\Delta)}(\alpha_1,\ldots, \alpha_k):=\frac{1}{m}\min_{t^1,\ldots,t^k \in D^m: \{t^l\}=\alpha_l}\sum_{i=1}^{m}f^{\Delta}(t^1_i,\ldots,t^k_i).\]
(Here we denote by $\{t^l\}$ the multiset whose elements are the coordinates of $t^l$.)\end{definition}

\begin{lemma}[\cite{ThapperZivny2012}, Lemma 2.2] \label{lemma:TZfhom}
	Let $\Delta$ be a valued structure with finite domain, and $m\geq2$. Then $\mathcal P^m(\Delta)\to_f \Delta$ if, and only if, $\Delta$ has an $m$-ary fully symmetric fractional polymorphism.
\end{lemma}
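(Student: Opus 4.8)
The plan is to exploit the evident bijection between $m$-ary \emph{fully symmetric} operations on $D$ and functions defined on the set of size-$m$ multisets over $D$. An $m$-ary operation $h\colon D^m\to D$ is fully symmetric precisely when its value depends only on the multiset $\{x_1,\dots,x_m\}$ of its arguments; hence $h$ factors uniquely as $h(x_1,\dots,x_m)=g_h(\{x_1,\dots,x_m\})$ for a function $g_h\colon\binom{D}{m}\to D$, and conversely every such $g$ induces a fully symmetric $h_g$. This correspondence $h\leftrightarrow g$ is a bijection, so a fractional operation supported on fully symmetric $m$-ary operations on $D$ is the same data as a fractional operation supported on functions $\binom{D}{m}\to D$, with weights transported along the bijection. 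Since $D$ is finite, all objects involved (multisets, operations, supports, and the minimum in the definition of $\mathcal P^m(\Delta)$) are finite, and the normalisation $\sum\omega=1$ is preserved.

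For the direction ($\Leftarrow$), I would start from an $m$-ary fully symmetric fractional polymorphism $\omega$ of $\Delta$ and set $\omega'(g_h):=\omega(h)$, obtaining a candidate fractional homomorphism $\mathcal P^m(\Delta)\to_f\Delta$. To verify the defining inequality at a tuple $\alpha=(\alpha_1,\dots,\alpha_k)$ of multisets, I would choose tuples $t^1,\dots,t^k\in D^m$ with $\{t^l\}=\alpha_l$ \emph{achieving the minimum} in the definition of $f^{\mathcal P^m(\Delta)}(\alpha)$. Because each $g_h$ is invariant under reordering the entries of its multiset argument, one has $g_h(\alpha_l)=h(t^l)$, so the left-hand side $\sum_g\omega'(g)f^\Delta(g(\alpha_1),\dots,g(\alpha_k))$ rewrites as $\sum_h\omega(h)f^\Delta(h(t^1),\dots,h(t^k))$. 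Setting $a^i:=(t^1_i,\dots,t^k_i)$ for $i=1,\dots,m$ and applying $h$ componentwise turns this into $\sum_h\omega(h)f^\Delta(h(a^1,\dots,a^m))$, which by the fractional-polymorphism inequality is at most $\frac1m\sum_{i=1}^m f^\Delta(a^i)=\frac1m\sum_{i=1}^m f^\Delta(t^1_i,\dots,t^k_i)=f^{\mathcal P^m(\Delta)}(\alpha)$, as desired.

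For ($\Rightarrow$), I would run the same bookkeeping in reverse: from a fractional homomorphism $\omega'$ witnessing $\mathcal P^m(\Delta)\to_f\Delta$, set $\omega(h_g):=\omega'(g)$ and check the fractional-polymorphism inequality at arbitrary $a^1,\dots,a^m\in D^k$. Here I would form the columns $t^l:=(a^1_l,\dots,a^m_l)$ and the multisets $\alpha_l:=\{t^l\}$, and apply the homomorphism inequality at $\alpha=(\alpha_1,\dots,\alpha_k)$. The left-hand side again equals $\sum_h\omega(h)f^\Delta(h(a^1,\dots,a^m))$, while the right-hand side is $f^{\mathcal P^m(\Delta)}(\alpha)$, which, since the minimum defining it ranges over all reorderings, is at most the particular value $\frac1m\sum_{i=1}^m f^\Delta(a^i)$ obtained from the representatives $t^l$ themselves. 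Chaining the two inequalities yields exactly the fractional-polymorphism condition.

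The only genuinely delicate point, and the one I would write out carefully, is the direction of the inequality created by the minimum in the definition of $f^{\mathcal P^m(\Delta)}$: in ($\Leftarrow$) one must select the minimising representatives so that $f^{\mathcal P^m(\Delta)}(\alpha)$ is reached with equality, whereas in ($\Rightarrow$) the minimum must instead be bounded above by the single chosen reordering. Everything else is index bookkeeping built on the multiset/fully-symmetric bijection.
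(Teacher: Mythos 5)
Your proposal is correct and takes essentially the same approach as the paper: the paper does not prove this lemma itself but cites it from Thapper and \v{Z}ivn\'{y}, remarking that it ``follows directly from the definition of $\mathcal P^m(\Delta)$'', and your argument is exactly that direct verification, via the weight-preserving bijection between fully symmetric $m$-ary operations on $D$ and functions from size-$m$ multisets to $D$. Both directions, including the key point that the minimum in the definition of $f^{\mathcal P^m(\Delta)}$ is attained (by finiteness of $D$) and must be used with the correct orientation in each direction, are handled soundly.
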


\begin{lemma}\label{lemma:fhom}
	Let $\g$  be a valued structure (with finite signature $\tau$), and 
	let $\Delta$ be a valued structures with finite domain such that $\Delta \to_f \g$, and let $m\geq2$ be an integer. If  $\g$ has an $m$-ary fully symmetric fractional polymorphism, then $\mathcal{P}^m(\Delta) \to_f \g$. 
\end{lemma}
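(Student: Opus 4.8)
The plan is to build the desired fractional homomorphism $\omega$ explicitly, by composing the given fractional homomorphism $\rho \colon C^D \to \Q_{\geq 0}$ witnessing $\Delta \to_f \g$ with the $m$-ary fully symmetric fractional polymorphism $\omega_{\g} \colon \mathcal O_C^{(m)} \to \Q_{\geq 0}$ of $\g$. Both have finite support and weights summing to $1$. For each pair $(\phi,h)$ with $\phi \in \supp(\rho)$ and $h \in \supp(\omega_{\g})$ I would define a function $G_{\phi,h} \colon \binom{D}{m} \to C$ by $G_{\phi,h}(\alpha) := h(\phi(d_1),\ldots,\phi(d_m))$ whenever $\alpha = \{d_1,\ldots,d_m\}$. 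This is well defined: since $h$ is fully symmetric, the value does not depend on the order in which the elements of the multiset $\alpha$ are listed. I then set $\omega(G) := \sum_{(\phi,h)\colon G_{\phi,h}=G}\rho(\phi)\,\omega_{\g}(h)$, which has finite support and satisfies $\sum_G \omega(G) = \big(\sum_\phi \rho(\phi)\big)\big(\sum_h \omega_{\g}(h)\big) = 1$.

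The core of the argument is then verifying the fractional-homomorphism inequality. Fix a $k$-ary symbol $f \in \tau$ and a tuple $\bar\alpha = (\alpha_1,\ldots,\alpha_k) \in \big(\binom{D}{m}\big)^k$; when $f^{\mathcal P^m(\Delta)}(\bar\alpha)=+\infty$ the inequality is trivial, so I may assume it is finite. Let $t^1,\ldots,t^k \in D^m$ be orderings of $\alpha_1,\ldots,\alpha_k$ attaining the minimum in the definition of $f^{\mathcal P^m(\Delta)}(\bar\alpha)$, so that $f^{\mathcal P^m(\Delta)}(\bar\alpha) = \frac{1}{m}\sum_{i=1}^m f^{\Delta}(t^1_i,\ldots,t^k_i)$. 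Here I would use full symmetry of each $h \in \supp(\omega_{\g})$ in an essential way: it lets me evaluate $G_{\phi,h}(\alpha_l) = h(\phi(t^l_1),\ldots,\phi(t^l_m))$ using precisely these optimal orderings. Writing $b^i := (\phi(t^1_i),\ldots,\phi(t^k_i)) \in C^k$ for $i=1,\ldots,m$, the tuple $\big(G_{\phi,h}(\alpha_1),\ldots,G_{\phi,h}(\alpha_k)\big)$ equals the componentwise application $h(b^1,\ldots,b^m)$.

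The verification then chains the two hypotheses. Summing over $h$ and applying the fractional-polymorphism inequality of $\omega_{\g}$ to the tuples $b^1,\ldots,b^m$ gives $\sum_h \omega_{\g}(h)\, f^{\g}(h(b^1,\ldots,b^m)) \leq \frac{1}{m}\sum_{i=1}^m f^{\g}(b^i)$. Since $f^{\g}(b^i) = f^{\g}(\phi(t^1_i),\ldots,\phi(t^k_i))$, summing over $\phi$ with weights $\rho(\phi)$ and applying the fractional-homomorphism inequality of $\rho$ to each tuple $(t^1_i,\ldots,t^k_i) \in D^k$ yields $\frac{1}{m}\sum_{i=1}^m f^{\Delta}(t^1_i,\ldots,t^k_i)$, which is exactly $f^{\mathcal P^m(\Delta)}(\bar\alpha)$. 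Combining these estimates gives $\sum_G \omega(G)\, f^{\g}(G(\bar\alpha)) \leq f^{\mathcal P^m(\Delta)}(\bar\alpha)$, as required.

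The main obstacle I anticipate is the interaction between the minimum over orderings built into $f^{\mathcal P^m(\Delta)}$ and the fact that a single function $G$ sees each multiset argument in isolation and cannot depend on the whole instance. The naive attempt—applying $m$ independently chosen functions from $\supp(\rho)$ to the $m$ slots of a multiset—fails exactly here, because it forces a fixed (say canonical) ordering of each multiset, which in general differs from the instance-dependent minimizer and hence bounds the left-hand side only by the \emph{canonical}-ordering sum rather than by the minimum. The resolution is to apply one and the same $\phi$ to all elements of a multiset and to exploit full symmetry of $h$: this makes $G_{\phi,h}$ order-independent, so the optimal orderings $t^l$ may be substituted freely in the analysis while keeping the rows $b^i$ consistent across all $k$ arguments, which is precisely what allows the fractional-polymorphism step to close the bound against the minimum rather than an arbitrary ordering.
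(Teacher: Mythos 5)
Your proposal is correct and takes essentially the same approach as the paper: the same composite functions (fractional-homomorphism map applied to each element of the multiset, then the fully symmetric polymorphism), the same use of full symmetry for well-definedness, the same weight construction, and the same chaining of the polymorphism inequality followed by the homomorphism inequality. The only cosmetic difference is that the paper carries arbitrary orderings of the multisets through the inequality chain and takes the minimum at the very end, whereas you fix the minimizing orderings up front — the two are interchangeable precisely because, as you note, full symmetry makes the composed functions order-independent.
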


Lemma \ref{lemma:fhom} is a generalisation of Lemma \ref{lemma:TZfhom} to valued structures with arbitrary domain. However, while Lemma  \ref{lemma:TZfhom} follows directly from the definition of  $\mathcal P^m(\Delta)$, our proof of Lemma \ref{lemma:fhom} is inspired by the proof of \cite{BodMacpheTha}. In fact, Lemma \ref{lemma:fhom} is a generalisation of \cite{BodMacpheTha}, Lemma 2.4 to valued structures, and to the weaker assumption that the polymorphisms are fully symmetric rather than totally symmetric.

\begin{proof}[of Lemma \ref{lemma:fhom}]
 Let $C$ and $D$ be, respectively, the domain of $\g$ and the domain of $\Delta$, respectively. Let $\chi$ be a fractional homomorphism from $\Delta$ to $\g$ and let $\omega$ be an $m$-ary fully symmetric fractional polymorphism of $\g$. For every $g \in \supp(\omega)\subseteq \mathcal{O}_C^{(m)}$  and every $h \in \supp(\chi)\subseteq C^{D}$ we define \[(g\circ h)\colon \left(\binom{D}{m}\right )\to C\] by setting
	\[(g\circ h)(\alpha)=g(h(a^1),\ldots,h(a^m))\]for every  $\alpha=\{a^1,\ldots,a^m\}\in \left(\binom{D}{m}\right).$
	Observe that $(g\circ h)$ is well defined as $g$ is fully symmetric (the order of $h(a^1),\ldots,h(a^m)$ does not matter). We define the function $\omega'\colon C^{\left(\binom{D}{m}\right)}\to \Q_{\geq0}$ as  follows, for every $g \in C^{\left(\binom{D}{m}\right)}$, \[\omega'(g')=\sum_{g \in \supp(\omega)}\sum_{h \in \supp(\chi) : g \circ h=g'}\omega(g)\chi(h).\] 
	We claim that $\omega'$ is a fractional homomorphism from ${\mathcal{P}^m(\Delta})$ to $\g$. Indeed, the support $\supp(\omega')=\{(g\circ h)\mid g \in \supp(\omega),h \in \supp(\chi)\}$, is finite as the support of $\omega$ and the support of $\chi$ are finite. It also holds that
	\[\sum_{g' \in \supp(\omega')}\omega'(g')=\sum_{g \in \supp(\omega)} \omega(g) \sum_{h \in \supp(\chi)}\chi(h)=\sum_{g \in \supp(\omega)}\omega(g)=1.\]
Furthermore, for every $k$-ary  $f \in \tau$ and tuple $(\alpha_1,\ldots, \alpha_k)\in \left(\binom{D}{m}\right)^k$, with $\alpha_i=\{\alpha_i^1,\ldots,\alpha_i^m\}$, it holds that 
	\begin{align}\nonumber
			& \sum_{g' \in C^{\left(\binom{D}{m}\right)}}\omega'(g')f^{\g}(g'(\alpha_1,\ldots,\alpha_k))\\ \nonumber =&\hskip -.5\baselineskip \sum_{g \in \supp(\omega)}\sum_{h \in \supp(\chi)}\omega(g)\chi(h)f^{\g}(g(h(\alpha_1^1),\ldots,h(\alpha_1^m)),\ldots,g(h(\alpha_k^1),\ldots,h(\alpha_k^m)))\\ 
			\label{eq:frhom2}
			\leq & \hskip -.5\baselineskip\sum_{h \in \supp(\chi)}\chi(h)\left(\frac{1}{m}\sum_{i=1}^mf^{\g}\left(h(\alpha_1^{\pi_1(i)}),\ldots,h(\alpha_1^{\pi_1(i)})\right) \right)\\ \nonumber
			= & \frac{1}{m}\sum_{i=1}^m\sum_{h \in \supp(\chi)}\chi(h)f^{\g}\left(h(\alpha_1^{\pi_1(i)}),\ldots,h(\alpha_1^{\pi_1(i)})\right)\\\label{eq:frhom3}
			\leq & \frac{1}{m}\sum_{i=1}^mf^{\Delta} (\alpha^{\pi_1(i)}_1, \ldots, \alpha^{\pi_k(i)}_k) 
			\end{align}
	for every $\pi_1, \ldots,\pi_k \in S_m$. 
	 Inequality (\ref{eq:frhom2}) holds because $\omega$ is a fully symmetric fractional polymorphism of $\g$, and Inequality (\ref{eq:frhom3}) holds because $\chi$ is a fractional homomorphism from $\Delta$ to $\g$.
	Then, in particular it holds that \begin{align*}&\sum_{g' \in C^{\left(\binom{D}{m}\right)}}\omega'(g')f^{\g}(g'(\alpha_1,\ldots,\alpha_k)) \leq \frac{1}{m}\min_{t^1,\ldots,t^k \in D^m: \{t^l\}=\alpha_l}\sum_{i=1}^{m}f^{\Delta}(t_{i}^1, \ldots t_{i}^k)\\=& f^{\mathcal{P}^m(\Delta)}(\alpha_1,\ldots,\alpha_k).
	\end{align*}
\end{proof}

\begin{theorem}\label{thm:fullysymfpol}
	Let $\g$ be a valued structure with finite signature having fully symmetric fractional polymorphisms of all arities. If there exists an efficient sampling algorithm for $\g$, then Algorithm \ref{figure:algblp}  correctly solves $\vcsp(\g)$ (in polynomial time). 
\end{theorem}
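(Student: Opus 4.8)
The plan is to verify the correctness of Algorithm~\ref{figure:algblp} in its two directions; the polynomial running time has already been observed. Write $d:=|V_I|$, let $\Delta$ be the sample $\mathrm{Sampling}_\g(d)$ with finite domain $D$, and recall that by the definition of a sampling algorithm for valued structures one has $\Delta\to_f\g$. First I would dispose of the easy direction, namely that acceptance is \emph{complete}: if some $h\colon V_I\to\dom(\g)$ satisfies $\phi_I^\g(h)\le u_I$, then the sampling property supplies $h'\colon V_I\to D$ with $\phi_I^\Delta(h')\le u_I$, and since the BLP is a relaxation of the finite-domain problem, $\blp(I,\Delta)\le\min_{x\in D^d}\phi_I^\Delta(x)\le u_I$, so the algorithm accepts. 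Note that this direction uses neither the fractional polymorphisms nor Theorem~\ref{thm:thepwrofblpforfinitedom}.

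The soundness direction is where the work lies, and the main obstacle is conceptual: the finite sample $\Delta$ itself need \emph{not} carry fully symmetric fractional polymorphisms (Example~\ref{ex:convexsubcase}), so Theorem~\ref{thm:thepwrofblpforfinitedom} must not be applied to $\Delta$. Instead I would route the argument through the multiset-structure. Assume the algorithm accepts, i.e.\ $\blp(I,\Delta)\le u_I$; in particular the LP is feasible with finite optimum. As the BLP is a rational linear program, it has an optimal solution $(\lambda_j(t),\mu_{x_i}(a))$ with all values in $\tfrac1m\mathbb Z$ for some integer $m\ge2$ (inflate the common denominator if necessary). For each variable $x_i$ form the multiset $\alpha_i\in\left(\binom{D}{m}\right)$ in which each $a\in D$ occurs with multiplicity $m\,\mu_{x_i}(a)$; the normalisation constraints guarantee $|\alpha_i|=m$.

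The key bookkeeping step is to read each $\lambda_j$ as a multiset of $m$ tuples in $D^{n_j}$, the tuple $t$ taken with multiplicity $m\,\lambda_j(t)$ (the BLP constraints force $\lambda_j(t)=0$ off $\dom(f_j)$, so no $+\infty$ terms arise): the marginal constraints $\sum_{t:t_l=a}\lambda_j(t)=\mu_{x_l^j}(a)$ say precisely that the $l$-th coordinate of this tuple-multiset is $\alpha_{x_l^j}$. Hence this tuple-multiset is an admissible choice in the minimum defining $f_j^{\mathcal P^m(\Delta)}$, giving $f_j^{\mathcal P^m(\Delta)}(\alpha_{x_1^j},\dots,\alpha_{x_{n_j}^j})\le\sum_t\lambda_j(t)f_j^\Delta(t)$; summing over $j\in J$ yields $\phi_I^{\mathcal P^m(\Delta)}(\alpha_1,\dots,\alpha_d)\le\blp(I,\Delta)\le u_I$.

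Finally I would transfer this feasible multiset assignment back to $\g$. Since $\g$ has an $m$-ary fully symmetric fractional polymorphism and $\Delta\to_f\g$, Lemma~\ref{lemma:fhom} gives $\mathcal P^m(\Delta)\to_f\g$. Applying Proposition~\ref{prop:frachom} to this fractional homomorphism and to the assignment $(\alpha_1,\dots,\alpha_d)$ produces $h'\colon V_I\to\dom(\g)$ with $\phi_I^\g(h')\le u_I$, which is exactly what soundness requires. The only delicate point in the whole argument is the verification above that the BLP marginal conditions turn the denominator-$m$ fractional optimum into a legitimate witness inside the $\min$ in the definition of $\mathcal P^m(\Delta)$; everything else is a direct appeal to the cited results.
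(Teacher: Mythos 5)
Your proof is correct and follows essentially the same route as the paper's: the easy direction via the sampling property plus the fact that BLP is a relaxation, and the hard direction by converting an optimal rational BLP solution with common denominator $m$ into a multiset assignment witnessing $\phi_I^{\mathcal P^m(\Delta)}(\alpha_1,\dots,\alpha_d)\le\blp(I,\Delta)$ (the paper cites and reproduces this step from Thapper--\v{Z}ivn\'y), followed by Lemma~\ref{lemma:fhom} and Proposition~\ref{prop:frachom} to transfer the assignment back to $\g$. The paper's proof is the same argument, differing only in that it phrases the easy direction as the rejection case and outsources the marginal-bookkeeping computation to the cited result rather than verifying it inline as you do.
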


\begin{proof}
	Let $I$ be an instance of $\vcsp(\g)$ with variables $V_I=\{x_1, \ldots,x_n\}$, objective function $\phi_I(x_1,\ldots,x_n)=\sum_{j \in J}\gamma_j(x^j)$ where $J$ is a finite set of indices, $\gamma_j \in \g$, and $x^j \in V_I^{\ar(f_j)}$, and threshold $u_I$. Let $\Delta$ be the finite-domain valued structure computed by the sampling algorithm for $\g$ on input $\lvert V_I\rvert$. 
	
	Let $C$ be the (possibly infinite) domain of $\g$ and $D $ the finite domain of $\Delta$. Note that if $\blp(I,\Delta) \nleq u_I$ (this also includes the case $\blp(I,\Delta)=+\infty$, i.e., the case that $I$ is not feasible) then $\inf_{d \in D^n}\phi_I(d)\nleq u_I$ which implies $\inf_{c \in C^n}\phi_I(c)\nleq u_I$ since $D$ was produced by the sampling algorithm for $\g$ on input $\lvert V_I\rvert$.  We may therefore safely reject. Otherwise, if $\blp(I,\Delta)\leq u_I$, then 
	$\inf_{\alpha \in \left(\binom{D}{m}\right)^n}\phi_I(\alpha)\leq \blp(I,\Delta)$. The proof of this last statement is contained in the first part of the proof of Theorem 3.2 in \cite{ThapperZivny2012}; we report it here for completeness.
	Let $(\lambda^\star, \mu^\star)$ be an optimal solution to $\blp(I, \Delta)$ and let $M$ be a positive integer such that $M\cdot \lambda^\star$, and $M\cdot \mu^\star$ are both integral. Let $\nu\colon V_I \to \left( \binom{D}{M}\right)$ be defined by mapping the variable $x_i$ to the multiset in which the elements are distributed accordingly to $\mu_{x_i}^\star$, i.e., for every $a \in D$ the number of occurrences of $a$ in $\nu(x_i)$ is equal to $M\mu_{x_i}^\star(a)$. Let $f_j$ be a $k$-ary function symbol in $\tau$ that occurs in a term $f_j(x^j)$ of the objective function $\phi_I$. 
	Now we write \[M\cdot \sum_{t \in D^k}\lambda^\star_j(t)f^{\Delta}_j(t)=f^{\Delta}_j(\alpha^1)+\cdots+f^{\Delta}_j(\alpha^M)\text{,}\] where the $\alpha^i \in D^k$ are such that $\lambda^\star_j(t)$-fractions are equal to $t$. Let us define $\alpha_l':=(\alpha^1_i,\ldots,\alpha^M_i)$ for $1\leq i \leq k$. We get 
	\begin{align*}&\sum_{ t \in D^{k}}\lambda^\star_j(t)f_j^{\Delta}(t)=\frac{1}{M}\sum_{i=1}^{M}f_j^{\Delta}(\alpha^i)=\frac{1}{M}\sum_{i=1}^{M}f_j^{\Delta}(\alpha_1^i, \ldots, \alpha_k^i)\\ \geq&  \frac{1}{M}\min_{t^1,\ldots,t^k \in D^M: \{t^l\}=\{\alpha_l'\}}\sum_{i=1}^{M}f_j^{\Delta}(t^1_i,\ldots,t^k_i)=f_j^{\mathcal P^M(\Delta)}(\alpha'_1,\ldots,\alpha'_k)\\=&f_j^{\mathcal P^M(\Delta)}(\nu(x))\text{, }\end{align*}
	where the last equality follows as the number of $a$'s in $\alpha_i'$ is \[M \cdot \sum_{ t \in D^{k}: t^i=a}\lambda^\star_j(t)=M\cdot \mu^\star_{x_i}(a).\] Then \begin{align*}
	\blp(I, \Delta)=&\sum_{ j \in J}\sum_{t \in D^{\ar(f_j)}}\lambda^\star_j(t)f_j^{\Delta}(t) = \sum_{ j \in J}\left (\sum_{t \in D^{\ar(f_j)}}\lambda^\star_j(t)f_j^{\Delta}(t) \right)\\ \geq&   \sum_{ j \in J}\left (f_j^{\mathcal P^M(\Delta)}(\nu(x))\right) \geq \inf_{\alpha \in \left(\binom{D}{m}\right)^n}\phi_I(\alpha).  \end{align*}
	
	Since we assumed $\blp(I, \Delta)\leq u_I$, we obtain $\inf_{\left(\binom{D}{m}\right)}\phi_I(\alpha)\leq u_I$. Moreover, since $\g$ has fully symmetric fractional polymorphisms of all arities, Lemma \ref{lemma:fhom} implies the existence of a fractional homomorphism $\omega\colon C^{\left(\binom{D}{M}\right)}\to \Q_{\geq0}$. From Proposition \ref{prop:frachom} it follows that \[\inf_{c \in C^n}\phi_I(c)\leq \inf_{\alpha \in \left(\binom{D}{m}\right)^n}\phi_I(\alpha)\leq \blp(I, \Delta)\leq u_I.\]
 \end{proof}

We  now  give examples of valued structures that satisfy the hypothesis of Theorem \ref{thm:fullysymfpol}. 
A set $S \subseteq \Q^n$ is said to be \emph{convex} if for any two points $x$, $y \in S$ every point between them is still in $S$, i.e., for any $\lambda \in [0,1] \cap \Q$, $\lambda x+(1-\lambda) y \in S$.
A function  $f \colon \Q^n \rightarrow \QQ$ is said to be \emph{convex} if for any two points $x$, $y \in \Q^n$ and for any $\lambda \in [0,1] \cap \Q$, it holds \[f(\lambda x+(1-\lambda) y)\leq \lambda  f(x)+(1-\lambda) f(y).\]

\begin{proposition}\label{prop:convexfop}
	Let $D\subseteq \Q$ be a convex set. 	
	Let $\g$ be a valued structure with domain $D$ such that every cost function in $\g$ is convex. Then, for every $k \geq 2$, the valued structure $\g$ is improved by the fully symmetric fractional operation $\omega_{conv}^{(k)}\colon \mathcal O_D^{(k)} \to \Q_{\geq 0}$ (see Example \ref{example:convex}) such that \[\omega_{conv}^{(k)}(g)=\begin{cases}	1 & \text{if } g =\avg^{(k)}\\	0 & \text{otherwise.}\end{cases}\]
\end{proposition}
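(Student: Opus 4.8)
The plan is to check the two defining conditions of a fractional polymorphism for $\omega_{conv}^{(k)}$ directly. First I would note that $\avg^{(k)}$ is a genuine $k$-ary operation on $D$: for $x^1,\dots,x^k\in D$ the value $\avg^{(k)}(x^1,\dots,x^k)=\frac{1}{k}\sum_{i=1}^k x^i$ is a convex combination with rational coefficients, so by convexity of $D$ it again lies in $D$. Hence $\omega_{conv}^{(k)}$ is a well-defined fractional operation on $D$, and since its support is $\{\avg^{(k)}\}$ with weight $1$, the normalisation requirement $\sum_{g\in\supp(\omega_{conv}^{(k)})}\omega_{conv}^{(k)}(g)=1$ is immediate.

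The core of the argument is the fractional polymorphism inequality. Fix a function symbol $f\in\tau$ of arity $n:=\ar(f)$ and tuples $a^1,\dots,a^k\in D^n$. Applying $\avg^{(k)}$ componentwise produces the tuple $\frac{1}{k}\sum_{i=1}^k a^i$, which again lies in $D^n$ by convexity of $D$. Since $\avg^{(k)}$ is the only operation in the support, the inequality to establish becomes
\[
f^{\g}\left(\frac{1}{k}\sum_{i=1}^k a^i\right)\le\frac{1}{k}\sum_{i=1}^k f^{\g}(a^i).
\]
This is precisely the statement that the convex cost function $f^{\g}$ satisfies the discrete Jensen inequality at the uniform average of the points $a^1,\dots,a^k$.

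I would derive this from the two-point definition of convexity by induction on $k$. The base case $k=2$ is the definition applied with $\lambda=\frac{1}{2}$. For the inductive step I would write
\[
\frac{1}{k}\sum_{i=1}^k a^i=\frac{1}{k}a^k+\frac{k-1}{k}\,y,\qquad\text{where } y:=\frac{1}{k-1}\sum_{i=1}^{k-1}a^i\in D^n,
\]
so that $\frac{1}{k}\sum_{i=1}^k a^i=\lambda y+(1-\lambda)a^k$ with $\lambda:=\frac{k-1}{k}\in[0,1]\cap\Q$. Applying two-point convexity gives $f^{\g}(\lambda y+(1-\lambda)a^k)\le\lambda f^{\g}(y)+(1-\lambda)f^{\g}(a^k)$, and bounding $f^{\g}(y)$ by the inductive hypothesis yields $\frac{1}{k}\sum_{i=1}^k f^{\g}(a^i)$ after collecting terms. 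Convexity of $D$ is used twice here: to keep $\avg^{(k)}$ valued in $D$, and to ensure the auxiliary point $y$ lies in $D^n$, where $f^{\g}$ is defined.

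The only technical point, rather than a genuine obstacle, is that $f^{\g}$ takes values in $\QQ$ and so $+\infty$ may occur. This causes no trouble: if $f^{\g}(a^i)=+\infty$ for some $i$ the right-hand side is $+\infty$ and the inequality holds trivially, and the two-point convexity inequality is already formulated over $\QQ$, so the same convention propagates through the induction. I therefore expect the whole verification to be routine, with the elementary induction extending two-point convexity to the $k$-fold average being its only substantive ingredient.
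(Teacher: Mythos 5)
Your proposal is correct and takes essentially the same route as the paper: the paper's entire proof consists of invoking Jensen's inequality (with a citation) to obtain exactly the displayed inequality $f^{\g}\bigl(\frac{1}{k}\sum_{i=1}^k a^i\bigr)\le\frac{1}{k}\sum_{i=1}^k f^{\g}(a^i)$ for each convex cost function. You additionally spell out what the paper leaves implicit --- the elementary induction deriving the $k$-point Jensen inequality from two-point convexity, the use of convexity of $D$ to keep $\avg^{(k)}$ and the intermediate averages inside the domain, and the handling of $+\infty$ --- all of which is sound.
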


\begin{proof}
	Let $f \colon D^n \to \QQ$ be a convex cost function.  Jensen's Inequality (cf.\ \cite{jensen1906}) implies that for all $k \geq 2$ and for all $x^1,\ldots,x^k \in D^n$ \[f\left(\frac{1}{k}\sum_{i=1}^kx^i\right)\leq \frac{1}{k}\sum_{i=1}^kf(x^i).\]
	Therefore, for every $k \geq 2$, the function $f$ is improved by the fully symmetric fractional operation $\omega_{conv}^{(k)}$.
\end{proof}

\begin{example}\label{ex:convexsubcase}Let $\g$ be a convex valued structure with domain $\Q$ that  admits an efficient sampling algorithm. Then, for all $k \geq 2$, the valued structure $\g$ is improved by the $k$-ary fully symmetric fractional operation $\omega_{conv}^{(k)}$ defined  in Example \ref{example:convex} (see Proposition \ref{prop:convexfop}), and therefore, by Theorem \ref{thm:fullysymfpol}, Algorithm \ref{figure:algblp}  solves $\vcsp(\g)$.
	Note that the finite-domain valued structure computed by the sampling algorithm might not have fully symmetric fractional polymorphisms of all arities. In fact, the fractional polymorphisms $\omega_{conv}^{(k)}$  are not even inherited by  valued finite  substructures of $\g$ whose domain contains more than one element.\demo
\end{example}

If the VCSP for a valued structure is solvable by  Algorithm \ref{figure:algblp}, then it is possible to find a solution  in polynomial  time, by applying the self-reduction algorithm (see \cite{KolmogorovThapperZivny}) to the instance $I$ of $\vcsp(\Delta)$. Observe that the assignment obtained by self-reduction has values in $D$ and therefore it is also an assignment with values in $C$.  

\begin{proposition}[\cite{KolmogorovThapperZivny}, Proposition 8]
 Let $\g$ be an arbitrary valued structure with finite domain. Let $I:=(V_I,\phi_I,u_I)$ be an instance of $\vcsp(\g)$. If the $\blp$ relaxation solves $\vcsp(\g)$, then an  assignment to the variables in $V_I$ with cost at most $u_I$ can be found in polynomial time.
\end{proposition}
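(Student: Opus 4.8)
The plan is to proceed by \emph{self-reduction}, determining the values of the variables $x_1,\dots,x_n$ one at a time and using the (exact) BLP relaxation as an oracle at each step. First I would compute $\blp(I,\g)$; since the BLP relaxation solves $\vcsp(\g)$, this value equals $\min_{x\in\dom(\g)^n}\phi_I(x)$, so if $\blp(I,\g)>u_I$ there is no assignment of cost at most $u_I$ and I report failure, while if $\blp(I,\g)\le u_I$ I construct one as follows.

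Suppose I have already committed to $x_1=a_1,\dots,x_{i-1}=a_{i-1}$ in such a way that the minimum of $\phi_I$ over all completions respecting these choices is still at most $u_I$. To choose $a_i$, for each candidate $a\in\dom(\g)$ I would solve the BLP relaxation of the instance in which $x_i$ is additionally pinned to $a$ --- concretely, the same linear program augmented with the equation $\mu_{x_i}(a)=1$ --- and commit to any value $a_i$ for which the optimum is at most $u_I$. Once all $n$ variables are fixed, every marginal $\mu_{x_i}$ is an indicator and the variables $\lambda_j(t)$ are then forced, so the BLP objective coincides with the exact cost of the assignment $\alpha\colon x_i\mapsto a_i$; hence $\phi_I^{\g}(\alpha)\le u_I$, as desired. (In the application of this proposition $\dom(\g)$ is a finite subset of the infinite domain $C$, so $\alpha$ is simultaneously an assignment of cost at most $u_I$ over $C$.) The running time is clearly polynomial: there are $n$ steps, each solving $\lvert\dom(\g)\rvert$ linear programs whose size is polynomial in the instance and in $\lvert\dom(\g)\rvert$.

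Correctness of the loop needs two things: that the pinned linear program really computes $\min_{x\colon x_i=a}\phi_I(x)$, and that consequently some admissible $a_i$ always exists. The latter follows from the former, since $\min_{a}\min_{x\colon x_i=a}\phi_I(x)=\min_{x}\phi_I(x)\le u_I$ over the completions fixed so far. The former is the main obstacle: pinning $x_i=a$ turns the instance into one over $\g$ enriched by the crisp unary constant that is $0$ at $a$ and $+\infty$ elsewhere, which need not be an instance over $\g$, so exactness of the BLP is not immediate. I would resolve this through Theorem~\ref{thm:thepwrofblpforfinitedom}: BLP-solvability of $\g$ yields fully symmetric fractional polymorphisms of all arities, and the key step is to check that these may be taken \emph{idempotent}. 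An idempotent fully symmetric fractional polymorphism improves every partial evaluation $f(\cdot,\dots,a,\dots,\cdot)$ of a cost function of $\g$, as well as every crisp constant, so the enriched structure again has fully symmetric fractional polymorphisms of all arities and is therefore solved by the BLP by Theorem~\ref{thm:thepwrofblpforfinitedom}. Establishing that such idempotent polymorphisms exist --- equivalently, that BLP-solvability is preserved under adjoining crisp constants --- is the one point I expect to require care.
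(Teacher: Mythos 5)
You have reconstructed the intended algorithm: the paper does not prove this proposition itself, but cites it from \cite{KolmogorovThapperZivny} with the pointer ``self-reduction algorithm'', and that self-reduction is exactly your loop of pinning variables one at a time and testing the pinned BLP optimum against $u_I$. You have also put your finger on the right obstacle. The genuine gap is that the statement you defer --- that the fully symmetric fractional polymorphisms given by Theorem~\ref{thm:thepwrofblpforfinitedom} may be taken \emph{idempotent}, equivalently that BLP-solvability survives adjoining crisp constants --- is \emph{false} in general, so no amount of care closes the proof in the form you propose. Take $D=\{0,1\}$ and $\Gamma=\{f,g\}$ with $f(0,0)=g(0,0)=0$, $f(0,1)=f(1,0)=1$, $f(1,1)=10$, $g(0,1)=g(1,0)=10$, $g(1,1)=1$. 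For every $m$ the constant-$0$ operation of arity $m$ is fully symmetric and improves both $f$ and $g$ (they are nonnegative and vanish at the origin), so $\Gamma$ has fully symmetric fractional polymorphisms of all arities and the BLP solves $\vcsp(\Gamma)$ by Theorem~\ref{thm:thepwrofblpforfinitedom}. However, any fractional operation improving the crisp constants $c_0,c_1$ has only idempotent operations in its support; the only idempotent fully symmetric binary operations on $\{0,1\}$ are $\min$ and $\max$; and $p\,\delta_{\min}+(1-p)\,\delta_{\max}$ improves $f$ only if $10(1-p)\le 1$ (apply it to the argument tuples $(0,1)$ and $(1,0)$), while it improves $g$ only if $10p+(1-p)\le \frac{11}{2}$ (apply it to $(0,1)$ and $(1,1)$) --- a contradiction. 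Hence $\Gamma\cup\{c_0,c_1\}$ has no binary fully symmetric fractional polymorphism, and by the equivalence in Theorem~\ref{thm:thepwrofblpforfinitedom} the BLP does not solve $\vcsp(\Gamma\cup\{c_0,c_1\})$.

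This defeats not only the lemma but your loop as specified. Consider the instance $\phi=f(x,x)+g(x,y)$ with $u_I=\frac{13}{2}$; the assignment $x=y=0$ has cost $0$, and the BLP optimum of $\phi$ is $0$. Pin $y=1$: the pinned LP admits the feasible solution $\mu_x=(\frac12,\frac12)$, $\lambda_f=\frac12\delta_{(0,1)}+\frac12\delta_{(1,0)}$ (cost $1$; feasible because both marginal constraints of the repeated-variable scope $f(x,x)$ refer to the same $\mu_x$), $\lambda_g=\frac12\delta_{(0,1)}+\frac12\delta_{(1,1)}$ (cost $\frac{11}{2}$), of total value $\frac{13}{2}\le u_I$, so your rule permits committing $y=1$; yet $\min_a\,(f(a,a)+g(a,1))=10>u_I$, so at the next step no value of $x$ is admissible and this legal run fails, although a solution of cost $0$ exists. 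The missing idea --- needed in one form or another in the known proofs --- is the machinery of \emph{cores}: BLP-solvability passes from $\Gamma$ to a core of $\Gamma$ (a retract), and \emph{for a core} the idempotization does work (the supports of its unary fractional polymorphisms consist of bijections preserving every cost function, so replacing each operation $s$ in the support of a fully symmetric fractional polymorphism by $\hat s^{-1}\circ s$, where $\hat s(x):=s(x,\dots,x)$, yields an idempotent fully symmetric fractional polymorphism); the self-reduction is then run pinning only values of the core domain, which still yields an assignment over the original domain. Your example of failure above shows exactly this: the core of $\Gamma$ lives on $\{0\}$, and it is precisely the pin to the non-core value $1$ that the LP cannot certify.
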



We remark that Theorem \ref{thm:fullysymfpol} generalises the following known result (that we used in Section \ref{sect:csp-tract} to prove the polynomial-time complexity of the CSP for PLH relational structures preserved by $\max$).

\macphe*

More precisely, we extended Theorem \ref{macphe} to valued structures and, at the same time, to the weaker assumption of having fully symmetric polymorphisms of all arities rather than totally symmetric polymorphisms of all arities. 
The following example is adapted from \cite{Kun:2009:NLA:1536414.1536512} (Example 99), and it exhibits  a PL valued structure having fully symmetric polymorphisms of all arities but having no totally symmetric  polymorphism of arity $3$. 

\begin{example}
	Let us consider the PL valued structure $\g$ with domain $\Q$, signature $\{f_{+},f_{-}\}$, and such that $f_{+}^{\g},f_{-}^{\g}\colon \Q^3 \to \QQ$ are defined by \[f_{+}^{\g}(x_1,x_2,x_3):=\begin{cases}
	x_1+x_2+x_3 & \text{if } x_1+x_2+x_3\geq 1\\
	+\infty & \text{otherwise,}
	\end{cases}\]
	and
	\[f_{-}^{\g}(x_1,x_2,x_3):=\begin{cases}
	x_1+x_2+x_3 & \text{if } x_1+x_2+x_3\leq -1\\
	+\infty & \text{otherwise.}
	\end{cases}\]
	
	Clearly, the cost functions $f_{+}^{\g}$, and $f_{-}^{\g}$ are PL and it is easy to see that they are convex. As all cost functions in $\g$ are PL and convex, by Proposition \ref{prop:convexfop}, the valued structure $\g$ is improved by the fully symmetric fractional operations $\omega_{conv}^{(k)}$, for every $k\geq 2$, i.e., $\g$ has fully symmetric fractional polymorphisms of all arities. 
	We already observed that the fractional operations $\omega_{conv}^{(k)}$ are not totally symmetric for $k \geq 3$ (see Example \ref{example:convex}).
	
	Assume now that $\omega$ is a ternary totally symmetric fractional polymorphism of $\g$ and let $t\colon \Q^3 \to \Q$ be a totally symmetric operation in $\supp(\omega)$, then, in particular, $t$ is a polymorphism of $\feas(\g)$, i.e., $t$ preserves the feasibility relations   \begin{align*}
	&\dom(f_{+}^{\g})=\{(x_1,x_2,x_3) \in \Q^3 \mid x_1+x_2+x_3\geq 1 \}\text{, and}\\
	&\dom(f_{-}^{\g})=\{(x_1,x_2,x_3) \in \Q^3 \mid x_1+x_2+x_3\leq-1 \}.
	\end{align*}
	By the total symmetry of $t$, there exists $a \in \Q$ such that \begin{align*}&t(1,1,-1)=t(1,-1,1)=t(-1,1,1)\\=&t(-1,-1,1)=t(-1,1,-1)=t(1,-1,-1)=a.\end{align*}
	Observe that $(1,1,-1),(1,-1,1), (-1,1,1) \in \dom(f_{+}^{\g})$, then, by applying $t$ componentwise we get $(a,a,a) \in \dom(f_{+}^{\g})$, i.e., $a \geq \frac{1}{3}$;
	on the other hand, $(-1,-1,1),(-1,1,-1), (1,-1,-1) \in \dom(f_{-}^{\g})$, then, by applying $t$ componentwise we get $(a,a,a) \in \dom(f_{-}^{\g})$, i.e., $a \leq -\frac{1}{3}$, that is a contradiction.\demo
	
\end{example}

In the next section we will apply Theorem \ref{thm:fullysymfpol} to PLH valued structures.

\section{Polynomial-Time PLH VCSPs}
We apply the results of Sections \ref{sect:tract} and \ref{sect:totsym} to state the polynomial-time tractability of the VCSP for PLH valued structures with finite signature that are improved by fully symmetric operations of all arities.

\begin{proposition}\label{thm:PLHsym}
	Let $\g$ be a valued PLH constraint language with finite signature that is improved by fully symmetric fractional polymorphisms of all arities.
	Then $\vcsp(\g)$ can be solved in polynomial time.
\end{proposition}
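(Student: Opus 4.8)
The plan is to combine the two main results that have already been established, since the hypotheses of Proposition~\ref{thm:PLHsym} are designed to match exactly the hypotheses of Theorem~\ref{thm:fullysymfpol} together with the conclusion of Corollary~\ref{cor:PLHsampling}. First I would invoke the sampling machinery: because $\g$ is a PLH valued structure with finite signature, Corollary~\ref{cor:PLHsampling} guarantees the existence of an \emph{efficient} sampling algorithm for $\g$, i.e.\ one running in time polynomial in the bound $d$ on the number of distinct variables. This is the content of the entire development in Sections~\ref{sect:csp-tract} and~\ref{sect:tract}, culminating in the reduction to a finite rational domain via the map $\eta$ of Lemma~\ref{lemmaepsilon}.

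Next I would feed this into the algebraic tractability criterion. By hypothesis $\g$ is improved by fully symmetric fractional polymorphisms of all arities, so $\g$ satisfies precisely the two assumptions of Theorem~\ref{thm:fullysymfpol}: it has fully symmetric fractional polymorphisms of all arities, and it admits an efficient sampling algorithm. Theorem~\ref{thm:fullysymfpol} then asserts that Algorithm~\ref{figure:algblp}---which computes the sample $\Delta$ on input $\lvert V_I\rvert$, forms the basic linear programming relaxation $\blp(I,\Delta)$, and compares its value against the threshold $u_I$---correctly decides $\vcsp(\g)$, and does so in polynomial time in the size of the instance.

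There is essentially no further work and no serious obstacle: the proposition is a clean corollary of the preceding two results. The only point worth making explicit is that the two results dovetail without any compatibility issue, because the sample produced by Corollary~\ref{cor:PLHsampling} has a \emph{finite rational} domain $E_{\Phi,d}=\eta(D_{\Phi,d})\subseteq\Q$ on which the BLP relaxation of Theorem~\ref{thm:thepwrofblpforfinitedom} is available; the correctness of the BLP step for the sample is handled inside the proof of Theorem~\ref{thm:fullysymfpol} (via Lemma~\ref{lemma:fhom} and Proposition~\ref{prop:frachom}), so it need not be re-examined here. I would therefore write the proof as a one-line deduction: by Corollary~\ref{cor:PLHsampling} there is an efficient sampling algorithm for $\g$, and since $\g$ has fully symmetric fractional polymorphisms of all arities, Theorem~\ref{thm:fullysymfpol} implies that Algorithm~\ref{figure:algblp} solves $\vcsp(\g)$ in polynomial time.
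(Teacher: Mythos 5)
Your proposal is correct and matches the paper's own proof exactly: the paper also derives the proposition as an immediate consequence of Theorem~\ref{thm:fullysymfpol} combined with the efficient sampling algorithm for PLH valued structures from Corollary~\ref{cor:PLHsampling}. The extra remarks about the rational domain of the sample and the internal role of Lemma~\ref{lemma:fhom} are fine but not needed, since those points are already handled inside the cited results.
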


\begin{proof}
The statement is an immediate consequence of Theorem \ref{thm:fullysymfpol}  and the fact that PLH valued structures with finite signature admit an efficient sampling algorithm (Corollary \ref{cor:PLHsampling}).
\end{proof}



\subsection{Submodular PLH Valued Structures}
\label{sect:submodular}
We have already mentioned the importance of submodularity in the introduction to this article.
In this section we  apply Proposition \ref{thm:PLHsym} to prove that the VCSP of submodular PLH valued structures with finite signature can be solved in polynomial time. Moreover, we show that submodularity defines a {\it maximally tractable} class of valued structures within the class of PLH valued structures. 

\begin{corollary}\label{cor:submodplh}
	Let $\g$ be a PLH valued structure with finite signature such that the cost functions in $\g $ are submodular. Then $\vcsp(\g)$ can be solved in polynomial time.
\end{corollary}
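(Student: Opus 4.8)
The plan is to invoke Proposition~\ref{thm:PLHsym}: since $\g$ is PLH with finite signature, it suffices to show that every submodular cost function is improved by fully symmetric fractional polymorphisms of all arities. Recall that a cost function $f\colon\Q^n\to\QQ$ is submodular when
\[
f(\min(x,y))+f(\max(x,y))\le f(x)+f(y)
\]
for all $x,y\in\Q^n$, where $\min$ and $\max$ are taken componentwise (with the usual conventions for $+\infty$). Ordered componentwise, $\Q^n$ is a distributive lattice with meet $\min$ and join $\max$. For each arity $k\ge 1$ and each $1\le i\le k$, let $o_i^{(k)}\colon\Q^k\to\Q$ denote the operation returning the $i$-th smallest of its arguments; these are totally symmetric (hence fully symmetric), and $o_1^{(k)}=\min^{(k)}$, $o_k^{(k)}=\max^{(k)}$. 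I would take as the candidate $k$-ary fractional operation the $\omega^{(k)}$ supported uniformly on $\{o_1^{(k)},\dots,o_k^{(k)}\}$, that is $\omega^{(k)}(o_i^{(k)})=\tfrac1k$ for each $i$.

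The crux is the following sorting inequality, which is the only place submodularity is used: for every submodular $f$ and all tuples $a^1,\dots,a^k\in\Q^n$,
\[
\sum_{i=1}^{k} f\big(o_i^{(k)}(a^1,\dots,a^k)\big)\le \sum_{i=1}^{k} f(a^i),
\]
where each $o_i^{(k)}$ is applied componentwise. I would prove this by a sorting-network argument. A single pairwise step replaces two of the tuples $a^p,a^q$ by $\min(a^p,a^q)$ and $\max(a^p,a^q)$ (componentwise); by submodularity this does not increase $\sum_i f(a^i)$, and it leaves the multiset of values in each coordinate unchanged. Applying the comparators of a fixed sorting network (e.g.\ odd--even transposition sort) as such componentwise meet--join steps sorts every coordinate simultaneously, so that after finitely many steps tuple $i$ holds the $i$-th smallest value in each coordinate, i.e.\ equals $o_i^{(k)}(a^1,\dots,a^k)$. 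Since no step increased the sum, the inequality follows; the $+\infty$ conventions cause no trouble, as any infinite summand on the right makes the bound trivial, and submodularity forces $\dom(f)$ to be a sublattice so that meet--join steps stay within the domain. The main obstacle is precisely verifying that iterated pairwise meet--joins realise the full componentwise sort, for which I would appeal to the correctness of a fixed sorting network applied coordinatewise.

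Dividing the sorting inequality by $k$ shows exactly that $\omega^{(k)}$ satisfies the defining inequality of a fractional polymorphism for $f$, and $\sum_i\omega^{(k)}(o_i^{(k)})=1$; hence each $\omega^{(k)}$ is a fully symmetric fractional polymorphism of $f$ (the case $k=1$ being the trivial one where $o_1^{(1)}$ is the identity). As every cost function of $\g$ is submodular, $\g$ is improved by $\omega^{(k)}$ for all $k\ge 1$, so $\g$ has fully symmetric fractional polymorphisms of all arities. Proposition~\ref{thm:PLHsym} then yields that $\vcsp(\g)$ is solvable in polynomial time, completing the proof.
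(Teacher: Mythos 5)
Your proposal is correct, and it follows the paper's overall skeleton: the paper likewise reduces to Proposition~\ref{thm:PLHsym} via Corollary~\ref{cor:gensub}, using exactly the same candidate fractional operations (the paper's $\omega_{\sub}^{(k)}$, uniformly supported on the sorted-selection operations $s_i^{(k)}$, which are your $o_i^{(k)}$), and the same key inequality (the paper's Proposition~\ref{prop:gensub}). Where you genuinely diverge is in the proof of that inequality. The paper first averages submodularity over all $\binom{k}{2}$ pairs, obtaining a bound with a $\frac{1}{k-1}$ factor, and then runs an induction on the arity $n$ of $f$: a per-coordinate multiset identity plus the inductive hypothesis yields the sorted tuples in the first $n-1$ coordinates but with the last coordinate permuted, and a further sequence of submodular exchanges is needed to repair the permutations $\pi_p$ one transposition at a time. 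Your sorting-network argument avoids this double bookkeeping entirely: each comparator is a single whole-tuple submodular exchange that cannot increase $\sum_i f(a^i)$ and preserves the multiset of values in every coordinate, and the obliviousness of a fixed network (e.g.\ odd--even transposition sort) guarantees that the same comparator sequence sorts every coordinate simultaneously, leaving tuple $i$ equal to $o_i^{(k)}(a^1,\dots,a^k)$. Your handling of $+\infty$ is also sound: an infinite right-hand side is trivial, and when all $f(a^i)$ are finite, submodularity keeps every intermediate tuple in $\dom(f)$. The trade-off: your proof is shorter, conceptually cleaner, and makes transparent that the statement holds over any product of totally ordered sets, at the cost of invoking the (elementary, but external) correctness of sorting networks; the paper's induction is longer and more delicate but entirely self-contained.
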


We already observed in Section \ref{sect:submodular} that a function  over a totally ordered set $D$ is submodular if, and only if, it is improved by the binary fully symmetric fractional operation $\omega_{\sub}\colon  \mathcal O_D^{2} \to \Q_{\geq 0}$ such that \[\omega_{\sub}(g)=\begin{cases}
\frac{1}{2} & \text{if } g=\max \\
\frac{1}{2} & \text{if } g=\min\\
0 & \text{otherwise.}
\end{cases}\] 
In fact, there is another equivalent characterisation of submodularity based on fractional polymorphisms.
For every $k\geq 2$ and every $i \in \{1,\ldots,k\}$ we define  $s^{(k)}_i\colon D^k \to D$ to be the operation returning the $i$-th smallest of its arguments with respect to the total order in $D$. Observe that for $k=2$ we get $s^{(2)}_1=\min$ and $s^{(2)}_2=\max$.
We define for every $k \geq 2$ the $k$-ary fractional operation $\omega_{\sub}^{(k)}\colon \mathcal O_D^{(k)} \to \Q_{\geq0}$ having support \[\supp(\omega_{\sub}^{(k)})=\{s^{(k)}_i\ \mid 1 \leq i \leq k\}\] by setting 
\[\omega_{\sub}^{(k)}(g):= \begin{cases}
\frac{1}{k} & \text{if } g \in \supp(\omega_{\sub}^{(k)})\\
0 & \text{otherwise.}
\end{cases}\]
The operations $s^{(k)}_i(x^1,\ldots,x^k)$ are fully symmetric for all $i \in \{1,\ldots,k\}$ and all $k \geq 2$, therefore the fractional operations $\omega_{\sub}^{(k)}$ are fully symmetric for all $k \geq 2$. It is an easy observation that, for $k=2$, the fractional operation $\omega_{\sub}^{(2)}$ is exactly the fractional operation $\omega_{\sub}$ characterising submodular functions.

\begin{proposition}\label{prop:gensub}
	Let $D$ be a totally ordered set and let $f$ be a submodular function over $D$. Then for all $k \geq 2$, the fractional operation $\omega_{\sub}^{(k)}$ improves the function $f$. 
\end{proposition}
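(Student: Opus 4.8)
The plan is to deduce the $k$-ary inequality from iterated use of the binary case, which is exactly the hypothesis that $f$ is submodular, i.e.\ that $f$ is improved by $\omega_{\sub}^{(2)}=\omega_{\sub}$. Unwinding the definition of ``improves'' and cancelling the common factor $\frac1k$ on both sides, the inequality I must establish is
\[
\sum_{i=1}^k f\bigl(s^{(k)}_i(x^1,\dots,x^k)\bigr)\;\le\;\sum_{i=1}^k f(x^i)
\qquad\text{for all } x^1,\dots,x^k\in D^{n},
\]
where $n=\ar(f)$ and, in each of the $n$ coordinates $l$, the values $s^{(k)}_1(x^1,\dots,x^k)_l,\dots,s^{(k)}_k(x^1,\dots,x^k)_l$ are just $x^1_l,\dots,x^k_l$ rearranged in increasing order.

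First I would fix a data-independent sorting procedure on $k$ positions — concretely bubble sort, i.e.\ the sequence of comparisons of adjacent positions $(j,j+1)$ performed by the standard $O(k^2)$ passes — and reinterpret each comparison as an operation on tuples: when the positions holding tuples $u$ and $v$ are compared, they are replaced by $\min(u,v)$ and $\max(u,v)$ (applied componentwise), the minimum going to the lower position. Running this fixed procedure transforms the initial configuration $(x^1,\dots,x^k)$ into some final configuration $(y^1,\dots,y^k)$. Since each comparator acts coordinatewise, its effect on the $l$-th coordinate is precisely the corresponding comparator of the numerical bubble sort applied to $(x^1_l,\dots,x^k_l)$; as bubble sort correctly sorts every numerical input, after all comparisons the $l$-th coordinate is in increasing order, so $y^i_l$ is the $i$-th smallest of $x^1_l,\dots,x^k_l$. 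Carrying this out in every coordinate gives $y^i=s^{(k)}_i(x^1,\dots,x^k)$ for each $i$.

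Second I would track the sum $\sum_i f(\,\cdot\,)$ along the procedure. A single comparison replaces a pair $(u,v)$ by $\bigl(\min(u,v),\max(u,v)\bigr)$ and leaves the remaining tuples untouched; by submodularity $f(\min(u,v))+f(\max(u,v))\le f(u)+f(v)$, so the value $\sum_i f$ of the current configuration does not increase. Chaining these inequalities over all the (finitely many) comparators of the network yields $\sum_i f(y^i)\le\sum_i f(x^i)$, which is exactly the inequality displayed above; hence $\omega_{\sub}^{(k)}$ improves $f$.

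The step needing the most care is the identification of the network output with the $s^{(k)}_i$: one must note that the comparators commute with the coordinate projections, so a network that sorts arbitrary numerical sequences also sorts tuples under componentwise $\min/\max$. A minor but necessary point is that the extended arithmetic of $\QQ$ causes no trouble: whenever the right-hand side of a binary estimate is $+\infty$ the inequality is vacuous, and otherwise both $f(u)$ and $f(v)$ are finite so submodularity applies verbatim; thus the telescoping argument remains valid even when some $f(x^i)=+\infty$. As an alternative to the sorting-network framing, one could argue by induction on $k$, first bubbling the coordinatewise maximum $\max(x^1,\dots,x^k)=s^{(k)}_k(x^1,\dots,x^k)$ up to the top position and then applying the inductive hypothesis to the remaining $k-1$ tuples (whose coordinatewise sort yields $s^{(k)}_1,\dots,s^{(k)}_{k-1}$); this avoids invoking the correctness of a network but relies on the same submodularity estimates.
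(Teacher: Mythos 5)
Your proof is correct, but it takes a genuinely different route from the paper's. The paper first averages the binary submodularity inequality over all pairs, writing $f(x^1)+\cdots+f(x^k)\geq\frac{1}{k-1}\sum_{1\leq i<j\leq k}\left(f(\min(x^i,x^j))+f(\max(x^i,x^j))\right)$, and then shows that this pairwise sum dominates $\sum_{i=1}^{k}f\left(s^{(k)}_i(x^1,\ldots,x^k)\right)$ (Inequality (\ref{eq:gensub4})) by induction on the arity $n$ of $f$: the base case rests on a multiset identity in each coordinate, and the inductive step needs a delicate exchange argument that repeatedly invokes submodularity to untwist the permutations $\pi_p$ misaligning the last coordinate. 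Your sorting-network argument bypasses both the averaging step and the induction on arity: you induct instead on the comparator steps of a fixed, data-oblivious bubble-sort schedule, using submodularity once per comparator to see that $\sum_i f$ never increases, and using the key observation that comparators commute with coordinate projections to identify the final configuration with $\left(s^{(k)}_1(x^1,\ldots,x^k),\ldots,s^{(k)}_k(x^1,\ldots,x^k)\right)$. This is shorter and more transparent, and it isolates the one genuinely new idea (data-oblivious sorting acts coordinatewise) from the routine bookkeeping; what the paper's route yields in addition is the explicit intermediate inequality (\ref{eq:gensub4}) comparing the sorted sum against the pairwise $\min/\max$ sums, but nothing more is needed for the proposition itself. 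Your care with the $+\infty$ conventions of $\QQ$ (each comparator inequality is vacuous when its right-hand side is infinite, so the telescoping survives) and your alternative induction on $k$ are both sound.
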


\begin{proof}
	Clearly, \[\sum_{g \in \supp(\omega_{\sub}^{(k)})}\omega_{\sub}^{(k)}(g)=1.\]
 We want to prove that for all $k \geq2$ and for all $x^1, \ldots, x^k \in D^n$ it holds that \begin{equation}\label{eq:gensub}\frac{1}{k}\sum_{i=1}^{k}f(s^{(k)}_i(x^1, \ldots, x^k))\leq \frac{1}{k}\left(f(x^1)+\cdots +f(x^k)\right).\end{equation}
By using the submodularity of $f$ we can write  \begin{align}\nonumber &f(x^1)+\cdots+ f(x^k)=\frac{1}{k-1}\sum_{1\leq i< j \leq k} \left(f(x^i)+f(x^j)\right)   \\ \nonumber \geq &\frac{1}{k-1} \sum_{1\leq i< j \leq k} \left(f(\min(x^i,x^j))+f(\max(x^i,x^j))\right)  \\ \nonumber = &\sum_{1\leq i< j \leq k} \frac{f(\min(x_1^i,x_1^j),\ldots,\min(x_n^i,x_n^j))+f(\max(x^i_1,x^j_1), \ldots,\max(x_n^i,x_n^j))}{k-1}\\
\label{eq:gensub4} \geq & \sum_{i=1}^{k}f\left(s^{(k)}_i(x^1_1,\ldots,x^k_1),\ldots,s^{(k)}_i(x^1_n,\ldots,x^k_n)\right),
 \end{align}
 from which Inequality (\ref{eq:gensub}) follows.
 We prove Inequality (\ref{eq:gensub4}) by induction on $n$.
	Observe that for every coordinate $1\leq l \leq n$, the following equality between  multisets holds:	\begin{align} &\left\{\min(x_l^i,x_l^j) \mid 1 \leq i <j \leq k\right\}\cup \left\{\max(x_l^i,x_l^j)\mid 1 \leq i <j \leq k\right\} \nonumber \\ \label{eq:gensub3}  = &\; \big\{\underbrace{s^{(k)}_1(x^1_l,\ldots,x^k_l)}_{k-1 \text{ occurrences}}, \underbrace{s^{(k)}_2(x^1_l,\ldots,x^k_l)}_{k-1\text{ occurrences}},\ldots, \underbrace{s^{(k)}_{k}(x^1_l,\ldots,x^k_l)}_{k-1 \text{ occurrences}}\big\}.\end{align}
	If $f$ has arity $n=1$, then Inequality (\ref{eq:gensub4}) immediately follows from Equality (\ref{eq:gensub3}).
	Let $n \geq 2$, assume that Inequality (\ref{eq:gensub4}) is true for submodular functions of arity at most $n-1$,  and let us prove it for submodular functions of arity $n$.
	From Equality (\ref{eq:gensub3}) and from the inductive hypothesis it follows that there exist $(k-1)$-many permutations ${\pi}_{1},\ldots, {\pi}_{k-1} \in S_k$
	such that
	\begin{align}\nonumber&\hskip -.5\baselineskip\sum_{1\leq i< j \leq k}\hskip -.3\baselineskip \left(f(\min(x_1^i,x_1^j),\ldots,\min(x_n^i,x_n^j))+f(\max(x^i_1,x^j_1), \ldots,\max(x_n^i,x_n^j))\right)\\\label{ineq:above1}\geq & \sum_{p=1}^{k-1} \sum_{i=1}^{k} f\left(s^{(k)}_i(x^1_1,\ldots,x^k_1),\ldots,s^{(k)}_i(x^1_{n-1},\ldots,x^k_{n-1}),s^{(k)}_{\pi_p(i)}(x^1_n,\ldots,x^k_n)\right).\end{align}
We claim that for every $p \in \{1,\ldots,k-1\}$ it holds that \begin{align}
\nonumber&\sum_{i=1}^{k}f\left(s^{(k)}_i(x^1_1,\ldots,x^k_1),\ldots,s^{(k)}_i(x^1_{n-1},\ldots,x^k_{n-1}),s^{(k)}_{\pi_p(i)}(x^1_n,\ldots,x^k_n)\right)\\ \label{ineq:above} \geq& \sum_{i=1}^{k}f\left(s^{(k)}_i(x^1_1,\ldots,x^k_1),\ldots,s^{(k)}_i(x^1_{n-1},\ldots,x^k_{n-1}),s^{(k)}_{i}(x^1_n,\ldots,x^k_n)\right). 
\end{align} To prove Inequality (\ref{ineq:above}), let $j:=\max\left\{i \in\{1,\ldots,k\} \mid \pi_p(i)\neq i \right\}$. Then there exists $l \in \{1,\ldots,j-1\}$ such that $\pi_p(l)=j$. By the submodularity of $f$ we have that \begin{align*}
	&f\left(s^{(k)}_j(x^1_1,\ldots,x^k_1),\ldots,s^{(k)}_j(x^1_{n-1},\ldots,x^k_{n-1}),s^{(k)}_{\pi_p(j)}(x^1_n,\ldots,x^k_n)\right)\\+\;&f\left(s^{(k)}_l(x^1_1,\ldots,x^k_1),\ldots,s^{(k)}_l(x^1_{n-1},\ldots,x^k_{n-1}),s^{(k)}_{j}(x^1_n,\ldots,x^k_n)\right)\\ \geq\;& f\left(s^{(k)}_j(x^1_1,\ldots,x^k_1),\ldots,s^{(k)}_j(x^1_{n-1},\ldots,x^k_{n-1}),s^{(k)}_{j}(x^1_n,\ldots,x^k_n)\right)\\+\;&f\left(s^{(k)}_l(x^1_1,\ldots,x^k_1),\ldots,s^{(k)}_l(x^1_{n-1},\ldots,x^k_{n-1}),s^{(k)}_{\pi_p(j)}(x^1_n,\ldots,x^k_n)\right).
	\end{align*}
	After this step  \begin{align*}
	&\sum_{i=1}^{k} f\left(s^{(k)}_i(x^1_1,\ldots,x^k_1),\ldots,s^{(k)}_i(x^1_{n-1},\ldots,x^k_{n-1}),s^{(k)}_{\pi_p(i)}(x^1_n,\ldots,x^k_n)\right)\\ \geq \;& \sum_{i=j}^{k}f\left(s^{(k)}_i(x^1_1,\ldots,x^k_1),\ldots,s^{(k)}_i(x^1_{n-1},\ldots,x^k_{n-1}),s^{(k)}_{i}(x^1_n,\ldots,x^k_n)\right)\\ &+ \sum_{i=1}^{j-1}f\left(s^{(k)}_i(x^1_1,\ldots,x^k_1),\ldots,s^{(k)}_i(x^1_{n-1},\ldots,x^k_{n-1}),s^{(k)}_{\pi'_p(i)}(x^1_n,\ldots,x^k_n)\right)\end{align*}
	where $\pi'_{p} \in S_k$ is the permutation  defined by \[\pi'_p(i)=\begin{cases}
	\pi_p(j) & \text{if } i=l\\
	\pi_p(i) & \text{otherwise.}
	\end{cases}\] 
	By reiterating the procedure for at most  $j-1 \leq k$ times for every $p \in \{1,\ldots,k-1\}$, we get the claim. Then, using Inequality (\ref{ineq:above}), we can rewrite Inequality (\ref{ineq:above1}) as \begin{align*}&\hskip -.4\baselineskip\sum_{1\leq i< j \leq k}\hskip -.4\baselineskip \left(f(\min(x_1^i,x_1^j),\ldots,\min(x_n^i,x_n^j))+f(\max(x^i_1,x^j_1), \ldots,\max(x_n^i,x_n^j))\right)\\\geq &\;\sum_{p=1}^{k-1} \sum_{i=1}^{k}f\left(s^{(k)}_i(x^1_1,\ldots,x^k_1),\ldots,s^{(k)}_i(x^1_{n-1},\ldots,x^k_{n-1}),s^{(k)}_{i}(x^1_n,\ldots,x^k_n)\right)\\=&\;\sum_{p=1}^{k-1}\sum_{i=1}^{k}f\left(s^{(k)}_i(x^1,\ldots,x^k)\right)=(k-1)\sum_{i=1}^{k}f\left(s^{(k)}_i(x^1,\ldots,x^k)\right),\end{align*}
	that is, Inequality (\ref{eq:gensub4}) holds and this concludes the proof.
\end{proof}

The next corollary immediately follows from the full symmetry of fractional operations $\omega_{\sub}^{(k)}$ and  Proposition \ref{prop:gensub}.

\begin{corollary}\label{cor:gensub}
	Let $\g$ be a valued PL (or PLH) submodular constraint language. Then $\g$ has fully symmetric fractional polymorphisms of all arities.
\end{corollary}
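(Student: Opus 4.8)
The plan is to combine Proposition~\ref{prop:gensub} with the full symmetry of the fractional operations $\omega_{\sub}^{(k)}$ noted in the paragraph preceding the corollary. The key observation is simply to recast the phrase ``improves every cost function of $\g$'' as ``is a fractional polymorphism of $\g$''. Since $\g$ is a valued PL or PLH constraint language, its domain is $\Q$, which is totally ordered, so Proposition~\ref{prop:gensub} applies with $D = \Q$.

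First I would treat the arity $k = 1$ separately: the unary fractional operation assigning weight $1$ to the identity on $\dom(\g)$ is trivially fully symmetric and improves every cost function, hence is a fully symmetric fractional polymorphism of arity $1$. For each $k \geq 2$, I would argue as follows. By hypothesis every cost function $f \in \g$ is submodular, so by Proposition~\ref{prop:gensub} the fractional operation $\omega_{\sub}^{(k)}$ improves $f$. As this holds for each of the finitely many symbols of the signature, and $\sum_{g}\omega_{\sub}^{(k)}(g) = 1$ by construction, the operation $\omega_{\sub}^{(k)}$ satisfies the defining conditions of a fractional polymorphism; that is, $\omega_{\sub}^{(k)} \in \fpol(\g)$.

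It then remains to record that each $\omega_{\sub}^{(k)}$ is fully symmetric. Its support consists of the operations $s^{(k)}_1, \dotsc, s^{(k)}_k$, where $s^{(k)}_i$ returns the $i$-th smallest of its arguments and is therefore invariant under every permutation of its inputs. Hence every operation in $\supp(\omega_{\sub}^{(k)})$ is fully symmetric, which is exactly the definition of a fully symmetric fractional operation. Putting the two cases together, $\g$ possesses a fully symmetric fractional polymorphism of each arity $k \geq 1$.

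I expect no real obstacle here, as all the substantive work was carried out in Proposition~\ref{prop:gensub}. The only points needing care are the trivial handling of arity $1$ and the remark that a single fractional operation must improve all cost functions of $\g$ at once --- which is immediate, since Proposition~\ref{prop:gensub} applies uniformly to every submodular $f$ in the signature.
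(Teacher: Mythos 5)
Your proposal is correct and follows essentially the same route as the paper, which likewise derives the corollary immediately from Proposition~\ref{prop:gensub} together with the full symmetry of the operations $\omega_{\sub}^{(k)}$; your explicit treatment of arity $1$ via the identity operation is a harmless addition that the paper leaves implicit.
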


\begin{proof}[of Corollary \ref{cor:submodplh}]
The proof follows directly from Proposition \ref{thm:PLHsym}, since every valued submodular PLH constraint language has fully symmetric fractional polymorphysms of all arities  (Corollary \ref{cor:gensub}).
\end{proof}

We show that  submodular PLH valued structures are {\it maximal tractable}  within the class of PLH valued structures.
Let $\g$ be a valued structure with signature $\tau$.
A \emph{valued reduct} $\g'$ of  $\g$  is a valued structure with domain $\dom(\g)$ and such that the signature $\tau'$ can be obtained from $\tau$ by dropping some of the function symbols. The interpretation of function symbols from $\tau'$ in $\g'$ is as in $\g$. In this case we also say that $\g$ is an \emph{expansion} of $\Delta$.
A valued reduct of a valued structure is said \emph{finite} if its signature is finite.

\begin{definition}
	Let $\mathcal V$ be a class of valued structures with fixed domain $D$ and let $\g$ be a valued structure in $\mathcal V$. We say that $\g$ is \emph{maximal tractable within} $\mathcal V$ if \begin{itemize}
		\item $\vcsp(\g')$ is polynomial time solvable for every valued finite reduct $\g'$ of $\g$; and
		\item for every valued structure $\Delta$ in $\mathcal V$ that is an expansion of $\g$, there exists a valued finite reduct $\Delta'$ of $\Delta$ such that $\vcsp(\Delta')$ is NP-hard.
	\end{itemize}
\end{definition}

We will make use of the following result.

\begin{theorem}[\cite{cohen2006complexity}, Theorem 6.7]\label{ccjk2}
	Let $D$ be a finite totally ordered set. 
	Then the valued structure containing all submodular cost functions over $D$ is maximal tractable within the class of all valued structures with domain $D$. 
\end{theorem}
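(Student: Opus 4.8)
The plan is to check directly the two clauses in the definition of maximal tractability for $\Gamma_{\mathrm{sub}}$, the valued structure of \emph{all} submodular cost functions over the finite chain $D$. Two elementary facts about chains will be used throughout: for any $a,b\in D$ one has $\{a\wedge b,a\vee b\}=\{a,b\}$, so the defining inequality of submodularity holds with equality in the following two situations, making the corresponding functions submodular and hence members of $\Gamma_{\mathrm{sub}}$ (and of every expansion $\Delta$): (i) \emph{every} unary cost function $D\to\QQ$ is submodular; and (ii) for each $c\in D$ the crisp pin $u_c$ (value $0$ at $c$, $+\infty$ elsewhere) and, for $a<a'$, the crisp restriction to the two-element subchain $\{a,a'\}$ are submodular. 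These serve as free gadget material.

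\emph{Tractability.} The first clause is immediate from results already established in the excerpt. A finite reduct $\g'$ of $\Gamma_{\mathrm{sub}}$ is a finite set of submodular cost functions over the finite (hence totally ordered) domain $D$; by Proposition~\ref{prop:gensub}, and the full symmetry of the $\omega_{\sub}^{(k)}$ noted there, each is improved by $\omega_{\sub}^{(k)}$ for all $k\ge2$, so $\g'$ has fully symmetric fractional polymorphisms of all arities. Since $\g'$ has finite signature and finite domain, Theorem~\ref{thm:thepwrofblpforfinitedom} shows that the BLP relaxation solves $\vcsp(\g')$, whence $\vcsp(\g')$ is solvable in polynomial time.

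\emph{Maximality, finite-valued case.} Let $\Delta$ properly expand $\Gamma_{\mathrm{sub}}$, so $\Delta$ contains some non-submodular $f\colon D^k\to\QQ$. Suppose first that a submodularity violation of $f$ can be witnessed with finite values. By the standard characterisation of submodularity on a product of chains (Topkis), failure of submodularity localises to two coordinates: there are $p\ne q$, a fixing $c$ of the remaining coordinates, and values $a<a'$, $b<b'$ with all four relevant values finite and
\[ f(c;a,b)+f(c;a',b') > f(c;a,b')+f(c;a',b). \]
Pinning the coordinates $\ne p,q$ to $c$ and restricting coordinates $p,q$ to $\{a,a'\}$ and $\{b,b'\}$ by the submodular gadgets above, then relabelling to $\{0,1\}$, turns a single application of $f$ into a binary $g$ on $\{0,1\}^2$ with $\beta:=g(0,0)+g(1,1)-g(0,1)-g(1,0)>0$. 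Now reduce from \textsc{Min-Uncut} (equivalently \textsc{Max-Cut}). Given a graph $G=(V,E)$, take a variable $x_i$ per vertex, restricted to $\{0,1\}$, and sum one copy of the $g$-gadget over each edge. Writing $g(x,y)=g(0,0)+c_1x+c_2y+\beta\,xy$ and using $\mathbf 1[x=y]=1-x-y+2xy$ for $x,y\in\{0,1\}$, the linear terms accumulated at each vertex, together with an added correction $-\tfrac{\beta}{2}\deg(i)\,x_i$, can be realised by a single unary cost function at vertex $i$, so that the total objective becomes $\tfrac{\beta}{2}\sum_{(i,j)\in E}\mathbf 1[x_i=x_j]+\mathrm{const}$. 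Minimising $\sum_{E}\mathbf 1[x_i=x_j]$ is \textsc{Min-Uncut}, hence NP-hard. All cost functions used—$f$ together with finitely many unary and crisp submodular gadgets—form a finite reduct $\Delta'$ of $\Delta$, and the construction is a polynomial-time reduction, so $\vcsp(\Delta')$ is NP-hard.

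\emph{The crisp case and the main obstacle.} The remaining possibility is that every violation of $f$ involves $+\infty$, i.e.\ $\dom(f)$ is not a sublattice. Then the binary restriction $g$ obtained as above has $+\infty$ on the diagonal, its interaction coefficient is infinite, and the cancellation used in the finite-valued reduction breaks down. Here one instead exploits that the submodular soft cut function $f_{\neq}$ (with $f_{\neq}(x,y)=0$ iff $x=y$) lies in $\Delta$: combining the crisp non-sublattice pattern with $f_{\neq}$ through a relabelling/flip gadget expresses a positive multiple of soft-equality, reducing to the finite-valued case. For instance, in the representative pattern $\dom(g)=\{(0,1),(1,0)\}$ the gadget realises negation, so that $\mathbf 1[x=y]=f_{\neq}(x,\neg y)$ is obtained by minimising over an auxiliary variable. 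I expect the main technical obstacle to be the verification that \emph{every} crisp non-sublattice binary pattern on $\{0,1\}^2$, together with the available submodular functions, expresses the soft-equality (\textsc{Max-Cut}) cost function; this requires a short case analysis over the finitely many non-sublattice patterns, and it is exactly the point where the full strength of having \emph{all} submodular cost functions in $\Gamma_{\mathrm{sub}}$ is used.
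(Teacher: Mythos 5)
You are attempting to re-derive a result that the paper itself does not prove: Theorem~\ref{ccjk2} is imported verbatim from \cite{cohen2006complexity} (Theorem~6.7) and used as a black box in the proof of Theorem~\ref{th:mt}, so the only proof to compare against is the external one of Cohen, Cooper, Jeavons, and Krokhin. Measured on its own merits, your tractability half is sound and in fact slicker than the original combinatorial argument: every finite reduct of $\Gamma_{\mathrm{sub}}$ has the fully symmetric fractional polymorphisms $\omega_{\sub}^{(k)}$ of all arities by Proposition~\ref{prop:gensub}, and Theorem~\ref{thm:thepwrofblpforfinitedom} then gives polynomial-time solvability via the BLP. Your finite-valued hardness gadget is also correct: unary functions over a chain, pins, and two-element subchain restrictions are all submodular, and the cancellation turning a strictly supermodular $2\times 2$ minor into a positive multiple of soft-equality, followed by the \textsc{Min-UnCut} reduction, works.

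The genuine gap is in the infinite-valued cases, and it is twofold. First, your case split is not exhaustive as stated: the Topkis localisation you invoke is a theorem about \emph{finite-valued} functions on products of chains, because its proof telescopes along coordinatewise mixtures of the violating pair $(a,b)$, and such mixtures need not lie in $\dom(f)$ even when $\dom(f)$ \emph{is} a sublattice (lattice combinations of $a$ and $b$ only produce $a$, $b$, $a\wedge b$, $a\vee b$, not arbitrary coordinate mixes). So a non-submodular $f$ whose domain is a sublattice (all four values of every violation finite), or one whose violation has exactly one infinite value among $f(a\wedge b),f(a\vee b)$, falls between your two branches: neither admits your two-coordinate localisation for free, nor is it covered by the crisp analysis. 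Second, the crisp branch itself is conceded rather than proved: both the reduction of an $m$-ary non-sublattice domain to a binary non-sublattice pattern by pinning (unjustified for the same localisation reason) and the expressibility of soft-equality from each binary pattern are left as an expectation (``I expect the main technical obstacle\dots''). Since the theorem quantifies over \emph{all} $\QQ$-valued expansions, these configurations with infinities carry the substantive content of the maximality claim; this is precisely the part that the expressibility machinery of \cite{cohen2006complexity} is built to handle, and as submitted your argument does not close it.
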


We show that the class of submodular PLH valued structures is maximal tractable within the class of PLH valued structures.

\begin{definition}
	Given a finite domain $D \subset \Q$ and a partial function $f\colon D^n \rightarrow \mathbb Q$ we define the \emph{canonical extension} of $f$ as $\hat{f}\colon \Q^n \rightarrow \Q$, by \[\hat f(x)=\begin{cases}
	f( x) & x \in D^n\\
	+\infty & \text{otherwise}.
	\end{cases}\]
\end{definition}

Note that the canonical extension of a submodular function over a finite domain is submodular and PLH. 

\begin{theorem}\label{th:mt}
	The valued structure containing all submodular PLH cost functions is maximal tractable within the class of PLH valued structures. 
\end{theorem}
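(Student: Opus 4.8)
The plan is to check the two clauses of the definition of maximal tractability separately, the first being immediate and the second carrying all the content. For the tractability clause, let $\g$ denote the valued structure of all submodular PLH cost functions and let $\g'$ be any finite valued reduct of $\g$. Then $\g'$ is a PLH valued structure with finite signature all of whose cost functions are submodular, so $\vcsp(\g')$ is solvable in polynomial time directly by Corollary~\ref{cor:submodplh}; no further argument is needed. For the maximality clause, let $\Delta$ be any PLH valued structure that is a (proper) expansion of $\g$. Since $\g$ already contains \emph{every} submodular PLH cost function, $\Delta$ must contain some PLH cost function $f\colon\Q^n\to\QQ$ that is not submodular, and I would fix such an $f$ once and for all.

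The idea for producing an NP-hard finite reduct of $\Delta$ is to transfer the finite-domain hardness of Theorem~\ref{ccjk2} to the rationals through a crisp domain-restriction gadget. First I would record a witness of the failure of submodularity of $f$: points $a,b\in\Q^n$ with $f(a\wedge b)+f(a\vee b)>f(a)+f(b)$ (componentwise $\min$ and $\max$), where the right-hand side is finite. Let $D_0\subset\Q$ be the finite set of all coordinates occurring in $a$ and $b$. This set is automatically closed under $\min$ and $\max$, so $a\wedge b,a\vee b\in D_0^n$, and the restriction $f_0:=f|_{D_0^n}$ is a cost function over the finite totally ordered set $D_0$ that still violates submodularity at $a,b$. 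Hence $f_0$ is non-submodular over $D_0$, and Theorem~\ref{ccjk2} applies: the expansion by $f_0$ of the structure of all submodular functions over $D_0$ has a finite reduct $\Delta_0'$, involving $f_0$ together with finitely many submodular functions $s_1,\dotsc,s_k$ over $D_0$, such that $\vcsp(\Delta_0')$ is NP-hard.

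Next I would lift $\Delta_0'$ to a finite reduct $\Delta'$ of $\Delta$ consisting of $f$, the canonical extensions $\hat s_1,\dotsc,\hat s_k$, and the crisp unary constraint $u_{D_0}$ that takes value $0$ on $D_0$ and $+\infty$ elsewhere. Each $\hat s_i$ is submodular and PLH by the remark following the definition of canonical extension, and $u_{D_0}$ is submodular (because $\min$ and $\max$ preserve $D_0$) and PLH (because $D_0$ is a finite union of points, each definable over $\mathfrak L$), so all of these belong to $\g\subseteq\Delta$, while $f\in\Delta$; thus $\Delta'$ is indeed a finite reduct of $\Delta$. The reduction from $\vcsp(\Delta_0')$ to $\vcsp(\Delta')$ then keeps the variable set, replaces each occurrence of $s_i$ by $\hat s_i$ and each occurrence of $f_0$ by $f$, adds the summand $u_{D_0}(x)$ for every variable $x$, and leaves the threshold $u$ unchanged. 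The $u_{D_0}$ terms force every finite-cost assignment to take values in $D_0$; on $D_0^n$ one has $\hat s_i=s_i$ and $f=f_0$ while the $u_{D_0}$ terms contribute $0$, so the two instances have exactly the same assignments of cost at most $u$. This many-one reduction is computable in polynomial time, whence $\vcsp(\Delta')$ is NP-hard, which establishes maximality and completes the proof.

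The main obstacle I anticipate is the gluing step rather than any hard inequality. One must argue that passing to the finite coordinate set $D_0$ preserves the submodularity violation of $f$ so that Theorem~\ref{ccjk2} is genuinely applicable, and, more delicately, that the crisp PLH gadget $u_{D_0}$ confines assignments to $D_0$ without perturbing their costs, so that the finite-domain NP-hardness transfers faithfully to the infinite domain. The supporting bookkeeping — verifying that $u_{D_0}$ and the canonical extensions $\hat s_i$ are \emph{simultaneously} submodular and PLH, and therefore legitimately available inside $\g$ — is exactly the point on which the reduction rests.
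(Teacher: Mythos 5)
Your proof is correct and takes essentially the same approach as the paper's: restrict the non-submodular PLH function $f$ to the finite set $D_0$ of coordinates of a violating pair, invoke Theorem~\ref{ccjk2} to obtain an NP-hard finite-domain reduct containing $f|_{D_0}$, and lift it back to $\Q$ via canonical extensions together with the crisp unary indicator of $D_0$ (your $u_{D_0}$ is exactly the paper's $\chi_D$), using the same polynomial-time reduction that adds this unary constraint on every variable. The tractability clause is likewise handled identically via Corollary~\ref{cor:submodplh}.
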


\begin{proof}
	Polynomial-time tractability of the VCSP for submodular PLH valued structures with finite signature has been stated in Theorem~\ref{cor:submodplh}. 
	
	Let $f^{\Q}$ be an $m$-ary PLH cost function over ${\mathbb Q}$ that is not submodular, i.e., 
	there exists a couple of points, $ a:=(a_1,\ldots,a_m)$, $ b:=(b_1,\ldots,b_m) \in \Q^m$ such that \[f^{\Q}( a)+f^{\Q}( b)<f^{\Q}(\min( a,  b))+f^{\Q}(\max( a,  b)).\]
	Let \[D:=\{a_1,\ldots,a_m,b_1,\ldots,b_m\}\subset \Q,\] 
	and let $\Delta$ be the valued structure with domain $D$ and such that its signature $\tau$ contains a function symbol for every submodular cost functions on $D$. 
	Notice that the restriction $f^{\Q}|_D$ is not submodular, for our choice of $D$.
	Therefore, by Theorem \ref{ccjk2}, there exists a valued structure $\Delta'$ having domain $D$ and signature $\tau' \cup \{f\}$, where $\tau' \subseteq \tau$ is finite  and the cost function $f^{\Delta'}=f^{\Q}_|D$, such that $\vcsp(\Delta')$ NP-hard.
	
	We define $\g'$ to be the (submodular PLH) valued structure with domain $D$, signature $\tau' \cup \{f, \chi_D\}$, and such that the interpretation of functions symbols in the signature is as it follows:\begin{itemize}
		\item for every $g \in \tau'$, the cost function $g^{\g'}$ is the canonical extension of $g^{\Delta'}$,
		\item the cost function $f^{\g'}$ is $f^{\Q}$, and
		\item the unary cost function $\chi_D^{\g'}\colon \Q \to \QQ$ is defined, for every $x \in \Q$ as \[\chi_D^{\g'}(x)=\begin{cases}
		0 & \text{if } x \in D\\
		+\infty & \text{if } x \in \Q \setminus D.\\ 
		\end{cases}\]
		Observe that $\chi_D^{\g'}$ is submodular and PLH.
	\end{itemize}
	The valued structure $\g'$ is an extension of an expansion (or an expansion of an extension) of $\Delta'$. We claim that $\vcsp(\g')$ in NP-hard. Indeed, for every instance  $I:=(V, \phi_I,u )$ of $\vcsp(\Delta')$, with $V:=\{v_1,\ldots,v_n\}$, we define the instance $J:=(V,\phi_J,u)$ of $\vcsp(\g')$ such that 
	\[\phi_J(v_1,\ldots,v_n):=\phi_I(v_1,\ldots,v_n)+\sum_{i=1}^{n}\chi_D(v_i).\]
	Because of the terms involving $\chi_D$, 
	an assignment $h \colon V\to \Q$ is such that $\phi_J^{\g'}(h(v_1),\ldots,h(v_n))$
	is smaller than $+\infty$ if, and only if, $h(v_i)\in D$ for all $v_i \in V$. In this case, \[\phi_J^{\g'}(h(v_1),\ldots,h(v_n))=\phi_I^{\Delta'}(h(v_1),\ldots,h(v_n))\,\] and therefore, deciding whether there exists an assignment $h \colon V \to \Q$ such that \[\phi_J^{\g'}(h(v_1),\ldots,h(v_n))\leq u\] is equivalent to decide whether there exists  an assignment $h' \colon V \to D$ such that \[\phi_I^{\Delta'}(h'(v_1),\ldots,h'(v_n))\leq u.\] 
	Since $J$ is computable in polynomial-time from $I$, the NP-hardness of the problem $\vcsp(\g')$ follows from the NP-hardness of 
	$\vcsp(\Delta')$. 
\end{proof}

\subsection{Convex PLH Valued Structures}

We have already seen that  if $\g$ is a valued structure such that all cost functions in $\g$ are convex, then $\g$ has fully symmetric fractional polymorphisms of all arities (see Proposition \ref{prop:convexfop}). Therefore, the next corollary follows directly from Proposition \ref{prop:convexfop} and Corollary \ref{thm:PLHsym}.

\begin{corollary}\label{cor:PLHconv}
	Let $\g$ be a PLH valued structure  with finite signature such that the cost functions in $\g$ are convex. Then $\vcsp(\g)$ can be solved in polynomial time.
\end{corollary}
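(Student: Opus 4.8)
The plan is to derive the statement directly from two results already in hand, namely Proposition~\ref{prop:convexfop} and Proposition~\ref{thm:PLHsym}. The strategy is entirely to check that the convexity hypothesis supplies exactly the algebraic condition required by the tractability criterion of Proposition~\ref{thm:PLHsym}; no new machinery is needed.

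First I would observe that a PLH valued structure has domain $\Q$, and that $\Q$ is trivially a convex subset of itself, so the compatibility check needed to invoke Proposition~\ref{prop:convexfop} is satisfied. Since every cost function of $\g$ is convex and the domain is convex, that proposition yields that for each $k \geq 2$ the structure $\g$ is improved by the fully symmetric fractional operation $\omega_{conv}^{(k)}$ carried by the average $\avg^{(k)}$. Adjoining the trivial unary identity, $\g$ therefore has fully symmetric fractional polymorphisms of all arities.

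Next I would feed this into Proposition~\ref{thm:PLHsym}, whose hypotheses are precisely that $\g$ be PLH with finite signature and improved by fully symmetric fractional polymorphisms of all arities. As $\g$ meets all of these, Proposition~\ref{thm:PLHsym} concludes that $\vcsp(\g)$ is solvable in polynomial time, which is the desired claim. Unwinding the citation, the witnessing procedure is Algorithm~\ref{figure:algblp}: the efficient sampling algorithm of Corollary~\ref{cor:PLHsampling} reduces an instance to a finite-domain sample, and the BLP relaxation then solves that sample correctly by Theorem~\ref{thm:fullysymfpol}.

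I do not expect a genuine obstacle, since all the substantive work has been carried out in the cited results; the proof is a short chaining of two implications. The only point deserving a moment's care is noting the convexity of the domain $\Q$, so that Proposition~\ref{prop:convexfop} applies verbatim and the existence of the fractional polymorphisms $\omega_{conv}^{(k)}$ is guaranteed. Once that is observed, the corollary is immediate.
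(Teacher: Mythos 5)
Your proposal is correct and follows exactly the paper's own argument: the paper derives this corollary directly from Proposition~\ref{prop:convexfop} (convexity of the cost functions over the convex domain $\Q$ yields the fully symmetric fractional polymorphisms $\omega_{conv}^{(k)}$ for all $k\geq 2$) combined with Proposition~\ref{thm:PLHsym}. Your additional remarks (the arity-$1$ case via the identity, and unwinding the citation to Algorithm~\ref{figure:algblp}) are harmless elaborations of the same chain of implications.
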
 

\subsection{Componentwise Increasing PLH Valued Structures}

Let $f\colon \Q^n \to \Q\cup\{+\infty\}$ be an $n$-ary  function. We say that \begin{itemize}
	\item $f$ is \emph{componentwise increasing} if \[f(x_1,\ldots,x_{i-1},y_i, x_{i+1},\ldots,x_n)\leq f(x_1,\ldots,x_{i-1},z_i, x_{i+1},\ldots,x_n)\text{,}\]
	for every $y_i <z_i $ , $1\leq i \leq k$, and $x_1,\ldots,x_{i-1},x_{i+1},\ldots, x_n \in \Q$;
	\item $f$ is \emph{componentwise decreasing} if \[f(x_1,\ldots,x_{i-1},y_i, x_{i+1},\ldots,x_n)\geq f(x_1,\ldots,x_{i-1},z_i, x_{i+1},\ldots,x_n)\text{,}\]
	for every $y_i <z_i $, $1\leq i \leq k$, and $x_1,\ldots,x_{i-1},x_{i+1},\ldots, x_n \in \Q$;

\end{itemize}

In \cite{cohen2006complexity}, componentwise increasing functions, and componentwise decreasing functions are respectively referred to as \emph{monotone} functions and \emph{antitone} functions.

Let us define, for every $k \geq 2$, the fractional operations $\omega_{\min}^{(k)}\colon \mathcal O_D^{(k)} \to \Q_{\geq 0}$, and  $\omega_{\max}^{(k)}\colon \mathcal O_D^{(k)} \to \Q_{\geq 0}$ by setting, respectively, \[\omega_{\min}^{(k)}(g):=\begin{cases}
1 & \text{if } g =\min^{(k)}\\
0 & \text{otherwise,}
\end{cases}\]  
and $\omega_{\max}^{(k)}(g)$ is dually defined.
The fractional operations $\omega_{\min}^{(k)}$ and $\omega_{\max}^{(k)}$ are fully symmetric (in fact they are totally symmetric).

A valued structure $\g$ is said to be \emph{componentwise increasing} (\emph{componentwise decreasing}, respectively) if all the cost functions in $\g$ are componentwise increasing (componentwise decreasing, respectively).

\begin{corollary}\label{cor:PLHincr}
	Let $\g$ be a componentwise increasing 
	 PLH valued structure  with finite signature. Then $\vcsp(\g)$ can be solved in polynomial time.
\end{corollary}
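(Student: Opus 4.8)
The plan is to reduce immediately to Proposition~\ref{thm:PLHsym}: since $\g$ is a PLH valued structure with finite signature, it suffices to exhibit fully symmetric fractional polymorphisms of $\g$ of all arities. The natural candidates are the fractional operations $\omega_{\min}^{(k)}$ introduced just above, whose support is the single componentwise-minimum operation $\min^{(k)}$ and which are totally (hence fully) symmetric for every $k \geq 2$.

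First I would check that each $\omega_{\min}^{(k)}$ improves every componentwise increasing cost function $f$ of $\g$. Since the support of $\omega_{\min}^{(k)}$ is the singleton $\{\min^{(k)}\}$ carrying weight $1$, the defining inequality of a fractional polymorphism collapses to
\[f\bigl(\min{}^{(k)}(a^1,\ldots,a^k)\bigr) \leq \frac{1}{k}\sum_{i=1}^k f(a^i)\]
for all tuples $a^1,\ldots,a^k \in \Q^{\ar(f)}$, where $\min^{(k)}$ is applied componentwise. The key observation is that the componentwise minimum $b := \min^{(k)}(a^1,\ldots,a^k)$ satisfies $b \leq a^i$ in each coordinate, for every $i$. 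Because $f$ is componentwise increasing, this yields $f(b) \leq f(a^i)$ for every $i$; this conclusion remains valid when some $f(a^i) = +\infty$, since the right-hand side of the target inequality is then $+\infty$, and it is also valid when all $a^i$ are feasible, since monotonicity forces $b$ to be feasible as well. Averaging $f(b) \leq f(a^i)$ over $i$ gives exactly the displayed inequality, so $\omega_{\min}^{(k)} \in \fpol(\g)$.

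Having verified that $\g$ is improved by the fully symmetric fractional operations $\omega_{\min}^{(k)}$ for all $k \geq 2$, I would conclude directly by Proposition~\ref{thm:PLHsym} that $\vcsp(\g)$ is solvable in polynomial time. I do not anticipate any genuine obstacle: the whole argument is a short monotonicity computation, and the only point that deserves explicit care is the interaction with the feasibility relation $\dom(f)$ and the value $+\infty$, which is handled by the remark that componentwise monotonicity already guarantees feasibility of the componentwise minimum whenever all the arguments are feasible.
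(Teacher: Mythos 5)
Your proposal is correct, and its overall strategy coincides with the paper's: both reduce to Proposition~\ref{thm:PLHsym} by showing that the totally (hence fully) symmetric fractional operations $\omega_{\min}^{(k)}$, $k\geq 2$, improve every componentwise increasing cost function. Where you differ is in how this improvement inequality is verified. The paper (Lemma~\ref{incr}) proceeds by induction on the arity $n$ of $f$: it restricts the last coordinate to $\min(x^1_n,\ldots,x^k_n)$, applies the inductive hypothesis to the resulting $(n-1)$-ary componentwise increasing function, and then uses monotonicity in the last coordinate to replace the minimum by $x^i_n$ term by term. You instead isolate the stronger and cleaner intermediate fact that $b:=\min^{(k)}(a^1,\ldots,a^k)$ lies below every $a^i$ in the product order, so that $f(b)\leq \min_i f(a^i)\leq \frac{1}{k}\sum_i f(a^i)$; this makes the averaging step trivial and dispenses with the arity induction, at the cost of one implicit (and entirely routine) chain argument showing that coordinatewise monotonicity, which the paper's definition states one coordinate at a time, implies monotonicity with respect to the componentwise order --- that chain is in effect the same induction the paper performs, but packaged more transparently. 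Note also that the paper's lemma establishes an equivalence (improvement by all $\omega_{\min}^{(k)}$ is in fact equivalent to being componentwise increasing), whereas you prove only the implication the corollary needs; that is perfectly adequate here, and your handling of $+\infty$ values and feasibility is sound, since the monotonicity chain is valid in the extended order on $\Q\cup\{+\infty\}$.
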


\begin{proof}
The proof follows from Proposition \ref{thm:PLHsym}, 
since for every $k\geq 2$ the fully symmetric fractional operation $\omega^{(k)}_{\min}$ 
improves $\g$ as the next lemma shows.
\end{proof}

\begin{lemma}\label{incr}
	A function $f \colon \Q^n \to \Q\cup \{+\infty\}$ is componentwise increasing 
	if, and only if, $f$ is improved by $\omega^{(k)}_{\min}$ 
	, for every $k\geq 2$. \end{lemma}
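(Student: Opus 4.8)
The plan is to prove the two implications separately, after unwinding the hypothesis. Since $\supp(\omega^{(k)}_{\min})=\{\min^{(k)}\}$ and $\omega^{(k)}_{\min}(\min^{(k)})=1$, the assertion that $f$ is improved by $\omega^{(k)}_{\min}$ is, by the definition of a fractional polymorphism, exactly the inequality
\[
f\bigl(\min^{(k)}(a^1,\ldots,a^k)\bigr)\ \leq\ \frac{1}{k}\sum_{i=1}^{k}f(a^i)\qquad\text{for all } a^1,\ldots,a^k\in\Q^n,
\]
where $\min^{(k)}$ is applied componentwise. Thus the lemma reduces to comparing this family of inequalities (one for each $k\geq2$) with coordinatewise monotonicity.

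For the forward implication (componentwise increasing $\Rightarrow$ improved) I would fix $a^1,\ldots,a^k$ and set $b:=\min^{(k)}(a^1,\ldots,a^k)$. For each fixed $i$ we have $b_l\leq a^i_l$ in every coordinate $l$, so one passes from $b$ to $a^i$ by increasing one coordinate at a time; applying the defining monotonicity inequality along this chain of $n$ steps gives $f(b)\leq f(a^i)$. Averaging these $k$ inequalities yields $f(b)=\frac1k\sum_{i=1}^k f(b)\leq\frac1k\sum_{i=1}^k f(a^i)$, which is the required improvement inequality. This step causes no trouble with the value $+\infty$, since the chain argument only uses the order on $\QQ$.

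For the converse (improved $\Rightarrow$ componentwise increasing) it suffices to use $k=2$. Given a coordinate $i$, values $y_i<z_i$, and a fixed assignment $\bar x$ of the remaining coordinates, put $a^1:=(\bar x;y_i)$ and $a^2:=(\bar x;z_i)$, i.e.\ the two points carrying $y_i$, respectively $z_i$, in position $i$ and agreeing elsewhere. Because $y_i<z_i$, the componentwise minimum is $\min(a^1,a^2)=a^1$, so the improvement inequality reads $f(a^1)\leq\frac12\bigl(f(a^1)+f(a^2)\bigr)$; rearranging gives $f(a^1)\leq f(a^2)$, which is precisely monotonicity in coordinate $i$.

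The step I expect to be the main obstacle is the rearrangement in the converse in the presence of the value $+\infty$. When all values are finite, cancelling the common term $f(a^1)$ turns $f(a^1)\leq\frac12(f(a^1)+f(a^2))$ into $f(a^1)\leq f(a^2)$ immediately; but under the convention $+\infty+c=+\infty$ the inequality becomes vacuously true once $f(a^1)=+\infty$, so it no longer forces $f(a^1)\leq f(a^2)$. The argument must therefore be supplemented by a separate treatment of the domain $\dom(f)$, performing the cancellation only where $f(a^1)$ is finite and arguing about feasibility directly otherwise; making this interaction between the extended arithmetic and coordinatewise monotonicity airtight is the delicate part of the proof.
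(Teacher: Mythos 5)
Your forward implication (componentwise increasing implies improved by $\omega_{\min}^{(k)}$) is correct and complete, and it takes a somewhat different route from the paper: you obtain $f(b)\leq f(a^i)$, for $b:=\min^{(k)}(a^1,\ldots,a^k)$, by increasing one coordinate at a time along a chain from $b$ to $a^i$, and then average over $i$; the paper instead argues by induction on the arity $n$, applying the inductive hypothesis to the $(n-1)$-ary restriction $f(\,\cdot\,,\min(x^1_n,\ldots,x^k_n))$ and then using monotonicity in the last coordinate. Both arguments are sound, also in the presence of $+\infty$; your chain argument is the more direct of the two.

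The issue you flag in the converse, however, is not a delicate step awaiting a careful treatment of $\dom(f)$: it is fatal, because the converse of Lemma~\ref{incr} is false for functions attaining $+\infty$. Consider $n=1$ and $f(x):=+\infty$ for $x<0$, $f(x):=0$ for $x\geq 0$. For any $k\geq 2$ and $x_1,\ldots,x_k\in\Q$, either all $x_i\geq 0$, in which case $f(\min(x_1,\ldots,x_k))=0=\frac{1}{k}\sum_{i=1}^k f(x_i)$, or some $x_j<0$, in which case $\frac{1}{k}\sum_{i=1}^k f(x_i)=+\infty$; hence $f$ is improved by $\omega_{\min}^{(k)}$ for every $k\geq 2$. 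Yet $f(-1)=+\infty\not\leq 0=f(0)$, so $f$ is not componentwise increasing. The point is exactly the one you isolated: the improvement inequality carries no information whenever its left-hand side is $+\infty$, so monotonicity at infeasible points can never be recovered from it, no matter how one argues about feasibility. You should also know that the paper's own proof performs precisely the cancellation you warned against (its step ``It follows that $f(x_1,\ldots,y_i,\ldots,x_n)\leq f(x_1,\ldots,z_i,\ldots,x_n)$''), so the published argument shares this gap; the stated equivalence is correct only for finite-valued $f$. Fortunately, only the implication you did prove is used later (in Corollary~\ref{cor:PLHincr}), so the paper's tractability results are unaffected.
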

\begin{proof}Let $k \geq 2$ and let us first consider $f$ componentwise increasing with arity $1$. Let us consider $x_1,  \ldots,x_k \in \Q$ and let us assume without loss of generality that $x_1=\min(x_1, \ldots, x_k)$. Then we have that \[f(\min(x_1,\ldots, x_k))=f(x_1)=\frac{\overbrace{f(x_1)+\cdots+f(x_1)}^{k \text{ times}}}{k}\leq\frac{f(x_1)+\cdots+f(x_k)}{k}\text{,}\]
	i.e., $\omega_{\min}^{(k)}$ is a fractional polymorphism of $f$. Assume now that, for $n\geq 2$, every $(n-1)$-ary componentwise increasing function is improved by $\omega_{\min}^{(k)}$, and let us prove that for every $n \in \mathbb N$, an $n$-ary componentwise increasing function $f$ is improved by $\omega_{\min}^{(k)}$. Let us fix $x^1,\ldots,x^k\in \Q^{n}$.  The restricted function \[f(\cdot, \min(x^1_n,\ldots,x^k_n) ) \colon \Q^{n-1}\to \QQ\text{,}\]
	which maps every $(z_1,\ldots,z_{n-1}) \in \Q^{n-1}$ to $f(z_1,\ldots,z_{n-1}, \min(x^1_n,\ldots,x^k_n))$, is clearly componentwise increasing and therefore, by the inductive hypothesis, it is improved by $\omega_{\min}^{(k)}$, that is,  \begin{align*}&f(\min(x^1_1,\ldots,x^k_1),\ldots, \min(x^1_{n-1},\ldots,x^k_{n-1}), \min(x^1_n,\ldots,x^k_n))\\ \leq\;&\frac{f(x^1_1,\ldots,x^1_{n-1},\min(x^1_n,\ldots,x^k_n))+\cdots+f(x^k_1,\ldots,x^k_{n-1},\min(x^1_n,\ldots,x^k_n))}{k}.\end{align*}

	Again by the fact that $f$ is componentwise increasing we get that the right-hand side of the last inequality is  \[\leq \; \frac{1}{k}(f(x^1_1,\ldots,x^1_{n-1},x^1_n)+\cdots+f(x^k_1,\ldots,x^k_{n-1},x^k_n)),\] i.e., $f$ is improved by $\omega_{\min}^{(k)}$. 
	
Conversely, if $f \colon \Q^n \to \Q\cup \{+\infty\}$ is improved by $\omega^{(k)}_{\min}$ for all $k\geq 2$, then in particular it is improved by $\omega_{\min}=\omega_{\min}^{(2)}$. For all $i \in \{1,\ldots,n\}$, and $x_1,\ldots,x_{i-1}, y_i, z_i, x_{i+1},\ldots, x_n \in \Q$  such that $y_i <z_i$ it holds that \begin{align*}&f(x_1,\ldots,x_{i-1},y_i,x_{i+1},\ldots,x_n) \\ \leq \;&\frac{f(x_1,\ldots,x_{i-1},y_i,x_{i+1},\ldots,x_n)}{2}+\frac{f(x_1,\ldots x_{i-1},z_i,x_{i+1},\ldots,x_n)}{2}.\end{align*} It follows that 
	\[f(x_1,\ldots,x_{i-1},y_i,x_{i+1},\ldots,x_n)\leq f(x_1,\ldots x_{i-1},z_i,x_{i+1},\ldots,x_n),\]
	i.e., $f$ is componentwise increasing.
 \end{proof}

Lemma \ref{incr} and Corollary \ref{cor:PLHincr} may be shown to have a dual form that holds in the case of componentwise decreasing cost functions.

The valued structure containing all componentwise decreasing PLH functions (all componentwise
increasing PLH cost functions, respectively) is maximal tractable within the class of PLH valued structures. The proof is similar to the proof of the maximal tractability of submodular PLH valued structures (Theorem \ref{th:mt}) and uses a result by Cohen, Cooper, Jeavons, and Krokhin stating the maximal tractability of  componentwise decreasing finite-domain valued structures (see \cite{cohen2006complexity}, Theorem 6.15).

\section{Conclusion and Outlook}
\label{sect:conclusion}

We have presented two main results: an efficient sampling algorithm for PLH valued structures with finite signature, and an algebraic condition for the VCSP of infinite-domain valued structures that admit an efficient sampling algorithm to be solved in polynomial time by  the  BLP relaxation. By combining these two results we obtained a polynomial-time algorithm for PLH valued structures over the rationals that are improved by fully symmetric fractional operations of all arities. Finally, we showed that this polynomial-time algorithm solves the VCSP for PLH cost functions that are submodular, convex, or piecewise increasing.  
In fact, our algorithm not only decides whether there exists a solution
of cost at most $u_I$, but it can also be adapted to efficiently compute 
the infimum of the cost of all solutions (which might be $-\infty$),
and decides whether the infimum is attained. 
The modification is straightforward observing that the computed sample  does not depend on the threshold $u_I$.

We also showed that submodular PLH cost functions are \emph{maximally tractable} within the class of PLH cost functions. 
Such maximal tractability results are of particular importance for the more ambitious goal of classifying the complexity of the VCSP for all classes of
PLH cost functions: to prove a complexity dichotomy it suffices to identify \emph{all} maximally tractable classes. 
In the same direction, the study of the algebraic properties that make a PLH valued structure NP-hard, namely an  algebraic theory of hardness, is an interesting field of future research.
Another challenge is to understand which algebraic properties of fractional polymorphisms
of an infinite-domain valued structure are necessary the polynomial-time
tractability of the VCSP under the assumption that the valued structure admits an
efficient sampling algorithm.
Finally, we would like to extend our tractability
result to the class of all submodular \emph{piecewise linear} VCSPs. We believe that submodular 
piecewise linear VCSPs are in P, too.   
But note that already the 
structure $(\Q;0,S,D)$ 
where $S := \{(x,y) \mid y = x+1\}$
and $D := \{(x,y) \mid y = 2x\}$ 
(which has both min and max as polymorphisms)
does \emph{not} admit
an efficient sampling algorithm (it is easy to see that for every $d \in {\mathbb N}$ every sample computed on input $d$ must have exponentially many vertices in $d$), so a different approach than the approach in this article is needed.

\section*{Acknowledgement}

We would like to thank Jakub Opr\v{s}al for helpful discussions and comments on a draft of this article.

\bibliographystyle{abbrv}
\bibliography{local.bib}


\end{document}